\documentclass{article}
\usepackage{kr}
\usepackage{times}
\usepackage{mathalfa}
\usepackage{xfrac}
\usepackage{stmaryrd}
\usepackage{iftex}

\usepackage{bm} 
\usepackage{soul}
\usepackage{url}
\usepackage[shortlabels]{enumitem}
\usepackage[hidelinks]{hyperref}
\usepackage[T1]{fontenc}
\usepackage[utf8]{inputenc}
\usepackage[small]{caption}
\usepackage{graphicx}
\usepackage{mathtools}
\usepackage{amsmath}
\usepackage{amssymb}
\usepackage{amsthm}
\usepackage{booktabs}
\usepackage{algorithm}
\usepackage{algorithmic}
\usepackage{amsfonts}
\usepackage{xcolor}
\usepackage{multicol}
\usepackage{multirow}
\usepackage{verbatim}
\usepackage{thm-restate}
\usepackage[normalem]{ulem}
\newcommand{\CPL}{\mathsf{CPL}}
\newcommand{\Luk}{{\normalfont{\textsf{\L}}}}
\newcommand{\Lukinttriangle}{\Luk^\Qmbb_\triangle}

\newcommand{\FP}{\mathsf{FP}}
\newcommand{\FPRLtriangle}{\mathsf{FP}_k(\mathsf{R}\Luk_\triangle)}

\newcommand{\cvalue}{\overline{\mathsf{c}}}
\newcommand{\dvalue}{\overline{\mathsf{d}}}
\newcommand{\evalue}{\overline{\mathsf{e}}}
\newcommand{\zero}{\overline{0}}
\newcommand{\one}{\overline{1}}

\newcommand{\conp}{\mathsf{coNP}}
\newcommand{\np}{\mathsf{NP}}
\newcommand{\DP}{\mathsf{DP}}
\newcommand{\pspace}{\mathsf{PSACE}}

\newcommand{\Var}{\mathsf{Var}}

\newcommand{\LLukint}{\mathcal{L}^\mathbb{Q}_\Luk}
\newcommand{\LProbint}{\mathcal{L}^\mathbb{Q}_\Prob}
\newcommand{\LCPL}{\mathcal{L}_{\CPL}}

\newcommand{\Cmsf}{{\mathsf{C}}}

\newcommand{\Hmsf}{{\mathsf{H}}}

\newcommand{\Pmsf}{{\mathsf{P}}}

\newcommand{\Smsf}{{\mathsf{S}}}

\newcommand{\Vmsf}{{\mathsf{V}}}

\newcommand{\Mfrak}{\mathfrak{M}}
\newcommand{\pfrak}{\mathfrak{p}}

\newcommand{\Vmbf}{\mathbf{V}}

\newcommand{\Embb}{\mathbb{E}}
\newcommand{\Hmbb}{\mathbb{H}}

\newcommand{\Nmbb}{\mathbb{N}}
\newcommand{\Pmbb}{\mathbb{P}}
\newcommand{\Qmbb}{\mathbb{Q}}
\newcommand{\Rmbb}{\mathbb{R}}

\newcommand{\Emc}{\mathcal{E}}

\newcommand{\Hmc}{\mathcal{H}}
\newcommand{\Imc}{\mathcal{I}}

\newcommand{\Omc}{{\mathcal{O}}}

\newcommand{\Vmc}{{\mathcal{V}}}

\newcommand{\consvDashLuk}{\models^{\mathsf{cons}}_\Luk}
\newcommand{\consvDashFPLuk}{\models^{\mathsf{cons}}_{\FP}}
\newcommand{\consvDashCPL}{\models^{\mathsf{cons}}_\CPL}

\newcommand{\Prob}{\mathsf{Pr}}
\newcommand{\Probfrak}{\mathfrak{Pr}}

\newcommand\myoverset[2]{\overset{\scriptstyle #1\mathstrut}{\scriptstyle #2\mathstrut}}
\newcommand{\ihsbminus}{{\textsf{IHS-B}^-}}
\newcommand{\rank}{\mathsf{rank}}

\newtheorem{proposition}{Proposition}
\newtheorem{lemma}{Lemma}

\newtheorem{example}{Example}
\newtheorem{definition}{Definition}
\newtheorem{convention}{Convention}
\theoremstyle{remark}

\DeclareFontFamily{U}{mathb}{}
\DeclareFontShape{U}{mathb}{m}{n}{<-5.5> mathb5 <5.5-6.5> mathb6 
<6.5-7.5> mathb7 <7.5-8.5> mathb8 <8.5-9.5> mathb9 <9.5-11> mathb10 
<11-> mathb12}{}
\DeclareSymbolFont{mathb}{U}{mathb}{m}{n}
\DeclareFontSubstitution{U}{mathb}{m}{n}

\DeclareMathSymbol{\blackdiamond}{\mathbin}{mathb}{"0C}

\title{Probabilistic Abduction in a~Fuzzy Logic Framework}

\author{Tommaso Flaminio$^1$ \and Katsumi Inoue$^2$ \and Daniil Kozhemiachenko$^3$\\
\affiliations
$^1$IIIA --- CSIC, Campus de la UAB, Bellaterra, Barcelona, Spain\\
$^2$National Institute of Informatics, Japan
$^3$Aix Marseille Univ, CNRS, LIS, Marseille, France\\
\emails
tommaso@iiia.csic.es, inoue@nii.ac.jp, daniil.kozhemiachenko@lis-lab.fr}

\begin{document}
\maketitle
\begin{abstract}
We study the problem of explaining observations about the probabilities of events such as ‘it rains $20\%$ of the time’, ‘rain and snow are equally likely’, etc. We explain these statements with a~probability distribution or a~statement about probabilities of (other) events that are consistent with our knowledge and entail the observation. We formalise this problem in a~fuzzy probabilistic logic $\FP$. We define and motivate the notions of abduction problems and their solutions. Our main technical contribution is a~comprehensive study of the complexity of solution recognition and existence for a~given abduction problem in~$\FP$ for the case of full language and its disjunctive-clause fragments. We also obtain a~translation of classical probabilistic abduction (finding the most likely explanation of a~given event) to~$\FP$.
\end{abstract}
\allowdisplaybreaks
\section{Introduction\label{sec:introduction}}
Deduction, induction, and abduction are the main forms of reasoning~\cite{FlachKakas2000}. Abduction, or inference of~\emph{explanations}, has found multiple applications in artificial intelligence. Among the most prominent ones are diagnosis \cite{Pople1973,ConsoleTorasso1990,ElAyebMarquisRusinowitch1993,JosephsonJosephson1994,Koitz-HristovWotawa2018}, commonsense reasoning~\cite{Paul1993,BhagavatulaLeBrasMalaviyaSakaguchiHoltzmanRashkinDowneyYihChoi2020}, formalisation of scientific reasoning~\cite{Magnani2001}, and machine learning~\cite{DaiXuYuZhou2019}, in particular, explainable~AI~\cite{IgnatievNarodytskaMarques-Silva2019}.

In logic-based abduction~\cite{EiterGottlob1995}, the reasoning is usually conducted in the framework of \emph{abduction problems} that contain a~\emph{theory}~$\Gamma$ formalising the background knowledge and an \emph{observation}~$\delta$. The goal is to find an \emph{explanation} of~$\delta$ \emph{relative to}~$\Gamma$ using a~pre-defined set of \emph{hypotheses}. That is, a~formula~$\eta$ s.t.\ $\Gamma,\eta\not\models\bot$ and $\Gamma,\eta\models\delta$ (in other words, $\eta$ and~$\Gamma$ should \emph{consistently entail}~$\delta$).

In practice, however, our data may be unreliable, and thus the knowledge might bear a~degree of uncertainty. In such contexts, one might search for \emph{probabilistic} explanations of an observation. Hence, to provide explanations in the context of uncertainty, one may employ \emph{probabilistic abduction}.

\paragraph{Probabilistic Abduction}
Probabilistic abduction was first introduced by Poole~\shortcite{Poole1993} as probabilistic Horn abduction (PHA). This framework computes probabilities of Bayesian networks by definite clauses with probabilistic hypotheses, but does not compute probabilities based on probability measures over possible worlds as defined by Fenstad~\shortcite{Fenstad1967}.

Sato~\shortcite{Sato1995} uses definite clauses with probabilistic hypotheses like PHA, but proposes the distribution semantics, which defines probabilistic measures over possible worlds without assumptions of PHA. Sato et al.~\shortcite{SatoIshihataInoue2011} extended this framework to allow arbitrary clauses in the background theory and any ground formulas in observations as constraint-based probabilistic modelling. The distribution semantics were also adopted in probabilistic logic programming. In particular, de~Raedt et al.~\shortcite{DeRaedtKimmigToivonen2007} present ProbLog, which considers a probability for each definite clause, i.e., a~probabilistic hypothesis to select each clause, and Azzolini et al.~\shortcite{AzzoliniBellodiferilliRiguzziZese2022} propose algorithms for computing explanations to probabilistic logic programs.

Other frameworks include a Bayesian view of prior distributions in abduction~\cite{DuboisGilioKern-Isberner2008} and abduction in Markov logic networks~\cite{KateMooney2009}. They, however, do not strictly provide a~logical formalisation of abduction to explain observations.

\paragraph{Probabilistic Reasoning in Fuzzy Logics}
Probabilistic abduction requires formalisation of reasoning about uncertainty. One of the approaches to this task is via the employment of so-called \emph{probabilistic logics}. Probabilistic logic in the context of~AI was first proposed by Nilsson~\shortcite{Nilsson1986}. The propositional fragment of Nilsson's logic was then formalised by Fagin et al.~\shortcite{FaginHalpernMegiddo1990}. H\'{a}jek et al.~\shortcite{HajekGodoEsteva1995,HajekGodoEsteva2000} formalised reasoning about probabilities using \emph{fuzzy logics}, i.e., logics where formulas have values in the real-valued interval~$[0,1]$. The intuitive idea was to consider a~(non-nesting) modality $\Prob$ such that, for a classical formula~$\phi$, the modal formula $\Prob(\phi)$ is interpreted as the probability measure of the event corresponding to~$\phi$.

Connections between probabilities and fuzzy logics have been extensively studied. H\'{a}jek and Tulipani~\shortcite{HajekTulipani2001} considered the complexity of probabilistic reasoning, in particular, with conditional probabilities. Flaminio~\shortcite{Flaminio2007} proposed a~simpler formalisation of reasoning with conditional probabilities. Baldi et al.~\shortcite{BaldiCintulaNoguera2019,BaldiCintulaNoguera2020} showed that fuzzy probabilistic logics and the probabilistic logics proposed by Fagin et al.\ have the same expressive power. Corsi et al.~\shortcite{CorsiFlaminioGodoHosni2023} studied epistemic probabilistic logics and Kozhemiachenko and Sedl\'{a}r~\shortcite{KozhemiachenkoSedlar2025} considered formalisation of reasoning about probabilities and actions.

\paragraph{Abduction in Fuzzy Logics}
Abduction in various fuzzy logics has long attracted interest. The first formal presentation was given by Yamada and Mukaidono~\shortcite{YamadaMukaidono1995} in the framework of Łukasiewicz logic. Vojta\v{s}~\shortcite{Vojtas1999} expanded this approach to G\"{o}del and Product logics. Miyata et al.~\shortcite{MiyataFuruhashiUchikawa1998} considered abductive reasoning with observations of different degrees. Solutions to abduction problems in multiple fuzzy logics were further systematised by~d'Allones et al.~\shortcite{dAllonnesAkdagBouchon-Meunier2007} and Chakraborty et al.~\shortcite{ChakrabortyKonarPalJain2013}. Recently, Inoue and Kozhemiachenko~\shortcite{InoueKozhemiachenko2025} provided a~comprehensive study of the complexity of abduction in Łukasiewicz logic.

Abduction in fuzzy logic has found numerous applications. Namely, Vojta\v{s}~\shortcite{Vojtas2001} and Ebrahim~\shortcite{Ebrahim2001} proposed fuzzy logic programming. Bergadano et al.~\shortcite{BergadanoCutelloGunetti2000} studied fuzzy abduction in the context of machine learning. Among other applications are decision-making and learning in the presence of incomplete information~\cite{MellouliBouchon-Meunier2003,Tsypyschev2017} and robot perception~\cite{Shanahan2005}.

\paragraph{Contributions}
Although probabilistic abduction and applications of fuzzy logics to probabilistic reasoning have been extensively studied, there has been, to the best of our knowledge, no formalisation of probabilistic abduction in a~fuzzy logic setting. Furthermore, in most probabilistic abduction frameworks, the main task is to find \emph{the most likely explanation} of an observed event. On the other hand, in some situations, we are interested in \emph{why the event occurred with the observed frequency}. In such contexts, an explanation can be a~\emph{probability distribution} consistent with our knowledge about probabilities of events that explains the frequency of the occurrence of the observed event. Consider the following running example for an illustration.
\begin{example}\label{example:probabilities1}
We study weather records of a~town and notice that there was very bad weather (it rained and was cold on the same day) at least $20\%$ of the time. The records are incomplete, so we do not know how likely it was to rain or to be cold on each given day. Moreover, there could be cloudy days without rain and days with mild weather (i.e., events ‘it rains’ and ‘it is sunny’ as well as ‘it is cold’ and ‘it is warm’ are \emph{not complementary}). We know, however, that good weather (sunny and warm days) was twice as rare as bad weather (rainy or cold days). Furthermore, cold days were $1.5$ times more often than warm days, and sunny days were rarer than rainy days. Now, we aim to \emph{explain the observed frequency} of days with very bad weather.
\end{example}

We consider probabilistic abductive reasoning within the framework of a~fuzzy probabilistic logic $\FP$ that extends Łukasiewicz logic. We will formally define these logics in the next section. In our proofs, we will rely on the results for abduction in (propositional) Łukasiewicz logic~\cite{InoueKozhemiachenko2025}.
%
Our contribution is threefold. (1)~We define and motivate abduction problems and their solutions that adequately represent the contexts, as in the example above. (2)~We study the complexity of probabilistic abduction in the full logic and its disjunctive-clause fragments. We consider \emph{solution recognition} (given a~solution candidate~$\zeta$ and a~problem~$\Pmbb$, check if $\zeta$~solves~$\Pmbb$) and \emph{solution existence} (given~$\Pmbb$, check if it has solutions). (3)~We formalise finding most likely explanations in~$\FP$.

Omitted proofs can be found in the appendix of the technical report~\cite{FlaminioInoueKozhemiachenko2026arxiv}.
\section{Probabilistic Łukasiewicz Logic\label{sec:FPLuk}}
In this section, we present the fuzzy probability logic $\FP$~--- a~probabilistic expansion of Łukasiewicz logic~$\Luk$ introduced by H\'{a}jek et al. in~\shortcite{HajekGodoEsteva1995}. The logic $\FP$ has a \emph{two-layered} language made of \emph{probabilistic atoms} of the form $\Prob(\phi)$. E.g., if $\phi=r\vee h$, and $r$~stands for ‘it is raining’ and $h$~for ‘it is hot outside’, the value of $\Prob(r\vee h)$ corresponds to the probability of the event ‘it is raining, or it is hot outside’. \emph{Compound probability formulas} are connected by operators of~$\Luk$. To improve its expressivity, we extend its language with the $\triangle$~operator of Baaz~\shortcite{Baaz1996} and truth constants. We fix a~countable set $\Var$ of propositional variables, let $p\in\Var$, $c\in\Qmbb\cap[0,1]$, $\diamond\in\{\leq,<,>,\geq\}$, and define~$\LLukint$:
\begin{align*}
\phi&\coloneqq p\mid p\diamond\cvalue\mid\triangle\phi\mid\neg\phi\mid(\phi\odot\phi)\mid(\phi\oplus\phi)\mid(\phi\!\rightarrow\!\phi)
\end{align*}
\begin{convention}\label{conv:notation}
$\LCPL$ denotes the $\{\neg,\wedge,\vee\}$-lan\-gu\-age of the classical propositional logic. For a~set of formulas $\Gamma$ and a~formula~$\phi$, $\Var(\phi)$ and $\Var[\Gamma]$ are the set of all variables occurring in $\phi$ and~$\Gamma$, respectively.

$\Rmbb$ and $\Qmbb$ denote the sets of real and rational numbers, respectively. A~square (round) bracket means that the endpoint is included in (excluded from) the interval. Lower index~$_\Qmbb$ means that the interval contains rational numbers only. E.g., $(\tfrac{1}{2},\tfrac{2}{3}]_\Qmbb=\{x\mid x\in\Qmbb,x>\tfrac{1}{2},x\leq\tfrac{2}{3}\}$.
\end{convention}
\begin{figure*}
\centering
\begin{align*}
v(\neg\phi)&=1{-}v(\phi)&v(p\diamond\cvalue)&=\begin{cases}1\text{ if }v(p)\diamond c\\0\text{ otherwise}\end{cases}&v(\triangle\phi)&=\begin{cases}1\text{ if }v(\phi){=}1\\0\text{ otherwise}\end{cases}\\
v(\phi{\odot}\chi)&=\max(0,v(\phi){+}v(\chi){-}1)&v(\phi{\oplus}\chi)&=\min(1,v(\phi){+}v(\chi))&v(\phi\!\rightarrow\!\chi)&=\min(1,1{-}v(\phi){+}v(\chi))
\end{align*}
\caption{Semantics of $\Lukinttriangle$.}
\label{fig:Luksemantics}
\end{figure*}
\begin{definition}[Semantics of Łukasiewicz logic]\label{def:Luksemantics}
An~\emph{$\Luk$-va\-lu\-a\-tion} is a~function $v:\Var\rightarrow[0,1]$ extended to the complex formulas as shown in Figure~\ref{fig:Luksemantics} (cf.~the top of the next page).

We say that $\phi\in\LLukint$ is \emph{$\Luk$-valid ($\Luk\models\phi$)} if $v(\phi)=1$ in all $\Luk$-valuations; $\phi$ is \emph{$\Luk$-satisfiable} if $v(\phi)=1$ in some $\Luk$-valuation. Given a~finite $\Phi\!\subseteq\!\LLukint$, \emph{$\Phi$ entails $\chi$ in~$\Luk$ ($\Phi\models_\Luk\chi$)} iff $v(\chi)\!=\!1$ in every $v$ s.t.\ $v(\phi)\!=\!1$ for all $\phi\!\in\!\Phi$. \emph{$\Phi$~consistently entails $\chi$ in~$\Luk$ ($\Phi\!\consvDashLuk\!\chi$)} iff $\Phi\models_\Luk\chi$ and $\Phi\not\models_\Luk\bot$.
\end{definition}

The language $\LProbint$ of $\FP$ is defined as follows (here, $\phi{\in}\LCPL$, $\heartsuit{\in}\{\neg,\triangle\}$, $\circ{\in}\{\odot,\oplus,\rightarrow\}$, and $\diamond{\in}\{\leq,<,>,\geq\}$).
\begin{align*}
\alpha&\coloneqq\Prob(\phi)\mid\Prob(\phi)\diamond\cvalue\mid\heartsuit\alpha\mid(\alpha\circ\alpha)
\end{align*}
We will interpret $\LProbint$-formulas on \emph{probabilistic} models following~\cite{HajekTulipani2001}. The only change is that, to simplify some proofs, instead of evaluating variables at each state, we define states of models as sets of variables.
\begin{definition}[Probability measures and distributions]\label{def:probabilitymeasure}
Given a~finite $W\neq\varnothing$,
\begin{itemize}[noitemsep,topsep=0pt]
\item a~map $\mu:2^W\rightarrow[0,1]$ is a~\emph{probability measure on~$2^W$} iff $\mu(W)=1$, $\mu(\varnothing)=0$, and $\mu(X\cup Y)=\mu(X)+\mu(Y)$ for every $X,Y\subseteq W$ s.t.\ $X\cap Y=\varnothing$;
\item a~map $\partial:W\rightarrow[0,1]$ is a~\emph{probability distribution on~$W$} iff $\sum_{w\in W}\partial(w)=1$.
\end{itemize}
\end{definition}
\begin{definition}[Probabilistic models]
Given a~finite set $\Vmbf\subseteq\Var$, a~\emph{probabilistic model for~$\Vmbf$} is a~tuple $\Mfrak=\langle2^\Vmbf,\mu\rangle$ s.t.\ \mbox{$\mu:2^{2^\Vmbf}\rightarrow[0,1]$} is a~\emph{probability measure on~$2^{2^\Vmbf}$}. For $p{\in}\Vmbf$, $\phi{\in}\LCPL$ s.t.\ $\Var(\phi)\subseteq\Vmbf$, and a~probabilistic model $\Mfrak{=}\langle2^\Vmbf,\mu\rangle$, we define the \emph{truth set of~$\phi$} ($\|\phi\|_\Mfrak$) as follows:
\begin{itemize}[noitemsep,topsep=0pt]
\item $\|p\|_\Mfrak=\{X\in2^\Vmbf\mid p\in X\}$;
\item $\|\neg\phi\|=2^{2^\Vmbf}{\setminus}\|\phi\|_\Mfrak$;
\item $\|\phi\wedge\chi\|_\Mfrak=\|\phi\|_\Mfrak\cap\|\chi\|_\Mfrak$.
\end{itemize}
\end{definition}
\begin{definition}[Semantics of $\FP$]\label{def:FPLuksemantics}
The \emph{$\FP$-interpretation} induced by~$\Mfrak$ is a~map $\Imc_\Mfrak:\LProbint\rightarrow[0,1]$ s.t.\ $\Imc_\Mfrak(\Prob(\phi))=\mu(\|\phi\|_\Mfrak)$. The interpretation is extended onto complex $\LProbint$-formulas using the conditions of Definition~\ref{def:Luksemantics} (cf.~Fig.~\ref{fig:Luksemantics}).
\end{definition}

The notions of $\FP$-satisfiability, validity, and entailment generalise those in Łukasiewicz logic (cf.~Definition~\ref{def:Luksemantics}).
\begin{definition}\label{def:FPentailment}
Let $\Gamma\cup\{\alpha,\delta\}\subseteq\LProbint$, $\Vmbf=\Var(\alpha)$, and $\Vmbf'=\Var[\Gamma\cup\{\delta\}]$. We say that
\begin{itemize}[noitemsep,topsep=0pt]
\item $\alpha$ is \emph{$\FP$-satisfiable} iff $\Imc_\Mfrak(\alpha)=1$ for some $\Mfrak=\langle2^\Vmbf,\mu\rangle$;
\item $\alpha$ is \emph{$\FP$-valid} ($\FP{\models}\alpha$) iff $\Imc_\Mfrak(\alpha){=}1$ in each $\Mfrak{=}\langle2^\Vmbf,\mu\rangle$;
\item $\Gamma$ \emph{$\FP$-entails} $\delta$ ($\Gamma\models_\FP\delta$) iff $\Imc_{\Mfrak'}(\delta)=1$ for all $\Mfrak'=\langle2^{\Vmbf'},\mu\rangle$ s.t.\ $\Imc_{\Mfrak'}(\gamma)=1$ for all $\gamma\in\Gamma$;
\item $\Gamma$ \emph{consistently $\FP$-entails} $\delta$ ($\Gamma\consvDashFPLuk\delta$) iff $\Gamma\models_\FP\delta$ and $\Gamma\not\models_\FP\bot$.
\end{itemize}
\end{definition}
\begin{convention}\label{conv:innerouterformulas}
$\LProbint$-formulas are called \emph{outer formulas}. Formulas occurring in probabilistic atoms are \emph{inner formulas} or \emph{events}. We denote $\LProbint$ formulas with $\alpha$, $\beta$, \ldots, $\lambda$, $\LCPL$ and $\LLukint$ formulas with $\pi$, $\varrho$, \ldots, $\psi$. We denote \emph{sets} of $\LProbint$ formulas with $\Gamma$, $\Delta$, \ldots, $\Xi$, and \emph{sets} of $\LCPL$ or $\LLukint$ formulas with $\Upsilon$, $\Phi$, and~$\Psi$. For $\Gamma{\cup}\{\alpha\}{\subseteq}\LProbint$, we set $\Emc(\alpha)=\{\phi\in\LCPL\mid\Prob(\phi)\, \text{occurs in}\, \alpha\}$ and $\Emc[\Gamma]{=}\bigcup_{\gamma\in\Gamma}\Emc(\gamma)$.

For $\Phi\subseteq\LCPL$ and a~probabilistic model $\Mfrak=\langle2^\Vmbf,\mu\rangle$, we will write $\mu(\|\Phi\|_\Mfrak)$ to denote $\mu(\|\bigwedge_{\phi\in\Phi}\phi\|_\Mfrak)$.

An \emph{$\LCPL$-term} ($\LCPL$-clause) is a~formula of the form $\bigwedge^n_{i=1}l_i$ ($\bigvee^n_{i=1}l_i$) with $l$'s being literals. We write $\alpha\leftrightarrow\beta$ as a~shorthand for $(\alpha{\rightarrow}\beta)\odot(\beta{\rightarrow}\alpha)$, $\Prob(\phi)\blackdiamond\cvalue$ and $p\blackdiamond\cvalue$ as shorthands for $\neg(\Prob(\phi)\diamond\cvalue)$ and $\neg(p\diamond\cvalue)$, and $p{\approx}\cvalue$ and $\Prob(\phi){\approx}\cvalue$ as shorthands for $(p{\geq}\cvalue)\odot(p{\leq}\cvalue)$ and $(\Prob(\phi){\geq}\cvalue)\odot(\Prob(\phi){\leq}\cvalue)$.
\end{convention}

Let us remark on $\FP$. First, $\Emc(\alpha)$ contains only the events explicitly mentioned in~$\alpha$ (not their combinations or subformulas). E.g., if $\alpha=\Prob(p\wedge q)\rightarrow\Prob(r)$, $\Emc(\alpha)=\{p\wedge q,r\}$ (i.e., $p,q\notin\Emc(\alpha)$). Thus, $\Emc(\alpha)$ is linear in the length of~$\alpha$. Second, $\triangle$ allows for a~reduction between validity and satisfiability: $\alpha$ is $\FP$-valid iff $\neg\triangle\alpha$ is $\FP$-unsatisfiable; $\alpha$~is $\FP$-satisfiable iff $\neg\triangle\alpha$ is not $\FP$-valid. Third, we obtain the following statement by a~straight\-forward expansion of~\cite[Theorem~6]{Flaminio2007}.
\begin{proposition}\label{prop:FPLukNPcoNP}
$\FP$-satisfiability is $\np$-complete, and $\FP$-entailment is $\conp$-complete.
\end{proposition}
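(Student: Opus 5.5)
Hardness is immediate, so the plan is mostly about membership in $\np$. For hardness, classical propositional satisfiability reduces to $\FP$-satisfiability: map $\phi\in\LCPL$ to $\Prob(\phi)\geq\one$. This formula has value~$1$ under some $\Mfrak$ iff some measure puts all its mass on $\|\phi\|_\Mfrak$, which happens iff $\|\phi\|_\Mfrak\neq\varnothing$, i.e., iff $\phi$ is classically satisfiable. $\conp$-hardness of entailment follows in the same way: $\phi$ is valid iff $\varnothing\models_\FP\Prob(\phi)\geq\one$. Alternatively, we can use the duality via $\triangle$ noted above.

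For membership, I would show that $\FP$-satisfiability is in $\np$ and then reduce entailment to unsatisfiability. For finite $\Gamma$, $\Gamma\not\models_\FP\delta$ iff $\bigodot_{\gamma\in\Gamma}\triangle\gamma\odot\neg\triangle\delta$ is $\FP$-satisfiable, which puts entailment in $\conp$. Note that $\Gamma\not\models_\FP\bot$ is itself a satisfiability check, but consistent entailment is not part of the claim.

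The $\np$ procedure follows Flaminio's Theorem~6 (equivalently Hájek–Tulipani) and rests on a small-model property. Let $\Emc(\alpha)=\{\phi_1,\dots,\phi_n\}$. If $\alpha$ is satisfied by some $\Mfrak=\langle 2^\Vmbf,\mu\rangle$, then the vector $(\mu(\|\phi_i\|))_{i\le n}$, together with $\mu(W)=1$, gives $n+1$ linear equations in the nonnegative weights $\partial(X)$, $X\in2^\Vmbf$. By Carathéodory's theorem / existence of basic feasible solutions, there is a distribution with at most $n+1$ atoms realising the same values, hence the same value of $\alpha$. The guess then consists of:
\begin{enumerate}
\item up to $n+1$ states, each given as a subset of $\Var(\alpha)$, polynomial size;
\item for each atom $\Prob(\phi_i)\diamond\cvalue$ and each subformula of the form $\triangle\beta$, the ``branch'' it takes;
\item for each Łukasiewicz connective, which piece of its piecewise-linear definition is active (e.g.\ whether the truncation in $\odot$ or $\oplus$ applies).
\end{enumerate}
Once these choices are fixed, truth values of events at the guessed states are computed in polynomial time. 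The conditions ``$\alpha$ gets value~$1$ and all guessed branches are correct'' become a system of linear (strict and non-strict) inequalities in the weights of the guessed states and in auxiliary variables for subformula values. Feasibility of such a system, including strict inequalities, is decidable in polynomial time by linear programming; strict inequalities are handled by the standard $\epsilon$-maximisation trick. This is exactly the technique used for $\Luk$ with $\triangle$ and constants, combined with the Fagin–Halpern–Megiddo small-model argument.

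The main obstacle is the interaction of the atoms $\Prob(\phi)\diamond\cvalue$ and of $\triangle$ (discontinuous operations) with the small-model argument. It must be checked that the reduction to $n+1$ atoms preserves these values exactly. It does, since only the values $\mu(\|\phi_i\|)$ matter and the Carathéodory step preserves them exactly. It must also be checked that the guessed branches encode these discontinuities as linear constraints with rational constants $c$, which keeps everything in polynomial-size LP. Both points are routine extensions of Flaminio's proof, so I expect no essential difficulty beyond bookkeeping.
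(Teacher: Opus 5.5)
Your proposal is correct and follows essentially the same route as the paper. The paper gives no separate argument; it invokes a straightforward extension of Flaminio's Theorem~6, which rests on the same ingredients you use: the Hájek--Tulipani/Fagin--Halpern--Megiddo bound of $|\Emc(\alpha)|+1$ states with positive measure, plus guessing branches and linear programming for the Łukasiewicz layer with $\triangle$ and the atoms $\Prob(\phi)\diamond\cvalue$. Your hardness reductions via $\Prob(\phi)\geq\one$, and your reduction of $\Gamma\not\models_\FP\delta$ to the satisfiability of $\bigodot_{\gamma\in\Gamma}\triangle\gamma\odot\neg\triangle\delta$, are also sound.
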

\section{Probabilistic Abduction in~$\FP$\label{sec:FPabduction}}
Let us now present abduction in $\FP$. We begin with the notion of abduction problems.
\begin{definition}\label{def:FPLukabduction}
An \emph{$\FP$-abduction problem} ($\FP$ AP) is a~tuple $\Pmbb=\langle\Gamma,\delta,\Hmsf,\Vmc\rangle$ s.t.\ $\Gamma\cup\{\delta\}\subseteq\LProbint$, $\Hmsf$ is a~finite set of $\LCPL$-terms s.t.\ $\Var[\Hmsf]\subseteq\Var[\Gamma\cup\{\delta\}]$, and $\Vmc=\{0,\tfrac{1}{n},\ldots,\tfrac{n-1}{n},1\}$ with $n\in\Nmbb$. We call $\Gamma$ \emph{theory}, $\delta$~\emph{ob\-ser\-va\-tion}, members of~$\Hmsf$ and~$\Vmc$ \emph{hypotheses} and \emph{values}.
\end{definition}

Let us represent Example~\ref{example:probabilities1} as an $\FP$-abduction problem.
\begin{example}\label{example:probabilities2}
Recall the context of Example~\ref{example:probabilities1}. We use $c$, $r$, $s$, and $w$ to denote cold, rainy, sunny, and warm days, respectively, and define $\Pmbb_\mathsf{rain}=\langle\Gamma_\mathsf{rain},\delta_\mathsf{rain},\Hmsf_\mathsf{rain},\Vmc_\mathsf{rain}\rangle$.
\begin{align*}
\Gamma_\mathsf{rain}&=\left\{\begin{matrix}\Prob(s\!\wedge\!w)\!\leq\!\overline{\tfrac{1}{2}},\Prob(w)\!\leq\!\overline{\tfrac{2}{3}},\neg\triangle(\Prob(r)\!\rightarrow\!\Prob(s)),\\
(\Prob(s\wedge w)\oplus\Prob(s\wedge w))\leftrightarrow\Prob(c\vee r),\\
(\Prob(x)\oplus\Prob(x))\leftrightarrow\Prob(w),\\
\Prob(c)\leftrightarrow(\Prob(w)\oplus\Prob(x))
\end{matrix}\right\}\nonumber\\
\delta_\mathsf{rain}&=\Prob(c\!\wedge\!r)\!\geq\!\overline{\tfrac{1}{5}};~
\Hmsf_\mathsf{rain}=\{s,w,s\wedge w\};\\
\Vmc_\mathsf{rain}&=\{\tfrac{i}{100}\!\mid0\leq i\!\leq\!100\}\nonumber
\end{align*}
Let us briefly discuss the formalisation. By Definitions~\ref{def:Luksemantics} and~\ref{def:FPLuksemantics}, \mbox{$\Imc_\Mfrak(\neg\triangle(\alpha\rightarrow\beta))=1$ iff $\Imc_\Mfrak(\alpha)>\Imc_\Mfrak(\beta)$.} Thus, $\neg\triangle(\Prob(r)\!\rightarrow\!\Prob(s))$ means ‘the probability of a~rainy day is greater than the probability of a~sunny day’. $\Prob(s\!\wedge\!w)\!\leq\!\overline{\tfrac{1}{2}}$ and $(\Prob(s\wedge w)\oplus\Prob(s\wedge w))\leftrightarrow\Prob(c\vee r)$ imply $\Imc_\Mfrak(\Prob(s\wedge r))=\Imc_\Mfrak(\Prob(c\vee r))\cdot\tfrac{1}{2}$, i.e., rainy or cold days are two times likelier than days with warm and sunny weather.

We also add an auxiliary variable~$x$ to express $\Imc_\Mfrak(\Prob(w))=2\cdot\Imc_\Mfrak(\Prob(x))$ via $(\Prob(x)\oplus\Prob(x))\leftrightarrow\Prob(w)$ and $\Prob(w)\!\leq\!\overline{\tfrac{2}{3}}$. Now, ‘a~cold day is $1.5$ times likelier than a~warm one’ can be formalised as $\Prob(c)\leftrightarrow(\Prob(w)\oplus\Prob(x))$. To provide an informative explanation of the frequency of days with cold and rainy weather, we need hypotheses $s$, $w$, and~$s\wedge w$ allowing us to express probabilities of sunny and warm days. Finally, we choose~$\Vmc$ expressing percentages.
\end{example}

Now, to explain the observed frequency, we will need to assume probabilities of sunny and warm days. This means that the solution to an $\FP$-abduction problem will not be a~conjunction of hypotheses, but \emph{a~statement about probabilities of hypotheses}. Note that we need $s{\wedge} w$ because, in general, probability measures are not truth functional. In particular, $\mu(\|s{\wedge} w\|_\Mfrak)$ is not uniquely determined by $\mu(\|s\|_\Mfrak)$ and~$\mu(\|w\|_\Mfrak)$. E.g., consider $\Vmbf=\{s,w\}$, $\Mfrak_1=\langle2^\Vmbf,\mu_1\rangle$, and $\Mfrak_2=\langle2^\Vmbf,\mu_2\rangle$ being such that $\mu_1(\{s,w\})=\mu_1(\{s\})=\mu_1(\{w\})=\mu_1(\{\varnothing\})=\tfrac{1}{4}$, $\mu_2(\{s,w\})=\mu_2(\{\varnothing\})=\tfrac{1}{2}$, and $\mu_2(\{s\})=\mu_2(\{w\})=0$ (recall that the measure is defined on $2^{2^\Vmbf}$). One can see that $\mu_1(\|s\|_{\Mfrak_1}){=}\mu_1(\|w\|_{\Mfrak_1}){=}\mu_2(\|s\|_{\Mfrak_2}){=}\mu_2(\|w\|_{\Mfrak_2})=\tfrac{1}{2}$, but $\mu_1(\|s\wedge w\|_{\Mfrak_1})=\tfrac{1}{4}$ and $\mu_2(\|s\wedge w\|_{\Mfrak_2})=\tfrac{1}{2}$.

In~$\LProbint$, statements about probabilities of events can be formalised using formulas of the form $\Prob(\phi)\geq\cvalue$ (‘probability of~$\phi$ is at least~$c$’) or $\Prob(\phi)\rightarrow\Prob(\chi)$ (‘probability of $\phi$ is at least as high as the probability of~$\chi$’). Note, however, that explanations are usually expressed as (sets of) facts (i.e., conjunctions of literals). As we are explaining observed frequencies of events, it makes sense to use probabilities of facts as solutions. To do that, we introduce \emph{probabilistic interval literals} and \emph{terms} that generalise (propositional) interval literals and terms proposed by Inoue and Kozhemiachenko~\shortcite{InoueKozhemiachenko2025} for Łukasiewicz logic.
\begin{definition}[Probabilistic interval literals and terms]\label{def:probabilisticintervalliterals}~
\begin{itemize}[noitemsep,topsep=0pt]
\item A~\emph{probabilistic interval literal} (PIL) is a~formula $\Prob(\tau)\diamond\cvalue$ with $\tau$ being an $\LCPL$-term.
\item A~\emph{probabilistic interval term} (PIT) is a~formula $\bigodot^n_{i=1}\lambda_i$ with $\lambda$'s being PILs.
\end{itemize}

For a~PIL $\Prob(\tau){\diamond}\dvalue$, we define its \emph{set of permitted values}:
\begin{align*}
\Vmsf_\Prob(\Prob(\tau)\diamond\dvalue)&=\!\{\Imc_\Mfrak(\Prob(\tau))\!\mid\!\Imc_\Mfrak(\Prob(\tau)\diamond\dvalue)\!=\!1\}
\end{align*}

Given $\Vmbf=\{p_1,\ldots,p_m\}$ and $\Vmc=\{0,\tfrac{1}{n-1},\ldots,1\}$, we call a~PIT $\theta=\bigodot^k_{i=1}\Prob(\tau_i){\leq}\cvalue_i\odot\bigodot^k_{i=1}\Prob(\tau_i){\geq}\cvalue_i$ \emph{$\langle\Vmbf,\Vmc\rangle$-complete} if $\Var[\tau_i]=\Vmbf$ and $c_i\in\Vmc$ for every $i\in\{1,\ldots,k\}$, and $\sum^k_{i=1}c_i=1$.
\end{definition}


Observe briefly that all $\langle\Vmbf,\Vmc\rangle$-complete PITs are $\FP$-satisfiable. Let us now define the notion of a~solution to an $\FP$-abduction problem.
\begin{definition}\label{def:solutions}
Let $\Pmbb=\langle\Gamma,\delta,\Hmsf,\Vmc\rangle$ be an $\FP$ AP and let further $\Vmbf=\Var[\Pmbb]$.
\begin{itemize}[noitemsep,topsep=0pt]
\item A~\emph{sufficient solution} is a~PIT $\eta=\bigodot^n_{i=1}\Prob(\tau_i)\diamond\cvalue_i$ s.t.\ $\Emc(\eta)\subseteq\Hmsf$, $\cvalue_i\in\Vmc$ for every $\cvalue_i$, and \mbox{$\Gamma,\eta\consvDashFPLuk\delta$.}
\item A~\emph{full solution} is a~$\langle\Vmbf\!,\!\Vmc\rangle$-complete PIT~$\theta$ s.t.\ \mbox{$\Gamma,\!\theta\consvDashFPLuk\delta$.}
\end{itemize}
\end{definition}
To guarantee the finiteness of the set of solutions, we restrict the set of \emph{values} of probabilistic atoms containing hypotheses. Intuitively, this can be interpreted as an explanation with a~given precision. Furthermore, full solutions correspond to \emph{probability distributions} on~$2^\Vmbf$. Hence, there is \emph{only one} probabilistic model that satisfies $\theta$. I.e., if~$\theta$ is a~full solution, there is no full solution~$\theta'$ s.t.\ $\theta'\models_\FP\theta$ and $\theta\not\models_\FP\theta'$. Besides, a~full solution~$\theta$ is not necessarily a~sufficient solution since in general, $\Emc(\theta)\not\subseteq\Hmsf$. Moreover, while the length of sufficient solutions is bounded by $\Emc[\Pmbb]$ (i.e., is linear in the length of~$\Pmbb$), the length of full solutions is not. One could interpret \emph{full solutions} as probabilistic counterparts of \emph{most specific explanations (MSEs)}~\cite{Poole1985,Stickel1990,SakamaInoue1995}. The difference is that MSEs specify as many values of hypotheses as possible, and full solutions specify probabilities of \emph{every event}.

Let us now solve the problem from Example~\ref{example:probabilities2}.
\begin{example}\label{example:probabilities3}
We begin with a~sufficient solution: 
\begin{align*}
\eta_\mathsf{rain}&=(\Prob(w){\approx}\overline{0.4})\odot(\Prob(s){\approx}\overline{0.4})\odot(\Prob(s\wedge w){\approx}\overline{0.3})
\end{align*}
By Definition~\ref{def:FPLuksemantics}, we obtain $\Imc_\Mfrak(\Prob(c))\!=\!\tfrac{3}{5}$, $\Imc_\Mfrak(\Prob(r))\in(\tfrac{2}{5},\tfrac{3}{5}]$, and $\Imc_\Mfrak(\Prob(c\!\vee\!r))\!=\!\tfrac{3}{5}$ from Example~\ref{example:probabilities2}. Thus, $\Imc(\Prob(c\wedge r))>\tfrac{2}{5}$ in every probabilistic model~$\Mfrak$ that satisfies $\Gamma_\mathsf{rain}$ and~$\eta_\mathsf{rain}$. Note that $\eta_\mathsf{rain}$ is not a~full solution: we cannot infer probabilities of \emph{all} events over $\{c,r,s,w,x\}$. But $\eta_\mathsf{rain}$ \emph{can be extended to a~full solution}, e.g., as follows:
\begin{align*}
\theta_\mathsf{rain}=&\Prob(s\wedge w\wedge \neg x\wedge\neg c\wedge\neg r){\approx}\overline{0.3}\odot\\
&\Prob(\neg s\wedge\neg w\wedge\neg x\wedge c\wedge r){\approx}\overline{0.5}\odot\\
&\Prob(s\wedge\neg w\wedge x\wedge c\wedge\neg r){\approx}\overline{0.1}\odot\\
&\Prob(\neg s\wedge w\wedge x\wedge\neg c\wedge r){\approx}\overline{0.1}%
\end{align*}
By Definitions~\ref{def:FPLuksemantics}, \ref{def:FPentailment}, and~\ref{def:solutions}, $\theta_\mathsf{rain}$ is indeed a~full solution to~$\Pmbb_\mathsf{rain}$. Clearly, $\Gamma_\mathsf{rain},\theta_\mathsf{rain}\consvDashFPLuk\delta_\mathsf{rain}$. Furthermore, all $\tau\in\Emc(\theta_\mathsf{rain})$ correspond to some $X\subseteq\Vmbf$ and the values of $\tau$'s add up to~$1$. Thus $\theta_\mathsf{rain}$ is $\langle\Vmbf,\Vmc\rangle$-complete. Moreover, $\theta_\mathsf{rain}\models_\FP\eta_\mathsf{rain}$. Hence, $\theta_\mathsf{rain}$ can be interpreted as an extension of~$\eta_\mathsf{rain}$.
\end{example}

We finish the section with a~short observation. From Definition~\ref{def:solutions}, one can see that if a~problem has a~sufficient solution, then it has a~full solution. On the other hand, the converse is not necessarily true. E.g., if we modify the set of hypotheses in Example~\ref{example:probabilities3} as follows: $\Hmsf^-_\mathsf{rain}=\{w\}$, $\Pmbb^-=\langle\Gamma_\mathsf{rain},\delta_\mathsf{rain},\Hmsf^-_\mathsf{rain},\Vmc_\mathsf{rain}\rangle$ will not have sufficient solutions. Still~$\theta_\mathsf{rain}$ remains a~valid full solution.
\section{Complexity of $\FP$-Abduction\label{sec:complexity}}
Before proceeding to establish the complexity of abduction in~$\FP$, let us first define some preliminary notions. Since most of our hardness results will be obtained by reductions from $\Luk$-abduction, we recall here the definitions of $\Luk$-abduction problems and their solutions from~\cite{InoueKozhemiachenko2025}.
\begin{definition}[$\Luk$-abduction problems and their solutions]\label{def:Lukabduction}~
\begin{itemize}[noitemsep,topsep=0pt]
\item An \emph{$\Luk$-abduction problem} is a~tuple $\Pmbb=\langle\Phi,\chi,\Hmsf\rangle$ with $\Phi\cup\{\chi\}\subseteq\LLukint$ being finite and $\Hmsf$ a~finite set of \emph{interval literals} s.t.\ $\Var[\Hmsf]\subseteq\Var[\Phi\cup\{\chi\}]$. We call $\Phi$ a~\emph{theory}, $\chi$~an \emph{observation}, and members of $\Hmsf$ \emph{hypotheses}.
\item An \emph{$\Luk$-solution} (simply called a~solution) of $\Pmbb$ is an~\emph{interval term} $\tau$ composed of hypotheses s.t.\ $\Phi,\tau\consvDashLuk\chi$.
\end{itemize}
\end{definition}

We note that $\Lukinttriangle$ is the fragment of~$\FP$ with \emph{atomic} events. The following statement is a~straightforward generalisation of~\cite[Proposition 5.6]{FlaminioPretoUgolini2023} and~\cite[Theorem 5.3]{FlaminioUgolini2024}.
\begin{definition}\label{def:FPLuktoLukembedding}
For $\phi\in\LLukint$, $\sharp\in\{\neg,\!\triangle\}$, and $\circledast{\in}\{\odot,\!\oplus,\!\rightarrow\}$, we define the \emph{probabilistic counterpart} $\phi^\Prob$ as follows:
\begin{align*}
p^\Prob&=\Prob(p)&(p\diamond\cvalue)^\Prob&=\Prob(p)\diamond\cvalue\\
(\sharp\chi)^\Prob&=\sharp(\chi^\Prob)&(\chi\circledast\psi)^\Prob&=\chi^\Prob\circledast\psi^\Prob
\end{align*}
For $\Phi\subseteq\LLukint$, we set $\Phi^\Prob=\{\phi^\Prob\mid\phi\in\Phi\}$.
\end{definition}
\begin{restatable}{proposition}{FPLuktoLukembedding}\label{prop:LukisfragmentofFPLuk}
For all $\Phi\cup\{\chi\}\subseteq\LLukint$, it holds that $\Phi\models_\Luk\chi$ iff $\Phi^\Prob\models_\FP\chi^\Prob$.
\end{restatable}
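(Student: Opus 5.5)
The idea is to set up a two-way correspondence between $\Luk$-valuations and probabilistic models that preserves the values of formulas under the translation $(\cdot)^\Prob$. Let $\Vmbf=\Var[\Phi\cup\{\chi\}]$; the probabilistic counterparts only mention events of the form $\Prob(p)$ with $p\in\Vmbf$. The first step is a \emph{transfer lemma}. For a probabilistic model $\Mfrak=\langle2^\Vmbf,\mu\rangle$, define the $\Luk$-valuation $v_\Mfrak(p)=\mu(\|p\|_\Mfrak)$. Then, by induction on $\phi\in\LLukint$, we have $v_\Mfrak(\phi)=\Imc_\Mfrak(\phi^\Prob)$. The atomic case $p$ is the definition. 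For $p\diamond\cvalue$, both sides equal $1$ iff $\mu(\|p\|_\Mfrak)\diamond c$. The connectives $\neg,\triangle,\odot,\oplus,\rightarrow$ have identical truth functions in Figure~\ref{fig:Luksemantics} and Definition~\ref{def:FPLuksemantics}, so those cases are immediate.

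The second step is the converse direction: every $\Luk$-valuation on $\Vmbf$ is realised by some probabilistic model. Given $v$, I would take the product (independent) measure on $2^{2^\Vmbf}$, defined on singletons by
$$\partial(X)=\prod_{p\in X}v(p)\prod_{p\in\Vmbf\setminus X}(1-v(p)),$$
and extended additively to $\mu$. A routine computation gives $\sum_X\partial(X)=1$ and $\mu(\|p\|_\Mfrak)=v(p)$, since summing over the other coordinates yields $1$. Hence $v_\Mfrak=v$, and the transfer lemma gives $v(\phi)=\Imc_\Mfrak(\phi^\Prob)$ for all $\phi$.

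With both steps in place, the equivalence follows. Suppose $\Phi\models_\Luk\chi$, and let $\Mfrak$ be a model with $\Imc_\Mfrak(\phi^\Prob)=1$ for all $\phi\in\Phi$. Then $v_\Mfrak(\phi)=1$ for all $\phi\in\Phi$, so $v_\Mfrak(\chi)=1$, and therefore $\Imc_\Mfrak(\chi^\Prob)=1$. Conversely, suppose $\Phi\not\models_\Luk\chi$, witnessed by some $v$ (which may be restricted to $\Vmbf$). The product model $\Mfrak$ built from $v$ satisfies $\Phi^\Prob$ and falsifies $\chi^\Prob$, so $\Phi^\Prob\not\models_\FP\chi^\Prob$.

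The only point needing any care is the realisability step, since the constructed measure must be a genuine probability measure in the sense of Definition~\ref{def:probabilitymeasure}. A small detail is that Definition~\ref{def:FPentailment} fixes models over exactly $\Var[\Phi^\Prob\cup\{\chi^\Prob\}]=\Vmbf$, so no extra variables arise. The remaining steps are mechanical.
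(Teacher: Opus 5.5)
Your proposal is correct and is essentially the paper's argument: one direction uses that a model $\Mfrak$ induces the valuation $p\mapsto\mu(\|p\|_\Mfrak)$, which agrees with $\Imc_\Mfrak$ on all translated formulas, and the converse realises an arbitrary $\Luk$-valuation by a measure with the prescribed marginals, which the paper justifies by independence of the variables and you construct explicitly as the product measure. The only difference is that you treat entailment from $\Phi$ directly, whereas the paper's written proof only handles the single-formula (validity) case, so your version is, if anything, slightly more complete.
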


We are now ready to consider the complexity of abductive reasoning in~$\FP$. We will deal with two problems:
\begin{itemize}[noitemsep,topsep=0pt]
\item \emph{solution recognition} --- given a~problem~$\Pmbb$ and a~PIT~$\zeta$, determine whether~$\zeta$ is a~(sufficient or full) solution to~$\Pmbb$;
\item \emph{solution existence} --- given a~problem~$\Pmbb$, determine whether it has (sufficient or full) solutions.
\end{itemize}
\subsection{Solution Recognition and Existence\label{ssec:solutionrecognition}}
We begin by determining the complexity of solution recognition. The next two statements show that while recognition of \emph{sufficient} solutions is $\DP$-complete (same as in classical or Łukasiewicz logic), recognising a~\emph{full} solution is polynomially tractable.
\begin{restatable}{theorem}{sufficientsolutionrecognitionDP}\label{theorem:sufficientsolutionrecognitionDP}
Given an $\FP$ AP $\Pmbb=\langle\Gamma,\delta,\Hmsf,\Vmc\rangle$ and a~PIT~$\eta$, it is $\DP$-complete to check if $\eta$~is a~sufficient solution to~$\Pmbb$.
\end{restatable}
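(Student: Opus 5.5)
Membership and hardness are handled separately.

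\emph{Membership in $\DP$.} Checking that $\eta$ is a~sufficient solution splits into a~syntactic part and two semantic conditions. The syntactic part is polynomial: $\eta$ must be a~PIT, $\Emc(\eta)\subseteq\Hmsf$, and every constant of~$\eta$ must lie in~$\Vmc$. Since $\Vmc$ is given by~$n$, the last check only asks whether each $c_i\cdot n$ is an integer between $0$ and~$n$. The semantic conditions are $\Gamma,\eta\models_\FP\delta$ and $\Gamma,\eta\not\models_\FP\bot$. By Proposition~\ref{prop:FPLukNPcoNP}, the first is in $\conp$. The second says that $\bigodot_{\gamma\in\Gamma}\triangle\gamma\odot\eta$ is $\FP$-satisfiable, which is in~$\np$. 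The set of yes-instances is therefore the intersection of an $\np$ language and a~$\conp$ language, with the polynomial syntactic checks folded into either one, and so it lies in~$\DP$. One detail needs care: entailment is defined over models on $\Var[\Gamma\cup\{\delta\}]$, while $\eta$ might mention further variables. The hypothesis condition $\Var[\Hmsf]\subseteq\Var[\Gamma\cup\{\delta\}]$ together with $\Emc(\eta)\subseteq\Hmsf$ rules this out.

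\emph{$\DP$-hardness.} I would reduce from solution recognition in Łukasiewicz abduction, which is $\DP$-complete by \cite{InoueKozhemiachenko2025}. The reduction is the translation $(\cdot)^\Prob$ of Definition~\ref{def:FPLuktoLukembedding}. Given an $\Luk$ AP $\langle\Phi,\chi,\Hmsf\rangle$ and an interval term $\tau=\bigodot_i(p_i\diamond\cvalue_i)$, output the $\FP$ AP $\langle\Phi^\Prob,\chi^\Prob,\Hmsf',\Vmc\rangle$ and the PIT $\tau^\Prob$. Here $\Hmsf'=\{p_i\}$ consists of atomic, hence term, hypotheses, and $\Vmc=\{0,\tfrac1n,\ldots,1\}$ with $n$ the least common denominator of all the~$c_i$. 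This $n$ has polynomial bit size, and $\Vmc$ is represented by~$n$, so the output stays polynomial. Proposition~\ref{prop:LukisfragmentofFPLuk} gives $\Phi,\tau\models_\Luk\chi$ iff $\Phi^\Prob,\tau^\Prob\models_\FP\chi^\Prob$. It also gives $\Phi,\tau\models_\Luk\bot$ iff $\Phi^\Prob,\tau^\Prob\models_\FP\bot$, because $\bot$ is expressible as $\zero$ or as $p\odot\neg p$ with $\triangle$. Hence $\tau$ solves the $\Luk$ problem iff $\tau^\Prob$ is a~sufficient solution to the $\FP$ problem.

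The main obstacle is the alternative in case the cited $\DP$-hardness for $\Luk$ does not transfer cleanly, for instance because of the restriction on hypotheses. In that case I would reduce directly from SAT--UNSAT: given CNFs $\varphi_1,\varphi_2$ on disjoint variables, build the theory $\Gamma=\{\Prob(\varphi_1)\geq\one\}$ and observation $\delta=\Prob(\neg\varphi_2)\geq\one$, take hypotheses $\Hmsf=\{p\}$ for a~variable $p$ of~$\varphi_1$, and let $\eta$ be a~trivial PIT such as $\Prob(p)\geq\zero$. Then $\Gamma,\eta$ is consistent iff $\varphi_1$ is satisfiable, since a~Dirac measure on a~model of $\varphi_1$ works. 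Also $\Gamma,\eta\models_\FP\delta$ iff $\varphi_2$ is unsatisfiable. This uses disjointness of variables: when $\varphi_2$ is satisfiable, take the product of Dirac measures on a~model of $\varphi_1$ and a~model of $\varphi_2$. Either route gives hardness, and together with membership this yields $\DP$-completeness.
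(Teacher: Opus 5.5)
Your membership argument is the paper's: an $\np$ consistency check and a $\conp$ entailment check, run independently. You spell it out more carefully.

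For hardness, your first route uses the paper's idea of embedding $\Luk$-abduction via $(\cdot)^\Prob$ and Proposition~\ref{prop:LukisfragmentofFPLuk}. As you parametrise it, though, it is not quite a reduction. Setting $\Hmsf'$ to the variables of $\tau$ itself makes $\Emc(\tau^\Prob)\subseteq\Hmsf'$ automatic, which drops the requirement that an $\Luk$-solution be built from the given hypotheses. For example, $\Phi=\{q\rightarrow r\}$, $\chi=r$, $\Hmsf=\{q{\leq}\zero\}$, $\tau=q{\geq}\one$ is a no-instance that your map sends to a yes-instance. A polynomial syntactic pre-check fixes this: output a fixed no-instance whenever $\tau$ is not composed of hypotheses. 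The paper's mapping, with $\Hmsf'=\Var[\Hmsf]$ and $\Vmc'=\{0,1\}$, has the same blind spot when $\Hmsf$ contains $q{\leq}\zero$ but not $q{\geq}\one$. Your LCD choice of $\Vmc$ also assumes that $\Vmc$ is encoded by $n$, since the LCD can be exponential in value. The paper avoids this by reducing from a subclass that is already $\DP$-hard: theories contain $p\vee\neg p$ and the only hypotheses are $p{\geq}\one$ and $q{\leq}\zero$, so $\Vmc'=\{0,1\}$ suffices.

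Your second route is correct, genuinely different, and on its own completes the hardness proof. It needs only SAT--UNSAT, one theory formula, a trivial $\eta$, and $\Vmc=\{0,1\}$. It locates the hardness in the inner layer, namely arbitrary classical events inside $\Prob$. The paper's route shows instead that hardness already arises with atomic events, that is, from the outer \L{}ukasiewicz layer. That construction is also reused for $\Sigma^\Pmsf_2$-hardness of existence (Theorem~\ref{theorem:sufficientsolutionexistence}), for $\models_\FP$-minimal recognition, and for the PSC fragment; your SAT--UNSAT encoding does not yield any of these. One small imprecision: the claim that $\Gamma,\eta\models_\FP\delta$ holds iff $\varphi_2$ is unsatisfiable is only true when $\varphi_1$ is satisfiable, since otherwise the entailment holds vacuously. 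The final equivalence is unaffected, because consistency fails in that case.
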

\begin{restatable}{theorem}{fullsolutionrecognitionpolynomial}\label{theorem:fullsolutionrecognitionpolynomial}
Given an $\FP$ AP $\Pmbb=\langle\Gamma,\delta,\Hmsf,\Vmc\rangle$ with $\Vmbf=\Var[\Pmbb]$ and a~$\langle\Vmbf,\Vmc\rangle$-complete PIT~$\theta$, it requires polynomial time (w.r.t.\ the total number of symbols in~$\theta$ and~$\Pmbb$ together) to check if $\theta$~is a~full solution to~$\Pmbb$.
\end{restatable}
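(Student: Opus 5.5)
The plan is to show that a $\langle\Vmbf,\Vmc\rangle$-complete PIT~$\theta$ determines a unique probabilistic model $\Mfrak_\theta=\langle2^\Vmbf,\mu_\theta\rangle$, which we can construct explicitly from~$\theta$. Checking whether $\theta$ is a full solution then reduces to evaluating the finitely many formulas of $\Gamma\cup\{\delta\}$ in that single model.

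\textbf{Reading off the model.} Write $\theta=\bigodot^k_{i=1}\Prob(\tau_i){\leq}\cvalue_i\odot\bigodot^k_{i=1}\Prob(\tau_i){\geq}\cvalue_i$.
\begin{itemize}[noitemsep,topsep=0pt]
\item Since $\Var[\tau_i]=\Vmbf$ and each $\tau_i$ is a term, $\tau_i$ is either contradictory (it contains complementary literals) or picks out exactly one state $X_i\in2^\Vmbf$. This is a syntactic check, linear in~$\theta$.
\item A model satisfies $\theta$ iff $\mu(\|\tau_i\|)=c_i$ for all~$i$. This uses the fact that a $\odot$-conjunction of crisp formulas $\Prob(\cdot)\diamond\cvalue$ has value~$1$ iff every conjunct has value~$1$.
\item Suppose some $\tau_i$ is contradictory with $c_i>0$, or two $\tau_i,\tau_j$ denote the same state with $c_i\neq c_j$. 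Then $\theta$ is unsatisfiable, so $\Gamma,\theta\models_\FP\bot$ and $\theta$ is not a full solution; we reject.
\item Otherwise, the distinct states $X_i$ receive masses whose sum is at most $\sum_i c_i=1$. If duplicate terms with equal values would make this sum fall below~$1$, we again reject, since the remaining mass cannot be placed.
\item Otherwise, $\mu_\theta$ is the measure with $\mu_\theta(\{X_i\})=c_i$ and all other states of mass~$0$. It is the \emph{unique} model of~$\theta$, because the listed states already carry total mass~$1$.
\end{itemize}
It follows that $\Gamma,\theta\consvDashFPLuk\delta$ iff $\Imc_{\Mfrak_\theta}(\gamma)=1$ for all $\gamma\in\Gamma$ and $\Imc_{\Mfrak_\theta}(\delta)=1$. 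Indeed, consistency means exactly that $\Mfrak_\theta$ satisfies~$\Gamma$, and entailment is then a check in the one remaining model.

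\textbf{Evaluating in $\Mfrak_\theta$.} For each event $\phi\in\Emc[\Gamma\cup\{\delta\}]$, I compute $\mu_\theta(\|\phi\|)=\sum\{c_i\mid X_i\models\phi\}$. Each term is a classical evaluation of $\phi$ at the state~$X_i$, which takes time linear in~$|\phi|$. The number of relevant states is at most~$k$, which is bounded by~$|\theta|$. This step is crucial: we never enumerate $2^\Vmbf$, only the support of~$\theta$. So all probabilistic atoms are evaluated in time $O(|\theta|\cdot|\Pmbb|)$, with rational arithmetic on numbers whose size is polynomial, since the sums have at most $k$ summands over denominators from~$\Vmc$. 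Then each outer formula is evaluated bottom-up using the truth functions of Figure~\ref{fig:Luksemantics}. These are $\min$, $\max$, addition, subtraction and comparisons, so the bit-size of the rationals grows only additively along the formula and the evaluation stays polynomial.

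\textbf{Main obstacle.} The step I expect to need the most care is the soundness of the reduction to the single model. Definition~\ref{def:FPentailment} evaluates over models on $2^{\Vmbf'}$ with $\Vmbf'=\Var[\Gamma\cup\{\delta\}\cup\{\theta\}]$, and I must confirm that this coincides with $\Vmbf=\Var[\Pmbb]$, so that each $\tau_i$ really pins down one state. A second point is that I must not mistake a satisfiable-but-partial~$\theta$ for a determining one; this is handled by the mass-sum argument above. The complexity bookkeeping for the rational numbers is routine.
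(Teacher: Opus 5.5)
Your core argument is the paper's. A complete PIT pins down a single model $\Mfrak_\theta$, so consistent entailment collapses to checking $\Imc_{\Mfrak_\theta}(\gamma)=1$ for all $\gamma\in\Gamma\cup\{\delta\}$. Each $\mu_\theta(\|\phi\|)$ is computed as $\sum_{X_i\models\phi}c_i$, without enumerating $2^\Vmbf$. Your worry about $\Vmbf'$ resolves at once: $\Var[\Hmsf]\subseteq\Var[\Gamma\cup\{\delta\}]$ and $\Var(\tau_i)=\Vmbf$, so the entailment is evaluated over exactly $2^\Vmbf$.

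One step of your preprocessing is wrong, however. Suppose two terms describe the same state with equal values, so the listed distinct states carry total mass $s<1$. The remaining $1-s$ can then be placed on any state not described by some $\tau_i$; it is unplaceable only if the $X_i$ exhaust $2^\Vmbf$. Such a $\theta$ is typically satisfiable with infinitely many models, and rejecting it is unjustified.

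There is also no polynomial repair for this case. Take $p\notin\Var(\phi)$ and $\sigma=p\wedge\bigwedge_{q\in\Var(\phi)}q$. Let $\theta$ list $\sigma$ twice with value $\tfrac{1}{2}$ each, which meets the letter of the definition of complete PITs. Let $\Gamma=\{\Prob(\neg p\wedge\phi){\geq}\overline{\tfrac{1}{2}}\}$ and $\delta=\Prob(p){\geq}\overline{\tfrac{1}{2}}$. Then $\theta$ is a full solution iff $\phi$ is $\CPL$-satisfiable, so this case is $\np$-hard.

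The theorem therefore needs a convention that is tacit in the paper: the $\tau_i$ are pairwise non-equivalent, consistent state descriptions. This is what makes the paper's remarks that ``every complete PIT is satisfiable'' and that complete PITs ``correspond to distributions'' true. Adopt this convention explicitly and drop your contradiction/duplicate branches, which then never fire. The rest of your argument goes through as written and coincides with the paper's proof.
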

The reason for this discrepancy is that a~\emph{full} solution determines \emph{exactly one} probabilistic model where $\Gamma$~and~$\delta$ are true. Moreover, a~full solution~$\theta$ corresponds to a~probability distribution on~$2^\Vmbf$. Hence, to calculate the value of $\Prob(\phi)$ using~$\theta=\bigodot^k_{i=1}\Prob(\tau_i){\leq}\cvalue_i\odot\bigodot^k_{i=1}\Prob(\tau_i){\geq}\cvalue_i$, one needs to compute $\sum_{\tau_i\models_\CPL\phi}c_i$. Since $\tau_i$'s determine the values of all variables, $\tau_i\models_\CPL\phi$ can be decided in polynomial time, and this sum takes polynomial time to compute. On the other hand, \emph{sufficient} solutions can leave probabilities of some events undefined and thus do not correspond to a~unique probability distribution. This means that we need two $\np$-oracle calls: to verify $\Gamma,\eta\not\models_{\FP}\bot$ and $\Gamma,\eta\models_{\FP}\delta$. Since these calls are independent, $\DP$-membership follows.

Lastly, \emph{full} solutions are $\langle\Vmbf,\Vmc\rangle$-complete PITs, i.e., they can be exponentially long w.r.t.\ $\Var[\Pmbb]$ (and $\Emc[\Pmbb]$). Thus, we take into account lengths of both~$\theta$ and~$\Pmbb$ in Theorem~\ref{theorem:fullsolutionrecognitionpolynomial}.

Let us now turn to the complexity of \emph{solution existence}. As expected from the results in the previous section, determining the existence of \emph{sufficient} solutions will be harder than determining the existence of \emph{full} solutions. Namely, $\Sigma^\Pmsf_2$-hardness of \emph{sufficient} solution existence follows from Proposition~\ref{prop:LukisfragmentofFPLuk} and $\Sigma^\Pmsf_2$-hardness of solution existence for $\Luk$-abduction problems~\cite[Theorem~5]{InoueKozhemiachenko2025}. $\Sigma^\Pmsf_2$-membership can be obtained from Proposition~\ref{prop:FPLukNPcoNP} using a~‘guess-and-verify’ procedure.
\begin{restatable}{theorem}{sufficientsolutionexistence}\label{theorem:sufficientsolutionexistence}
Given an $\FP$ AP $\Pmbb=\langle\Gamma,\delta,\Hmsf,\Vmc\rangle$, it is $\Sigma^\Pmsf_2$-complete to determine whether it has sufficient solutions.
\end{restatable}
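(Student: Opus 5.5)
Membership in $\Sigma^\Pmsf_2$ follows a guess-and-verify scheme. Given $\Pmbb=\langle\Gamma,\delta,\Hmsf,\Vmc\rangle$, a sufficient solution is a PIT $\eta=\bigodot^n_{i=1}\Prob(\tau_i)\diamond\cvalue_i$ with $\tau_i\in\Hmsf$ and $c_i\in\Vmc$.

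We may assume that each pair $\langle\tau,\diamond\rangle$ occurs at most once in a candidate. For a fixed $\tau$ and $\diamond\in\{\geq,>\}$, only the strongest lower bound matters; the analogous remark holds for upper bounds. Any PIT can therefore be replaced by an equivalent one of size $O(|\Hmsf|\cdot\log|\Vmc|)$, where the values of $\Vmc$ are written in binary, or at most linear in the size of $\Pmbb$. Since $\odot$ of crisp $\{0,1\}$-valued PILs is just their conjunction, this replacement does not change the entailment relation.

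The nondeterministic machine guesses such a polynomial-size $\eta$ and checks that $\Emc(\eta)\subseteq\Hmsf$ and that all constants lie in $\Vmc$. It then makes one $\np$-oracle call to verify that $\Gamma\cup\{\eta\}$ is $\FP$-satisfiable. This is done by testing satisfiability of $\triangle\bigodot_{\gamma\in\Gamma}\gamma\odot\eta$, using $\triangle$ to force every conjunct to value $1$. It makes one $\conp$-oracle call to verify $\Gamma,\eta\models_\FP\delta$. Both checks are within the bounds of Proposition~\ref{prop:FPLukNPcoNP}, which gives an $\np^\np=\Sigma^\Pmsf_2$ procedure. One subtlety is that Proposition~\ref{prop:FPLukNPcoNP} concerns single formulas and finite premise sets. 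Converting the finite set $\Gamma\cup\{\eta\}$ to a single formula with $\triangle$ handles this.

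Hardness comes from a reduction from $\Luk$-abduction solution existence, which is $\Sigma^\Pmsf_2$-hard by \cite[Theorem~5]{InoueKozhemiachenko2025}. Given $\langle\Phi,\chi,\Hmsf\rangle$ with interval literals $p\diamond\cvalue$ as hypotheses, we build $\langle\Phi^\Prob,\chi^\Prob,\Hmsf',\Vmc\rangle$. Here $\Hmsf'$ is the set of variables (single-literal terms) occurring in hypotheses, and $\Vmc=\{0,\tfrac1n,\ldots,1\}$ with $n$ the least common multiple of the denominators of the constants in $\Hmsf$; $n$ has polynomial bit size.

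The main obstacle is that the $\FP$ problem allows any PIL $\Prob(p)\diamond\cvalue$ with $c\in\Vmc$ and any $\diamond$, whereas the $\Luk$ problem allows only the listed interval literals. I would first try to restrict the $\Luk$ hypotheses in the hardness proof of \cite{InoueKozhemiachenko2025} to a form closed under this choice, for instance where hypotheses are all $p\diamond\cvalue$ with $c\in\Vmc$. If that restriction is not available, I would encode the hypotheses with fresh variables instead. For each hypothesis $h=p\diamond\cvalue$, introduce a fresh $q_h$, add $\triangle(\Prob(q_h)\rightarrow\Prob(p)\diamond\cvalue)$-style axioms to the theory, and take $\Hmsf'=\{q_h\}$ with $\Vmc=\{0,1\}$. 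The only useful PIL is then $\Prob(q_h)\geq\overline{1}$, which activates hypothesis $h$. Other PILs either bound $\Prob(q_h)$ from above, which adds nothing toward entailment, or are inconsistent with the theory.

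Correctness then follows from Proposition~\ref{prop:LukisfragmentofFPLuk}. For consistency, I would argue as follows. Every $\Luk$-valuation yields a model assigning $\mu(\|p\|)=v(p)$, for example via a product measure, and every $\FP$ model yields the valuation $v(p)=\mu(\|p\|)$, because the translation only involves atomic events. This correspondence transfers both entailment and consistency of $\Phi^\Prob\cup\{\eta\}$ in both directions.
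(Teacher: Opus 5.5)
Your proposal is correct. The membership half matches the paper's guess-and-verify argument. You also add the normalisation to at most one PIL per pair $\langle\tau,\diamond\rangle$, which bounds the size of the guessed PIT; the paper leaves that step implicit.

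For hardness you take a different route. The paper reuses the reduction of Theorem~\ref{theorem:sufficientsolutionrecognitionDP}. It starts from the $\Sigma^\Pmsf_2$-hard $\Luk$-problems whose theory contains $p\vee\neg p$ for every variable and whose hypotheses are only $p{\geq}\one$ and $q{\leq}\zero$, and it sets $\Hmsf'=\Var[\Hmsf]$ and $\Vmc'=\{0,1\}$. The classical-forcing axioms then collapse $\Prob(p){>}\zero$ and $\Prob(q){<}\one$ into $\Prob(p){\geq}\one$ and $\Prob(q){\leq}\zero$, so every $\FP$ solution $\eta$ maps back to $\eta^\uparrow$.

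That route is economical, but it leans on the special shape of the hard instances. In particular, both polarities must be available as hypotheses, since $\Hmsf'=\Var[\Hmsf]$ lets $\eta$ use $\Prob(p){\geq}\one$ and $\Prob(p){\leq}\zero$ alike. Your fresh switch variables $q_h$ work for arbitrary interval-literal hypotheses with arbitrary constants. The price is an extra argument that non-activating PILs over $q_h$ are harmless. That argument goes through because each $q_h$ is fresh: when you turn an $\Luk$ countermodel into an $\FP$ countermodel via a product measure, you can give every non-activated switch probability $0$.

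One correction to your classification of PILs. With the axiom $\triangle(\Prob(q_h)\rightarrow(\Prob(p)\diamond\cvalue))$, the hypothesis fires as soon as $\Prob(q_h)>0$. So $\Prob(q_h){>}\zero$ also activates $h$, not only $\Prob(q_h){\geq}\one$. Map it back to $h$ as well. In the countermodel, choose $\Prob(q_h)$ inside whatever range $\eta$ permits, e.g.\ $\tfrac12$ if $\eta$ also contains $\Prob(q_h){<}\one$.

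Finally, drop your first option with $n$ the least common multiple of the denominators. That $n$ has polynomially many bits but can be exponentially large in value. If $\Vmc$ is given by listing its elements, as the paper's call counts suggest, the instance would blow up. Your fallback with $\Vmc=\{0,1\}$ has no such problem.
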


Full solution existence for an $\FP$ AP $\Pmbb=\langle\Gamma,\delta,\Hmsf,\Vmc\rangle$ can be seen as $\FP$-satisfiability of $\bigodot_{\gamma\in\Gamma}\gamma\odot\delta$ provided that the model satisfying $\Gamma\cup\{\delta\}$ should evaluate all probabilistic atoms over $\Vmc$, not~$[0,1]$. It is known from~\cite{HajekTulipani2001} that $\alpha\in\LProbint$ is $\FP$-satisfiable iff it is satisfiable in a~probabilistic model $\Mfrak=\langle2^{\Var(\alpha)},\mu\rangle$ s.t.\ $\mu(\{w\})>0$ holds for at most $|\Emc(\alpha)|+1$ states. Thus, we will consider full solutions $\theta=\bigodot^n_{i=1}\lambda_i$ s.t.\ $n\leq2\cdot|\Emc[\Pmbb]|+2$ (recall that $\Prob(\pi){\approx}\cvalue$ is a~shorthand for $(\Prob(\pi){\geq}\cvalue)\odot(\Prob(\pi)\leq\cvalue)$). We will call such solutions \emph{concise} to differentiate them from full solutions of arbitrary length.
\begin{definition}\label{def:concisesolution}
Given an $\FP$ AP~$\Pmbb$ and its full solution~$\theta=\bigodot^n_{i=1}\lambda_i$, we say that $\theta$~is \emph{concise} iff $n\leq2\cdot|\Emc[\Pmbb]|+2$.
\end{definition}

$\np$-membership of concise full solution existence for $\FP$ APs now follows from Theorem~\ref{theorem:fullsolutionrecognitionpolynomial}. For $\np$-hardness, we can provide a~polynomial reduction from $\CPL$-satisfiability.
\begin{restatable}{theorem}{fullsolutionexistenceNP}\label{theorem:fullsolutionexistenceNP}
Given an $\FP$ AP $\Pmbb=\langle\Gamma,\delta,\Hmsf,\Vmc\rangle$, it is $\np$-complete to determine whether it has concise full solutions.
\end{restatable}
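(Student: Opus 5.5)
For membership, we use a guess-and-verify algorithm. Given $\Pmbb=\langle\Gamma,\delta,\Hmsf,\Vmc\rangle$ with $\Vmbf=\Var[\Pmbb]$, we nondeterministically guess a PIT $\theta=\bigodot^k_{i=1}\Prob(\tau_i){\leq}\cvalue_i\odot\bigodot^k_{i=1}\Prob(\tau_i){\geq}\cvalue_i$ with $2k\leq2\cdot|\Emc[\Pmbb]|+2$. Here each $\tau_i$ is a complete $\LCPL$-term over $\Vmbf$, so it has length $O(|\Vmbf|)$, and each $c_i\in\Vmc$. Since $\Vmc=\{0,\tfrac{1}{n},\ldots,1\}$, each $c_i$ is written with $O(\log n)$ bits, which is polynomial in the size of $\Pmbb$. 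Hence $\theta$ has polynomial size in $\Pmbb$.

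We then check three things in polynomial time:
\begin{itemize}[noitemsep,topsep=0pt]
\item that the $\tau_i$ are pairwise distinct;
\item that $\sum_i c_i=1$, so that $\theta$ is $\langle\Vmbf,\Vmc\rangle$-complete;
\item that $\theta$ is a full solution, by Theorem~\ref{theorem:fullsolutionrecognitionpolynomial}.
\end{itemize}
The running time in the last check is polynomial in $|\theta|+|\Pmbb|$, hence in $|\Pmbb|$. This places the problem in $\np$.

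For hardness, we reduce from $\CPL$-satisfiability. Given $\phi\in\LCPL$ over variables $\Vmbf$, we build $\Pmbb_\phi=\langle\{\Prob(\phi)\},\Prob(\phi),\varnothing,\{0,1\}\rangle$. Here $\Hmsf=\varnothing$ is allowed; if nonempty hypotheses are required, we can take $\Hmsf=\{p\}$ for some $p\in\Vmbf$, which is harmless because full solutions ignore $\Hmsf$. We take $\Vmc=\{0,1\}$, i.e.\ $n=1$.

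\emph{If $\phi$ is satisfiable.} Pick a satisfying assignment $X\subseteq\Vmbf$ and its complete term $\tau_X$. Then $\theta=\Prob(\tau_X){\leq}\one\odot\Prob(\tau_X){\geq}\one$ is $\langle\Vmbf,\Vmc\rangle$-complete and has $2\leq2\cdot1+2$ conjuncts, so it is concise. The unique model of $\theta$ is the Dirac measure on $X$, in which $\Prob(\phi)$ has value $1$. Therefore $\Gamma,\theta$ is consistent and entails $\delta$.

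\emph{If $\phi$ is unsatisfiable.} Then $\|\phi\|_\Mfrak=\varnothing$ in every model, so $\Imc_\Mfrak(\Prob(\phi))=0$ and $\Gamma=\{\Prob(\phi)\}$ is $\FP$-inconsistent. Consequently no PIT can consistently entail $\delta$, and $\Pmbb_\phi$ has no full solutions at all.

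The main subtlety is in the membership part: the guessed object must genuinely have polynomial size. The bound $n\leq2\cdot|\Emc[\Pmbb]|+2$ in Definition~\ref{def:concisesolution} controls the number of conjuncts, and the terms $\tau_i$ are linear in $|\Vmbf|$. We also need the numerals $\cvalue_i$ to be encoded succinctly in binary, which is the standard convention. With these points settled, the polynomial-time bound of Theorem~\ref{theorem:fullsolutionrecognitionpolynomial} becomes polynomial in $|\Pmbb|$ alone.
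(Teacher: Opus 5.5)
Your proof is correct and follows essentially the same route as the paper. For membership, both you and the paper guess a polynomial-size complete PIT and verify it via Theorem~\ref{theorem:fullsolutionrecognitionpolynomial}. For hardness, both reduce from $\CPL$-satisfiability with $\Vmc=\{0,1\}$ and take the Dirac measure on a satisfying assignment as the full solution.

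The only difference is cosmetic. You put $\Prob(\phi)$ into the theory, so the unsatisfiable case fails on consistency. The paper uses the empty theory, so it fails on entailment. Your explicit checks of distinctness of the $\tau_i$, of $\sum_i c_i=1$, and of the binary encoding of the $c_i$ fill in details the paper leaves implicit.
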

\subsection{Prioritisation of Solutions\label{ssec:prioritisation}}
Up to now, we have been considering \emph{arbitrary} full and sufficient solutions. Note, however, that usually we have some preference relation between different solutions. In particular, a~usual requirement is that a~solution should be as simple as possible. One of the most straightforward choices to formalise simplicity in logic-based abduction is to utilise entailment. If $\zeta_1$ and $\zeta_2$ are two solutions s.t.\ $\zeta_1\models\zeta_2$ and $\zeta_2\not\models\zeta_1$, then $\zeta_2$ is simpler because it makes fewer assumptions and thus $\zeta_1$ is more preferable than~$\zeta_1$. In classical logic, this is usually formulated in terms of~\emph{$\subseteq$-minimality}~\cite{EiterGottlob1995} (given $\LCPL$-terms $\sigma$ and~$\tau$, $\sigma\models_\CPL\tau$ iff all literals in~$\tau$ belong to~$\sigma$) and interpreted as preferring solutions without \emph{redundant hypotheses}: if both $\tau\wedge h$ and $\tau$ are solutions, then~$h$ is redundant~\cite{PengReggia1990}.

An advantage of \emph{entailment-minimal} solutions is that they can be easily generalised to abduction in non-classical logics (cf., e.g.,~works by Bienvenu et al.~\shortcite{BienvenuInoueKozhemiachenko2024KR,BienvenuInoueKozhemiachenko2026} and Inoue and Kozhemiachenko~\shortcite{InoueKozhemiachenko2025}). In the framework of~$\FP$-abduction, we can employ $\models_\FP$-minimality to prioritise \emph{sufficient} solutions. On the other hand, different \emph{full} solutions are either equivalent or incompatible. Hence, we need another approach to prioritisation. Since full solutions correspond to probability distributions on~$2^\Vmbf$, it makes sense to employ the \emph{principle of maximum entropy} by Janes~\shortcite{Jaynes1957-1,Jaynes1957-2} (cf.~a~discussion by Williamson~\shortcite{Williamson2010}) to establish a~preference relation between them. The principle states that the probability distributions that best represent our knowledge are those with the highest entropy. Thus, given an $\FP$ AP, we prefer full solutions with higher entropy. Entropy was used to prioritise probabilistic explanations by Valkovsky et al.~\shortcite{ValkovskySavvinGerasimov1999} in the context of constraint programming, and Dubois et al.~\shortcite{DuboisGilioKern-Isberner2008}. There, however, the explanation was not a~probability distribution itself but an event.

We will thus consider two classes of preferred solutions to $\FP$ APs: \emph{entropy-maximal} and \emph{$\models_\FP$-minimal} (entailment-minimal). We begin with entropy-maximal full solutions.
\begin{definition}\label{def:entropy}
Let $\Pmbb=\langle\Gamma,\delta,\Hmsf,\Vmc\rangle$ be an $\FP$ AP with $\Var[\Pmbb]=\{p_1,\ldots,p_m\}$ and $\theta=\bigodot^n_{i=1}\Prob(\tau_i){\approx}\cvalue_i$ its full solution s.t.\ $c_i>0$ for all~$i\in\{1,\ldots,n\}$. The \emph{entropy of~$\theta$} ($\Hmbb(\theta)$) is defined as follows: $\Hmbb(\theta)=-\sum^n_{i=1}c_i\cdot\log_2 c_i$.
\end{definition}
\begin{definition}\label{def:CEMsolutions}
Let $\Pmbb=\langle\Gamma,\delta,\Hmsf,\Vmc\rangle$ be an $\FP$ AP, that contains~$k$ unique probabilistic atoms. We say that $\theta$ is a~\emph{concise entropy-maximal} (CEM) solution to~$\Pmbb$ iff $\theta$ is a~full solution, and there is no full solution $\theta'$ containing at most $2{\cdot}k{+}2$ PITs s.t.\ $\Hmbb(\theta){<}\Hmbb(\theta')$.
\end{definition}

Note that CEM solutions have bounded length. This is done for the following reason. Intuitively, the higher the entropy of a~full solution~$\theta$, the closer it is to a~uniform probability distribution. In other words, the higher $\Hmbb(\theta)$ is, \emph{the more~$\theta$ contains terms $\Prob(\tau){\approx}\cvalue$ with $c\neq0$}. Thus, solutions with higher entropy could be exponentially long w.r.t.\ the size of the problem. For example, if $\phi$ is a~tautology and $\Var(\phi)=\Vmbf=\{p_1,\ldots,p_n\}$, then the (non-concise) entropy-maximal explanation of $\Prob(\phi){\geq}\one$ is the uniform probability distribution on~$2^\Vmbf$ expressed as $\bigodot^{2^n}_{i=1}\Prob(\tau_i){\approx}\overline{\tfrac{1}{2^n}}$. To circumvent this problem, we will only consider ‘concise’ full solutions (recall Definition~\ref{def:concisesolution}).
\begin{example}\label{example:probabilities5}
Consider $\theta_\mathsf{rain}$. It contains~$8$ PITs (note that $\Prob(\tau){\approx}\cvalue$'s are shorthands for $\Prob(\tau){\leq}\cvalue\odot\Prob(\tau){\geq}\cvalue$) and is concise. By Definition~\ref{def:entropy}, $\Hmbb(\theta_\mathsf{rain})\approx1.685$. Recall from Example~\ref{example:probabilities2} that $\Pmbb_\mathsf{rain}$ contains~$8$ unique probabilistic atoms. Thus, we can consider full solutions to~$\Pmbb_\mathsf{rain}$ that contain up to~$18$ terms (i.e., up to~$9$ events).

Now, to produce a~probability distribution with a~higher entropy than~$\theta_\mathsf{rain}$, we ‘spread’ the probability of $\neg s\wedge\neg w\wedge\neg x\wedge c\wedge r$ to other events. Consider:
\begin{align*}
\theta^\Uparrow_\mathsf{rain}=&\Prob(s\wedge w\wedge\neg x\wedge\neg c\wedge\neg r){\approx}\overline{0.3}\odot\\
&\Prob(\neg s\wedge\neg w\wedge\neg x\wedge c\wedge r){\approx}\overline{0.35}\odot\\
&\Prob(s\wedge\neg w\wedge\neg x\wedge c\wedge r){\approx}\overline{0.15}\odot\\
&\Prob(s\wedge\neg w\wedge x\wedge c\wedge\neg r){\approx}\overline{0.1}{\odot}\\
&\Prob(\neg s\wedge w\wedge x\wedge\neg c\wedge r){\approx}\overline{0.1}
\end{align*}
Using Definition~\ref{def:entropy}, we obtain that $\Hmbb(\theta^\Uparrow_\mathsf{rain})\approx2.126$, i.e., $\Hmbb(\theta^\Uparrow_\mathsf{rain})>\Hmbb(\theta_\mathsf{rain})$. Thus, $\eta^{\Uparrow}_\mathsf{rain}$ is a~more preferable full solution to~$\Pmbb_\mathsf{rain}$ than~$\theta_\mathsf{rain}$.
\end{example}

We can now show that recognising concise entropy-maximal solutions is $\conp$-complete. Indeed, we can solve the complementary problem in~$\np$ time: given a~$\langle\Vmc,\Vmbf\rangle$-complete PIT $\theta$, we guess another (polynomially large) solution candidate $\theta'$ and check whether $\Hmbb(\theta)<\Hmbb(\theta')$ in polynomial time. If the check succeeds, $\theta$ \emph{is not a~CEM solution}. To show $\conp$-hardness, we can construct a~polynomial-time reduction from $\CPL$ validity to the recognition of concise entropy-maximal solutions.
\begin{restatable}{theorem}{maxentropyconp}\label{theorem:maxentropyconp}
Given an $\FP$ AP $\Pmbb=\langle\Gamma,\delta,\Hmsf,\Vmc\rangle$ with $\Vmbf=\Var[\Pmbb]$ and a~$\langle\Vmbf,\Vmc\rangle$-complete PIT~$\theta$, it is $\conp$-complete to determine whether $\theta$~is a~CEM solution.
\end{restatable}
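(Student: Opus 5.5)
The proof splits into a $\conp$ upper bound and a $\conp$ lower bound; I expect the hardness reduction to be the main obstacle.

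\emph{Membership.} First, check in polynomial time that $\theta$ is a full solution, using Theorem~\ref{theorem:fullsolutionrecognitionpolynomial}; if it is not, reject. For the complement, guess a concise candidate $\theta'$ of the form $\bigodot^{m}_{i=1}\Prob(\tau'_i){\approx}\cvalue'_i$. Here $m\leq k+1$, each $\tau'_i$ is a full $\LCPL$-term over $\Vmbf$, each $c'_i\in\Vmc$, and $\sum c'_i=1$. This candidate has size polynomial in $|\Pmbb|$. Verify that $\theta'$ is a full solution, again by Theorem~\ref{theorem:fullsolutionrecognitionpolynomial}. Then verify $\Hmbb(\theta)<\Hmbb(\theta')$.

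The one delicate point is comparing entropies exactly. All values are of the form $j/n$, so
\[
\Hmbb(\theta)=\log_2 n-\tfrac{1}{n}\sum j_i\log_2 j_i .
\]
Hence the comparison reduces to comparing $\prod_i j_i^{j_i}$ for the two solutions, which are integers. Their bit-length is $O(n\log n)$, and $n$ is given in unary through the explicit list $\Vmc$, so computing and comparing them takes polynomial time. This places the complement in $\np$, and so the problem is in $\conp$.

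\emph{Hardness.} I would reduce $\CPL$-validity, equivalently the complement of $\CPL$-satisfiability, since Theorem~\ref{theorem:fullsolutionexistenceNP} is essentially a satisfiability reduction. Given $\phi$ over $\{p_1,\ldots,p_m\}$, take a fresh variable $q$, set $\Vmc=\{0,\tfrac12,1\}$, and build an AP whose concise full solutions are exactly of two kinds:
\begin{itemize}[noitemsep,topsep=0pt]
\item the point mass $\theta_0=\Prob(\neg q\wedge\neg p_1\wedge\dots\wedge\neg p_m){\approx}\one$, with entropy $0$;
\item distributions with entropy $1$ that put mass $\tfrac12$ on two states, possible only if $\neg\phi$ is satisfiable.
\end{itemize}
A concrete choice is the theory $\Gamma=\{\Prob(q)\rightarrow\Prob(\neg\phi)\}$ together with an observation of the form
\[
\delta=\triangle\bigl(\Prob(q){\approx}\zero\wedge\Prob(\neg q\wedge\neg p_1\wedge\dots){\approx}\one\bigr)\ \vee\ \bigl(\Prob(q){\approx}\overline{\tfrac12}\odot\ldots\bigr),
\]
written with Łukasiewicz disjunction via $\oplus$ and $\triangle$. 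The second disjunct should force mass $\tfrac12$ on $q$-states and mass $\tfrac12$ on some $\neg q$-states. It should also force the $q$-mass to sit on states satisfying $\neg\phi$, for instance via $\Prob(q\wedge\neg\phi)\geq\overline{\tfrac12}$.

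With this, $\theta_0$ is a full solution, and it is entropy-maximal iff no solution of entropy $1$ exists. A solution of entropy $1$ exists iff some assignment falsifies $\phi$. So $\theta_0$ is a CEM solution iff $\phi$ is valid.

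The main obstacle is making sure no unintended full solutions with positive entropy sneak in. Values in $\{0,\tfrac12,1\}$ allow only entropies $0$ or $1$, which helps. Still, I would carefully check the consistency requirement and the $\triangle$-encoding so that the two disjuncts exhaust the admissible models, and keep the construction polynomial in $|\phi|$.
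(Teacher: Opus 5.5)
Your proposal is correct and follows the paper's route. Membership uses the same guess-and-compare $\np$ procedure for the complement, and your reduction of the entropy comparison to comparing the integers $\prod_i j_i^{j_i}$ adds precision the paper omits; hardness uses the same idea, reducing from $\CPL$-validity with $\Vmc=\{0,\tfrac12,1\}$, a fresh $q$, and an always-valid point-mass $\theta$ of entropy $0$ that a $\tfrac12$--$\tfrac12$ solution beats exactly when $\phi$ is falsifiable. The paper encodes this more economically, with the single theory formula $\Prob(\phi\rightarrow(q\wedge\bigwedge_{p\in\Var(\phi)}p))$, the trivial observation $\Prob(q\vee\neg q)$, and the point mass on the all-true state; your disjunctive observation also works once the second disjunct is fixed as $\Prob(q){\approx}\overline{\tfrac{1}{2}}\odot\Prob(q\wedge\neg\phi){\geq}\overline{\tfrac{1}{2}}$, which makes your theory redundant.
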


Let us now consider entailment-minimal solutions. We adapt the analogous notion introduced for $\Luk$-abduction problems in~\cite{InoueKozhemiachenko2025}.
\begin{definition}\label{def:entailmentminimalsolution}
Let $\Pmbb=\langle\Gamma,\delta,\Hmsf,\Vmc\rangle$ be an $\FP$ AP. We say that $\eta$ is an \emph{$\models_\FP$-minimal} (entailment-minimal) solution iff it is a~sufficient solution, and there is no other sufficient solution~$\zeta$ s.t.\ $\eta\models_\FP\zeta$ but $\zeta\not\models_\FP\eta$.
\end{definition}

To illustrate the notion of $\models_\FP$-minimal solutions, we come back to Examples~\ref{example:probabilities2} and~\ref{example:probabilities3}. We show how to construct an $\models_\FP$-minimal solution from a~given sufficient solution.
\begin{example}\label{example:probabilities4}
Recall first a~sufficient solution $\eta_\mathsf{rain}$ to~$\Pmbb_\mathsf{rain}$. One can easily check that it is \emph{not} $\models_\FP$-minimal. We can, however, transform it into an entailment-minimal solution by increasing intervals of permitted values for each literal.

In fact, it turns out that $\Prob(s\wedge w){\approx}\overline{0.3}$ is \emph{redundant} and $\eta_0{=}\Prob(w){\geq}\overline{0.4}{\odot}\Prob(s){\approx}\overline{0.4}$ is a~solution. Clearly, \mbox{$\Gamma_\mathsf{rain},\!\eta_0\not\!\models_\FP\!\bot$.} More\-over, by Definition~\ref{def:FPLuksemantics}, we obtain from $\Gamma_\mathsf{rain}$ (cf.~Example~\ref{example:probabilities2}) that $\Imc_\Mfrak(\Prob(w)){\geq}0.4$ entails $\Imc_\Mfrak(\Prob(c)){\geq}0.6$ and $\Imc_\Mfrak(\Prob(s)){=}0.4$ entails $\Imc_\Mfrak(\Prob(r)){>}0.4$. Additionally, $\Imc_\Mfrak(\Prob(s\wedge w))\leq0.4$ because $\Imc_\Mfrak(\Prob(s)){=}0.4$, and hence $\Imc_\Mfrak(\Prob(c\vee r))\in[0.6;0.8]$. Thus, $\Imc_\Mfrak(\Prob(c\wedge r)){>}0.2$, i.e., $\Gamma_\mathsf{rain},\eta_0\models_\FP\Prob(c\wedge r){\geq}\overline{\tfrac{1}{5}}$, as required. Now observe that neither of the following PITs solve~$\Pmbb_\mathsf{rain}$:
\begin{align*}
\eta_1&=\Prob(w){\geq}\overline{0.39}\odot\Prob(s){\approx}\overline{0.4}\\
\eta_2&=\Prob(w){\geq}\overline{0.4}\odot\Prob(s){\geq}\overline{0.4}\odot\Prob(s){\leq}\overline{0.41}\\
\eta_3&=\Prob(w){\geq}\overline{0.4}\odot\Prob(s){\geq}\overline{0.39}\odot\Prob(s){\leq}\overline{0.4}
\end{align*}
Thus, we cannot increment the sets of permitted values of PILs in~$\eta_0$. Note, however, that $\Prob(s){\approx}\overline{0.4}$ is a~shorthand for $\Prob(s){\geq}\overline{0.4}\odot\Prob(s){\leq}\overline{0.4}$. Hence, we can weaken $\Prob(s){\leq}\overline{0.4}$ to $\Prob(s\wedge w){\leq}\overline{0.4}$. One can check that the resulting PIT
\begin{align*}
\eta^{\min}_\mathsf{rain}&=\Prob(w){\geq}\overline{0.4}\odot\Prob(s){\geq}\overline{0.4}\odot\Prob(s\wedge w){\leq}\overline{0.4}
\end{align*}
is a~sufficient solution to~$\Pmbb_\mathsf{rain}$. Moreover, it is indeed an $\models_\FP$-minimal solution since we can neither increase intervals nor weaken the remaining events. 
\end{example}

Now notice that, in contrast to classical and Łukasiewicz logics, the satisfiability and entailment of PITs are not polynomially decidable. It is easy to see that $\FP$-satisfiability of arbitrary probabilistic atoms is $\np$-complete. By Definition~\ref{def:FPLuksemantics}, $\phi$~is $\CPL$-satisfiable iff $\Prob(\phi)$ is $\FP$-satisfiable. Even the $\FP$-satisfiability of PITs is $\np$-complete (and entailment is $\conp$-complete) because we can encode DNF-non-validity in~$\CPL$ using PITs.
\begin{restatable}{proposition}{PITcomplexity}\label{prop:PITcomplexity}~
\begin{enumerate}[noitemsep,topsep=0pt]
\item Given a~PIT $\zeta=\bigodot^n_{i=1}(\Prob(\tau_i)\diamond\cvalue_i)$, it is $\np$-complete to check whether $\zeta$ is $\FP$-satisfiable.
\item Given two PITs $\zeta_1$ and $\zeta_2$ it is $\conp$-complete to check whether $\zeta_1\models_\FP\zeta_2$.
\end{enumerate}
\end{restatable}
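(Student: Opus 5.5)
Membership in both items follows from Proposition~\ref{prop:FPLukNPcoNP}: a~PIT is an $\LProbint$-formula, so its $\FP$-satisfiability is in $\np$. For entailment between PITs, $\zeta_1\models_\FP\zeta_2$ holds iff $\zeta_1\odot\neg\triangle\zeta_2$ is $\FP$-unsatisfiable. Here we use that $\Imc_\Mfrak(\neg\triangle\zeta_2)=1$ iff $\Imc_\Mfrak(\zeta_2)<1$, and that the variable sets may be merged harmlessly because models over larger variable sets restrict to smaller ones. Hence entailment is in $\conp$. The real work is therefore hardness, and the plan is to reduce from $\CPL$ validity of DNF formulas ($\conp$-complete), equivalently satisfiability of CNF formulas.

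For item~1, take a~CNF $\phi=\bigwedge^m_{j=1}C_j$ with clauses $C_j$. Each negated clause $\neg C_j$ is an $\LCPL$-term $\sigma_j$ (the conjunction of the complements of the literals of~$C_j$). Consider the PIT
\begin{align*}
\zeta_\phi=\Prob(\top')\geq\one\odot\bigodot^m_{j=1}\Prob(\sigma_j)\leq\zero,
\end{align*}
where $\top'$ is some term. The cleanest choice is to drop $\top'$ altogether and instead add a~fresh variable~$q$, writing $\Prob(q)\geq\one$, since terms must be conjunctions of literals and $\mu(W)=1$ holds anyway. Then $\zeta_\phi$ is satisfiable iff there is a~probability measure giving mass~$0$ to every $\|\sigma_j\|$. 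Since a~measure on a~finite $W$ has nonempty support, this holds iff some state $X\in2^\Vmbf$ lies outside all $\|\sigma_j\|$, i.e., $X$ satisfies every $C_j$, i.e., $\phi$ is $\CPL$-satisfiable. Conversely, a~satisfying assignment $X$ yields the Dirac measure on~$\{X\}$. The reduction is clearly polynomial, which gives $\np$-hardness.

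For item~2, we use the same encoding with complemented target. Given a~DNF $\psi=\bigvee_j\sigma_j$ (each $\sigma_j$ a~term), set $\zeta_1=\bigodot_j\Prob(\sigma_j)\leq\zero$ and $\zeta_2=\Prob(q)\geq\one\odot\Prob(q)\leq\zero$ for a~fresh or any variable~$q$, so that $\zeta_2$ is unsatisfiable. Then $\zeta_1\models_\FP\zeta_2$ iff $\zeta_1$ is $\FP$-unsatisfiable, which by the argument of item~1 holds iff every state lies in some $\|\sigma_j\|$, i.e., iff $\psi$ is $\CPL$-valid. The main point to check carefully is the semantics of entailment over the common variable set $\Vmbf'$ of Definition~\ref{def:FPentailment}. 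Including~$q$ among the variables of~$\zeta_1$ changes nothing, because unconstrained variables just duplicate states. We also need that $\Prob(\sigma_j)\leq\zero$ has value~$1$ exactly when the measure of $\|\sigma_j\|$ is~$0$, which is immediate from Figure~\ref{fig:Luksemantics}. With these checks in place, $\conp$-hardness follows.
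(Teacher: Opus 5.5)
Your proposal is correct and follows essentially the paper's proof. The paper likewise encodes DNF non-validity (your CNF satisfiability, after negation) as the PIT $\bigodot_i\Prob(\sigma_i)\leq\zero$, and gets $\conp$-hardness of entailment by reducing PIT unsatisfiability to entailment of a fixed unsatisfiable target: it uses $\Prob(p)>\one$ where you use $\Prob(q)\geq\one\odot\Prob(q)\leq\zero$, and your extra conjunct $\Prob(q)\geq\one$ in item~1 is harmless but unnecessary (take the Dirac measure on $X\cup\{q\}$).
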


Still, even though entailment between PITs is not polynomially tractable, recognising $\models_\FP$-minimal solutions is $\DP$-complete. For $\DP$-hardness, we construct a~polynomial reduction from the recognition of entailment-minimal solutions of \emph{$\Luk$-abduction problems}, which is $\DP$-complete~\cite[Theorem~3]{InoueKozhemiachenko2025} to the recognition of $\models_\FP$-minimal solutions. For $\DP$-membership, we show that given a~PIT $\eta$ and an $\FP$ AP~$\Pmbb$, there are only polynomially many ‘next weakest’ terms that we need to check. If none of them is a~sufficient solution to~$\Pmbb$, but $\eta$~is, then $\eta$~is a~$\models_\FP$-minimal solution. These ‘next weakest’ PITs can be built as shown in Example~\ref{example:probabilities4} where we transformed~$\eta_\mathsf{rain}$ into~$\eta^{\min}_\mathsf{rain}$.
\begin{restatable}{theorem}{FPminimalrecognitionDP}\label{theorem:FPminimalrecognitionDP}
Given an $\FP$ AP $\Pmbb=\langle\Gamma,\delta,\Hmsf,\Vmc\rangle$ and a~PIT~$\eta$, it is $\DP$-complete to check whether it is an~$\models_\FP$-minimal solution to~$\Pmbb$.
\end{restatable}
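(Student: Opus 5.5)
We prove the two directions separately, following the outline in the text preceding the statement.

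\emph{$\DP$-membership.} We express the property as the intersection of a problem in $\np$ and a problem in $\conp$. By Theorem~\ref{theorem:sufficientsolutionrecognitionDP}, checking that $\eta$ is a sufficient solution splits into an $\np$ check ($\Gamma,\eta\not\models_\FP\bot$) and a $\conp$ check ($\Gamma,\eta\models_\FP\delta$). For minimality, we define a polynomial set $\mathsf{W}(\eta)$ of ``next weakest'' PITs, in the style of Example~\ref{example:probabilities4}. A candidate $\zeta\in\mathsf{W}(\eta)$ is obtained from $\eta$ by one of the following moves:
\begin{enumerate}[noitemsep,topsep=0pt]
\item deleting a PIL;
\item relaxing one bound $\Prob(\tau)\diamond\cvalue$ to the adjacent value in $\Vmc$, or switching between strict and non-strict inequality;
\item replacing the event $\tau$ in a PIL by another hypothesis $\tau'\in\Hmsf$ such that the replacement is weaker, e.g.\ $\Prob(\tau)\leq\cvalue$ becomes $\Prob(\tau')\leq\cvalue$ when $\tau'\models_\CPL\tau$.
\end{enumerate}
Since $\tau,\tau'$ are terms, the test $\tau'\models_\CPL\tau$ is polynomial, so $|\mathsf{W}(\eta)|$ is polynomial in $|\Hmsf|\cdot|\Vmc|\cdot|\eta|$. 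If $\Vmc$ is given by $n$ in binary, we use binary-encoded neighbours instead.

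The key lemma: if some sufficient solution $\zeta'$ satisfies $\eta\models_\FP\zeta'$ and $\zeta'\not\models_\FP\eta$, then some $\zeta\in\mathsf{W}(\eta)$ is a sufficient solution. To prove it, we walk from $\eta$ towards $\zeta'$ by single moves, using that each $\zeta\in\mathsf{W}(\eta)$ satisfies $\eta\models_\FP\zeta\models_\FP\zeta'$. Then $\Gamma,\zeta\models_\FP\delta$ follows from $\Gamma,\zeta'\models_\FP\delta$, and consistency follows from $\eta\models_\FP\zeta$. Note that $\eta\models_\FP\zeta$ holds automatically for all moves above. So minimality of a sufficient solution $\eta$ reduces to checking that no $\zeta\in\mathsf{W}(\eta)$ with $\zeta\not\models_\FP\eta$ satisfies $\Gamma,\zeta\models_\FP\delta$.

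This gives the following procedure. The $\np$ part guesses, in one certificate, a model of $\Gamma\cup\{\eta\}$, and for each $\zeta\in\mathsf{W}(\eta)$ both a model witnessing $\zeta\not\models_\FP\eta$ and a model of $\Gamma\cup\{\zeta\}$ falsifying $\delta$. Here we use the small-model property of \cite{HajekTulipani2001}, and the certificates are polynomial. The $\conp$ part checks $\Gamma,\eta\models_\FP\delta$. A candidate $\zeta$ that is not strictly weaker, or not a solution, is covered by one of these two witnesses.

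\emph{$\DP$-hardness.} We reduce from recognition of entailment-minimal solutions of $\Luk$-abduction problems \cite[Theorem~3]{InoueKozhemiachenko2025}. Given $\langle\Phi,\chi,\Hmsf\rangle$ and an interval term $\tau$, we map it to $\langle\Phi^\Prob,\chi^\Prob,\Hmsf',\Vmc\rangle$ and $\tau^\Prob$. Here $\Hmsf'$ is the set of atomic events of $\Hmsf$, and $\Vmc$ is a grid containing all constants of the instance, with common denominator their lcm, written in binary. By Proposition~\ref{prop:LukisfragmentofFPLuk}, solutions correspond one-to-one, and entailment among interval terms transfers.

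One subtlety is that $\FP$ solutions may use any value in $\Vmc$, whereas $\Luk$ solutions use only the given interval literals of $\Hmsf$. We must show that extra grid values cannot yield a strictly weaker solution that has no $\Luk$ counterpart. I would try to handle this by choosing $\Vmc$ so that PILs over atoms can express only the constants of $\Hmsf$. If that fails, I would encode the hypothesis intervals via fresh auxiliary variables $h_i$ with $\Gamma$ containing $\Prob(h_i)\leftrightarrow(p\diamond\cvalue)^\Prob$-style constraints.

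The main obstacle is proving the next-weakest lemma. Entailment between PITs is $\conp$-complete (Proposition~\ref{prop:PITcomplexity}), so we cannot argue purely syntactically. We must show that every strictly weaker PIT over $\Hmsf$ and $\Vmc$ is entailed by some single-step weakening in $\mathsf{W}(\eta)$.
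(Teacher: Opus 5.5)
\textbf{Membership.} Your membership argument follows the paper's route: polynomially many one-step weakenings of $\eta$, none of which may be a solution. But the lemma you flag as the main obstacle is not merely unproved, it is false. The reason is that $\models_\FP$ between PITs is not PIL-wise once hypotheses share variables.

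Take $\Gamma=\varnothing$, $\delta=\Prob(p\wedge q)\geq\overline{0.2}$, $\Hmsf=\{p,q,p\wedge q\}$, $\Vmc=\{0,\tfrac{1}{10},\ldots,1\}$ and $\eta=\Prob(p)\geq\overline{0.6}\odot\Prob(q)\geq\overline{0.6}$.
\begin{itemize}
\item Since $\mu(\|p\wedge q\|)\geq\mu(\|p\|)+\mu(\|q\|)-1$, $\eta$ is a sufficient solution.
\item So is $\zeta'=\Prob(p\wedge q)\geq\overline{0.2}$, with $\eta\models_\FP\zeta'$ and $\zeta'\not\models_\FP\eta$ (put mass $0.2$ on $\{p,q\}$ and $0.8$ on $\varnothing$). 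Hence $\eta$ is not $\models_\FP$-minimal.
\item Yet no one-step weakening is a solution. Deleting a PIL leaves $\Prob(p\wedge q)$ unconstrained. Relaxing to $\Prob(p)>\overline{0.5}$ or $\Prob(p)\geq\overline{0.5}$ admits masses $0.15,0.4,0.45,0$ on $\{p,q\},\{p\},\{q\},\varnothing$, giving $\mu(\|p\wedge q\|)=0.15$; the case of $q$ is symmetric. No hypothesis is classically weaker than $p$ or $q$, so there is no event replacement.
\end{itemize}
So no single-move walk from $\eta$ towards $\zeta'$ exists, and your procedure would accept a non-minimal $\eta$.

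The paper's proof asserts exactly this sufficiency. It cites its lemma that nothing lies strictly between a single PIL and its next weakenings, but that per-PIL fact does not lift to PITs. In $\Luk$ it works only because distinct atoms are independent. The same example defeats the paper's candidate set, so an upper bound needs a different idea, not a proof of your lemma.

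Independently, your certificate has the polarity reversed. A model witnessing $\zeta\not\models_\FP\eta$ shows that $\zeta$ \emph{is} strictly weaker. What must hold for each candidate is $\zeta\models_\FP\eta$ (a $\conp$ fact) or $\Gamma,\zeta\not\models_\FP\delta$, and this cannot be packed into one $\np$ certificate. Nor can you assume every candidate is strictly weaker: for $\Prob(p)\geq\overline{0.5}\odot\Prob(p)\geq\overline{0.2}$, weakening the second PIL yields an equivalent PIT.

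\textbf{Hardness.} You start from the same source problem as the paper. However, the claim that solutions correspond one-to-one is exactly what fails, and neither proposed fix repairs it.
\begin{itemize}
\item Every $\Vmc$ contains $0$ and $1$, and PILs may use all four relations. So the $\FP$ problem always offers PILs (strict variants, opposite polarity, other grid values) with no counterpart among the $\Luk$ hypotheses. A $\models_\Luk$-minimal $\tau$ may then have strictly weaker $\FP$ solutions.
\item Auxiliary $h_i$ do not help by themselves, since $\Prob(h_i)\leq\zero$, i.e.\ the negated interval literal, remains available.
\item You also ignore that $\models_\Luk$-minimality ranges over \emph{proper} solutions only, while $\models_\FP$-minimality has no such condition.
\end{itemize}

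The missing idea, used in the paper, is to reduce from a restricted class already known to be $\DP$-hard:
\begin{itemize}
\item the theory contains $p\vee\neg p$ for every variable, forcing atomic probabilities into $\{0,1\}$;
\item $\Hmsf$ contains both $p\geq\one$ and $p\leq\zero$ for each variable of $\chi$;
\item $\Vmc=\{0,1\}$;
\item the observation is $\chi\odot q$ with $q$ fresh, asserted in the theory and not a hypothesis, which makes every solution proper.
\end{itemize}
Then, modulo $\Gamma$, each hypothesis PIL is trivial or equivalent to $\Prob(p)\geq\one$ or $\Prob(p)\leq\zero$. Since events are atoms, PIT entailment reduces to per-variable interval containment.

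One further point the paper's forward direction glosses over: minimality uses plain $\models_\FP$, not entailment modulo $\Gamma$. So $\tau^\Prob$ is never minimal if it contains $\Prob(p)\geq\one$, because $\Prob(p)>\zero$ gives a strictly weaker solution. The image of $\tau$ must be its weakest form, using $\Prob(p)>\zero$ and $\Prob(r)<\one$ in place of $\Prob(p)\geq\one$ and $\Prob(r)\leq\zero$.
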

\section{Abduction in Fragments of~$\FP$\label{sec:fragments}}
As we saw in the previous section, abduction in $\FP$ is intractable in the general case (especially when it comes to \emph{sufficient solutions}). Thus, it makes sense to consider fragments of~$\FP$ where abductive reasoning is polynomially decidable or at least simpler than in the general case.

In this section, we will consider two fragments of $\FP$ that generalise the \emph{simple clause fragment} of Łukasiewicz logic~\cite{BofillManyaVidalVillaret2019} and the \emph{cover-free fragment} of~$\Lukinttriangle$~\cite{InoueKozhemiachenko2025}. Note, however, that we need to restrict not only the shape of outer formulas, but also the shape of inner formulas. Indeed, as we noted in Section~\ref{sec:FPabduction}, satisfiability of a~single formula $\Prob(\phi)$ or $\Prob(\phi)\diamond\cvalue$ is $\np$-hard if $\phi$ is arbitrary. Moreover, the events should be syntactically simple, yet expressive enough to reason about probabilities of non-trivial events. The next definition presents such a~restriction.
\begin{definition}[Chained inner-positive theories]\label{def:positivechains}~
\begin{itemize}[noitemsep,topsep=0pt]
\item $\Phi\subseteq\LCPL$ is \emph{chained positive} (CP) iff
\begin{itemize}[noitemsep,topsep=0pt]
\item $\pi=\bigwedge^m_{i=1}p_i$ or $\pi=\bigvee^n_{i=1}q_i$ for every $\pi\in\Phi$;
\item for every $\pi,\sigma\in\Phi$, it holds that either $\pi\models_\CPL\sigma$, or $\sigma\models_\CPL\pi$, or $\Var(\pi)\cap\Var(\sigma)=\varnothing$.
\end{itemize}
\item $\Gamma\subseteq\LProbint$ is \emph{chained inner-positive} (CIP) if $\Emc[\Gamma]$ is chained positive.
\end{itemize}
\end{definition}
\begin{example}\label{example:positivechains}
Recall the theory $\Gamma_\mathsf{rain}$ from Example~\ref{example:probabilities2}. Observe that it is \emph{not CIP}. Indeed, $\Emc[\Gamma]$ contains $c\vee r$, $c$, and~$r$. Clearly, $c\models_\CPL c\vee r$, $r\models_\CPL c\vee r$, but $c\not\models_\CPL r$ and $r\not\models_\CPL c$. However $\Gamma'_\mathsf{rain}\cup\{\delta_\mathsf{rain}\}$ is CIP for $\Gamma'_\mathsf{rain}=\Gamma_\mathsf{rain}\setminus\{\neg\triangle(\Prob(r)\rightarrow\Prob(s))\}$. Indeed, we have two ‘chains’ of events in~$\Gamma'_\mathsf{rain}$: $s\wedge w\models_\CPL w$ and $c\wedge r\models_\CPL c\models_\CPL c\vee r$. Note, however, that $\eta_\mathsf{rain}$ (cf.~Example~\ref{example:probabilities3}) is \emph{not} a~solution to $\Pmbb'_\mathsf{rain}=\langle\Gamma'_\mathsf{rain},\delta_\mathsf{rain},\Hmsf_\mathsf{rain},\Vmc_\mathsf{rain}\rangle$. In fact, since $r\notin\Emc[\Gamma'_\mathsf{rain}]$, $\Pmbb'_\mathsf{rain}$ \emph{has no sufficient solutions}. Still, $\theta_\mathsf{rain}$ is a~full solution to~$\Pmbb'_\mathsf{rain}$.
\end{example}

An important property of CIP theories is that they can be translated into~$\LLukint$ in polynomial time only. In fact, given a~CIP theory~$\Gamma$, it suffices to construct an $\LLukint$ theory $\Psi_\Gamma$ that encodes (classical) entailment between events in~$\Gamma$. In this case, an $\Luk$-valuation satisfying~$\Psi_\Gamma$ will induce a~map coherent with a~probability measure on $2^{2^{\Var[\Gamma]}}$.

The next definition is adapted from~\cite{deFinetti2017}.
\begin{definition}\label{def:coherence}
Let $\Embb\subseteq\LCPL$ be finite, $\pfrak:\Embb\rightarrow[0,1]$, and $\Mfrak=\langle2^{\Var[\Embb]},\mu\rangle$ be a~probabilistic model. We say that \emph{$\mu$~is coherent with~$\pfrak$} iff $\mu(\|\phi\|_\Mfrak)=\pfrak(\phi)$ for all $\phi\in\Embb$. The map $\pfrak:\Embb\rightarrow[0,1]$ is \emph{coherent} if there exists a~probabilistic model $\Mfrak=\langle2^{\Var[\Embb]},\mu\rangle$ such that $\mu$ is coherent with~$\pfrak$.
\end{definition}
\begin{restatable}{proposition}{simpletranslation}\label{prop:simpletranslation}
For every finite CIP theory $\Gamma\subseteq\LProbint$, there is an $\LLukint$-theory $\Psi_\Gamma$ s.t.\
\begin{enumerate}[noitemsep,topsep=0pt]
\item $\Psi_\Gamma$ is $\Luk$-satisfiable iff $\Gamma$ is $\FP$-satisfiable;
\item $\Psi_\Gamma$ is constructible in polynomial time w.r.t.\ the size of~$\Gamma$.
\end{enumerate}
\end{restatable}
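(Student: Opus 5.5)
The plan is to build $\Psi_\Gamma$ by replacing each probabilistic atom $\Prob(\pi)$, $\pi\in\Emc[\Gamma]$, with a fresh propositional variable $x_\pi$, and then adding polynomially many $\LLukint$-constraints that force the map $\pi\mapsto v(x_\pi)$ to be coherent in the sense of Definition~\ref{def:coherence}. Concretely, set $\Psi_\Gamma=\{\gamma^\star\mid\gamma\in\Gamma\}\cup\Psi_{\mathsf{coh}}$. Here $\gamma^\star$ is $\gamma$ with $\Prob(\pi)$ replaced by $x_\pi$ and $\Prob(\pi)\diamond\cvalue$ replaced by $x_\pi\diamond\cvalue$; this is the inverse of the map in Definition~\ref{def:FPLuktoLukembedding}. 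The satisfiability claim then reduces to the following lemma: a map $\pfrak:\Emc[\Gamma]\to[0,1]$ is coherent iff it satisfies $\Psi_{\mathsf{coh}}$. Given the lemma, both directions of item~1 are immediate. From $\Mfrak\models\Gamma$ we set $v(x_\pi)=\mu(\|\pi\|_\Mfrak)$. Conversely, from $v\models\Psi_\Gamma$ we take a coherent $\mu$, and the truth values of the outer formulas coincide, because the $\Luk$-connectives are evaluated identically in both semantics.

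For $\Psi_{\mathsf{coh}}$ I will use the structure of CP sets. The relation ``$\pi\models_\CPL\sigma$ or $\sigma\models_\CPL\pi$ or disjoint variables'' means the events split into groups with pairwise disjoint variable sets. Within each group, events are linearly ordered by entailment.

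The independence claim for groups needs a short justification. Suppose $\pi$ and $\sigma$ share a variable, and $\sigma$ and $\rho$ share a variable. Then $\pi,\sigma$ are comparable and $\sigma,\rho$ are comparable. Since every event is a positive conjunction or a positive disjunction, comparability forces inclusion relations on variable sets. One then checks that $\pi$ and $\rho$ must share a variable, hence are comparable, so the relation is transitive and the groups are chains.

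Within a chain $\pi_1\models\pi_2\models\dots\models\pi_k$, the constraints will be $x_{\pi_i}\rightarrow x_{\pi_{i+1}}$. I will add extra constraints for logically forced values: if some $\pi_i$ is valid (impossible for positive formulas unless degenerate) or equivalent to $\pi_j$, I add $x_{\pi_i}\leftrightarrow x_{\pi_j}$. Deciding $\pi\models_\CPL\sigma$ for positive conjunctions and disjunctions is a syntactic inclusion check, so the construction is polynomial, which gives item~2.

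The main obstacle is the ``if'' direction of the lemma: building a measure from a monotone assignment. Within one chain with values $a_1\le\dots\le a_k$, I will pick nested sets of worlds. Write the chain as conjunctions $\bigwedge V_1\models\dots\models\bigwedge V_j$ with $V_1\supseteq\dots\supseteq V_j$, followed by disjunctions with growing variable sets. Consider worlds of the form ``exactly the variables in a prefix-style set are true''. Assign mass $a_1$ to the world containing all variables of the chain. Assign mass $a_{i+1}-a_i$ to a world making $\pi_{i+1}$ true but $\pi_i$ false; such a world exists because strict entailment between distinct positive formulas of this shape is witnessed by a positive assignment. The remaining mass $1-a_k$ goes to the empty world, which falsifies every positive event. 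Each $\pi_i$ then receives exactly $a_i$. I need to verify that the chosen witness worlds are monotone, so that the world for step $i+1$ also satisfies $\pi_{i+2},\dots$; choosing them as increasing-downward sets of true variables should achieve this.

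Across chains, since the variable sets are disjoint, I will take the product measure of the per-chain measures. This preserves each marginal value, so the resulting measure is coherent with $\pfrak$. The ``only if'' direction is just monotonicity of measures.
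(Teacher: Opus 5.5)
\textbf{There is a genuine gap} at the step you wave through with ``one then checks''. Under Definition~\ref{def:positivechains} as stated, sharing a variable is \emph{not} transitive among CP events. Take $\pi=a$, $\sigma=a\wedge c$, $\rho=c$. Then $a\wedge c\models_\CPL a$, $a\wedge c\models_\CPL c$ and $\Var(a)\cap\Var(c)=\varnothing$, so $\{a,c,a\wedge c\}$ is CP. Yet $a$ and $c$ are incomparable and share no variable. (Comparability does not force inclusion of variable sets either: $a\wedge b\models_\CPL b\vee c$.)

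This is not just a broken sub-argument. The lemma you reduce item~1 to is false on this set. The map $f(a)=f(c)=1$, $f(a\wedge c)=0$ satisfies every implication $x_\pi\rightarrow x_\sigma$ with $\pi\models_\CPL\sigma$. It is nevertheless not coherent, since $\mu(\|a\|)=\mu(\|c\|)=1$ forces $\mu(\|a\wedge c\|)=1$. Concretely, $\Gamma=\{\Prob(a){\geq}\one,\Prob(c){\geq}\one,\Prob(a\wedge c){\leq}\zero\}$ is CIP and $\FP$-unsatisfiable, while your $\Psi_\Gamma$ is satisfied by $x_a=x_c=1$, $x_{a\wedge c}=0$. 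The set $\{a,c,a\vee c\}$ with $f(a)=f(c)=0$, $f(a\vee c)=1$ fails the same way, via subadditivity.

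Your construction is the paper's: $\Psi_\Gamma$ is $\Gamma^\uparrow$ plus $p_\phi\rightarrow p_\chi$ for $\phi\models_\CPL\chi$, followed by Lemma~\ref{lemma:monotoneassignmentCIP}. That lemma has the same hole. It simply posits $\Upsilon=\biguplus_i\Theta_i$ with each $\Theta_i$ a $\models_\CPL$-chain and $\Var[\Theta_i]\cap\Var[\Theta_j]=\varnothing$.

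The paper's examples show this decomposition is what CP is meant to express. Example~\ref{example:positivechains} rejects $\{c,r,c\vee r\}$, and the remark after Definition~\ref{def:PSCabduction} rejects $\{a,b,a\wedge b\}$, although both satisfy the literal definition. So the fix is not to prove the decomposition but to build it into the definition: any two events linked by a chain of variable-sharing events must be $\models_\CPL$-comparable. If you keep the literal definition, chain implications are not enough. You would need at least extra Fr\'echet-type constraints such as $(x_a\odot x_c)\rightarrow x_{a\wedge c}$ and $x_{a\vee c}\rightarrow(x_a\oplus x_c)$, together with a genuinely new coherence argument.

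With the intended reading, the rest of your plan works and is somewhat tidier than the paper's.

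\begin{itemize}
\item Your worry about monotone witnesses is unnecessary. Any world satisfying $\pi_{i+1}\wedge\neg\pi_i$ satisfies every later chain member and falsifies every earlier one, simply because the chain is ordered by $\models_\CPL$.
\item The empty world falsifies every positive event, so it can absorb the mass $1-a_k$.
\item The product of the per-chain measures has the right marginals because the chains live on disjoint variables.
\end{itemize}

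The paper instead ranks all events globally by value and places the increments on one nested sequence of worlds. That gives a distribution with at most $|\Upsilon|+1$ support points, whereas your product measure may have exponential support. This is harmless here, since the proposition only needs existence.
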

\subsection{Simple Clause Fragment\label{ssec:PSC}}
Let us now consider the probabilistic simple clause fragment of~$\FP$. Recall from~\cite{BofillManyaVidalVillaret2019} that (propositional) simple clauses are formulas of the form $\bigoplus^n_{i=1}l_i$ with $l_i$'s being propositional variables and their negations. As one can check from Definition~\ref{def:Luksemantics}, $\bigoplus^n_{i=1}l_i$ can be equivalently represented as $\bigodot^{k}_{i=1}\neg l_i\rightarrow\bigoplus^n_{j=n-k}l_j$ for every $k\leq n$. Thus, simple clauses can be used to express fuzzy (disjunctive) rules. Moreover, in contrast to classical disjunctive clauses, $\Luk$-satisfiability of sets of simple clauses is \emph{polynomially decidable}, whence, simple-clause abduction turns out to be $\np$-complete~\cite{InoueKozhemiachenko2025}, while classical abduction in theories represented as sets of disjunctive clauses is $\Sigma^\Pmsf_2$-hard in general~\cite{EiterGottlob1995}.

Thus, in the context of \emph{probabilistic} abduction, it makes sense to consider an $\FP$-counterpart of simple clauses. The following definitions present \emph{probabilistic} simple clauses (PSCs) and PSC abduction problems.
\begin{definition}[Probabilistic simple clauses]\label{def:PSC}~
\begin{itemize}[noitemsep,topsep=0pt]
\item A~\emph{probabilistic simple clause} (PSC) is a~formula of the form $\bigoplus^m_{i=1}\Prob(\sigma_i)\oplus\bigoplus^n_{i=1}\neg\Prob(\tau_i)$ with $\sigma$ and $\tau$ being conjunctions or disjunctions of variables.
\item A~\emph{PSC theory} is a~CIP theory $\Gamma$ s.t.\ every $\gamma\in\Gamma$ is either a~PSC or has a~form $\Prob(\tau)\diamond\cvalue$ with $\tau$ being a~conjunction or a~disjunction of variables.
\end{itemize}
\end{definition}

Let us illustrate the expressivity of PSC theories with an example. Note, in particular, that we can formalise some statements about probabilities of events containing~$\neg$. Moreover, formulas $\Prob(\tau)\diamond\cvalue$ \emph{do not need to be PILs}.
\begin{example}\label{example:PSCtheory}
Consider the following statements:
\begin{description}[noitemsep,topsep=0pt]
\item[$\alpha$:] the probability of precipitation (rain, snow, or hailstorm) tomorrow is $70\%$;
\item[$\beta$:] the probability of the rain tomorrow is at least as high as the probability of neither snow nor rain happening tomorrow.
\end{description}
We set $\alpha=\alpha_1\odot\alpha_2$ with $\alpha_1=\Prob(r\vee s\vee h){\geq}\overline{0.7}$ and $\alpha_2=\Prob(r\vee s\vee h){\leq}\overline{0.7}$ and $\beta=\Prob(\neg r\wedge\neg s)\rightarrow\Prob(r)$. In its current form, $\{\alpha,\beta\}$ is not a~PSC theory because $\Emc[\{\alpha,\beta\}]$ is not chained positive. Note, however, that by Definition~\ref{def:FPLuksemantics}, $\Prob(\neg\phi)$ and $\neg\Prob(\phi)$ are equivalent. Thus, $\beta$ is equivalent to $\beta'=\Prob(r\vee s)\oplus\Prob(r)$ and $\{\alpha_1,\alpha_2,\beta'\}$ is a~PSC theory.
\end{example}
\begin{definition}\label{def:PSCabduction}
A~\emph{PSC abduction problem} is an $\FP$ AP $\Pmbb=\langle\Gamma,\delta,\Hmsf,\Vmc\rangle$ s.t.\ $\Gamma\cup\{\delta\}$ is a~PSC theory and $\Emc[\Pmbb]$ is CP.
\end{definition}
Note that the observation of a~PSC AP can have various types, depending on the constraints we want to explain using the theory. Furthermore, we need to list $\LCPL$-terms in~$\Hmsf$ and cannot allow arbitrary conjunctions of hypotheses: while $\{a,b\}$ is CP, $\{a,b,a\wedge b\}$ is not. Defining~$\Hmsf$ as a~set of terms (not literals) and demanding that $\Emc[\Pmbb]$ be chained positive ensures that given a~\emph{sufficient} solution $\eta$ for $\Pmbb$, $\Gamma\cup\{\eta,\delta\}$ is a~CIP theory. We can use this together with Propositions~\ref{prop:LukisfragmentofFPLuk} and~\ref{prop:simpletranslation} to show the following statement.
\begin{restatable}{theorem}{PSCrecognitionPcomplete}\label{theorem:PSCrecognitionPcomplete}
Given a~PSC AP $\Pmbb=\langle\Gamma,\delta,\Hmsf,\Vmc\rangle$ and a~PIT~$\eta$, it is $\Pmsf$-complete under logspace reductions to check if $\eta$ is an ($\FP$-minimal) sufficient solution of~$\Pmbb$.
\end{restatable}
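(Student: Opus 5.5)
We will reduce the problem to propositional simple-clause abduction in Łukasiewicz logic, for which recognition of (entailment-minimal) solutions is known to be $\Pmsf$-complete~\cite{InoueKozhemiachenko2025}, and then transfer both bounds.

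\emph{Membership.} Let $\eta$ be a~PIT with $\Emc(\eta)\subseteq\Hmsf$. Because $\Emc[\Pmbb]$ is CP, $\Gamma\cup\{\eta\}$ and $\Gamma\cup\{\eta,\neg\triangle\delta\}$ are CIP theories. The latter is the natural witness set for non-entailment, and we will have to check that adding $\neg\triangle\delta$ stays inside a~tractable fragment; this is discussed below.

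Apply the translation of Proposition~\ref{prop:simpletranslation}. This produces, in polynomial time (indeed logspace, since the chain structure of $\Emc[\Pmbb]$ is read off pairwise), a~theory $\Psi$ with the following shape.
\begin{enumerate}[noitemsep,topsep=0pt]
\item One fresh variable $x_\phi$ for each event $\phi$.
\item Coherence constraints forming a~chain: $x_\phi\rightarrow x_\chi$ when $\phi\models_\CPL\chi$. For conjunctions or disjunctions of pairwise-disjoint variable sets, this is accompanied by Fréchet-type bounds, which are themselves simple clauses of the form $\neg x\oplus y$.
\item The images of $\Gamma$ obtained by replacing $\Prob(\phi)$ with $x_\phi$. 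By Definition~\ref{def:PSC}, these are simple clauses or interval literals.
\end{enumerate}
The PIT $\eta$ becomes an interval term, and hence a~set of interval literals.

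$\FP$-consistency of $\Gamma,\eta$ is then $\Luk$-satisfiability of a~set of simple clauses plus interval literals, which is decidable in polynomial time~\cite{BofillManyaVidalVillaret2019,InoueKozhemiachenko2025}.

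For entailment we case-split on the shape of $\delta$.
\begin{itemize}[noitemsep,topsep=0pt]
\item If $\delta$ is $\Prob(\tau)\diamond\cvalue$, then $\neg\triangle\delta$ is again an interval literal, namely the complementary one.
\item If $\delta$ is a~PSC $\bigoplus_i l_i$, then $\Gamma,\eta\models_\FP\delta$ fails iff there is a~model with $\sum_i v(l_i)<1$. This is a~single linear constraint over the chained variables. It is handled by the standard simple-clause argument: test entailment of a~simple clause from simple clauses by checking satisfiability after negating it, which only yields linear strict inequalities solvable by LP in polynomial time.
\end{itemize}

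For $\models_\FP$-minimality we use the ``next weakest'' PITs from the proof of Theorem~\ref{theorem:FPminimalrecognitionDP}. There are polynomially many of them: drop a~literal, enlarge an interval by one step in $\Vmc$, or replace $\tau$ by a~weaker hypothesis in its chain. Each is tested with the polynomial procedure above, so the whole check runs in $\Pmsf$.

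\emph{Hardness.} We reduce from solution recognition (respectively, minimal-solution recognition) for simple-clause $\Luk$-abduction, which is $\Pmsf$-hard under logspace reductions. The reduction is the map $(\cdot)^\Prob$ of Definition~\ref{def:FPLuktoLukembedding}, applied with distinct atomic events, with $\Vmc$ chosen to contain all constants occurring in the instance. This map is clearly logspace. Atomic events with distinct variables form a~CP set, so the image is a~PSC AP. By Proposition~\ref{prop:LukisfragmentofFPLuk}, it preserves entailment and consistency, and therefore preserves both solutions and minimality.

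\emph{Main obstacle.} The main obstacle is ensuring that the translated entailment test, together with the coherence constraints of Proposition~\ref{prop:simpletranslation}, really stays within a~polynomially decidable class. The translation must introduce only simple clauses or linear constraints, with no $\triangle$ or general $\odot$ mixing. We would verify this first by inspecting the shape of $\Psi_\Gamma$ for conjunction chains and for disjunction chains separately.
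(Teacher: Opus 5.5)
Your membership half is sound and follows the paper's argument. CP-ness of $\Emc[\Pmbb]$ makes $\Gamma\cup\{\eta\}$ and $\Gamma\cup\{\eta,\neg\triangle\delta\}$ CIP, and Proposition~\ref{prop:simpletranslation} turns them into simple clauses and interval literals, plus one strict linear constraint when $\delta$ is a PSC. The paper then reduces to recognition for SC APs, whereas you decide the translated constraints directly by LP; either works. Minimality is handled by the paper's polynomially many ``next weakest'' candidates. One correction there: for upper-bound PILs the event is replaced by a next \emph{stronger} term, not a weaker one. Your ``main obstacle'' is already settled by Lemma~\ref{lemma:monotoneassignmentCIP}: for CP event sets the clauses $\neg p_\phi\oplus p_\chi$ alone capture coherence, so no Fr\'echet-type bounds are needed.

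The problems are in the hardness half. First, the image of an arbitrary SC AP under $(\cdot)^\Prob$ is not a PSC AP. SC APs allow interval clauses, interval terms and simple terms as observations, and their images are neither PSCs nor single formulas $\Prob(\tau)\diamond\cvalue$; checking that atomic events form a CP set is not enough. You must start, as the paper does, from the restricted $\Pmsf$-hard class: a simple-clause theory, observation a variable $p\notin\Var[\Hmsf]$, hypotheses only $q{\geq}\one$ and $r{\leq}\zero$, and $\Vmc'=\{0,1\}$. You also need to output a fixed no-instance when $\tau$ is not built from the given hypotheses, since $\tau^\Prob$ may still be an admissible FP candidate.

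Second, and this is the real gap, ``Proposition~\ref{prop:LukisfragmentofFPLuk} preserves entailment and consistency, hence minimality'' does not follow. Proposition~\ref{prop:LukisfragmentofFPLuk} transfers entailment and consistency term by term. However, the candidate space of $\Pmbb^\Prob$ corresponds to the $\Luk$-problem with the enlarged hypothesis set $\{q\diamond\overline{c}\mid q\in\Var[\Hmsf],c\in\Vmc\}$, not to the original $\Hmsf$. So $\models_\Luk$-minimality of $\tau$ does not give $\models_\FP$-minimality of $\tau^\Prob$. For a hypothesis $q{\geq}\one$, the FP problem also admits $\Prob(q){>}\zero$ and $\Prob(q){\geq}\overline{c}$ for every $c\in\Vmc$. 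Your choice of $\Vmc$ containing all constants of the instance makes this set larger. Any of these could be a strictly weaker sufficient solution with no counterpart in the source problem. In addition, $\models_\Luk$-minimality ranges over proper solutions only (Definition~\ref{def:Lukminsolutions}), while $\models_\FP$-minimality does not; taking $p\notin\Var[\Hmsf]$ is what aligns the two.

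The paper closes this by normalising an arbitrary FP candidate back via $\eta^\uparrow$ as in Theorem~\ref{theorem:FPminimalrecognitionDP}. The device used there, forcing classical values with $p\vee\neg p$, is unavailable for PSC theories, so you need another justification. Convexity works. The valuations satisfying simple clauses and interval literals form a convex set. Suppose $\tau=\rho\odot(q{\geq}\one)$ is a solution and $\rho\odot(q{>}\overline{c})$ or $\rho\odot(q{\geq}\overline{c})$ with $c<1$ entails $p$. Mixing a model of $\Phi,\tau$ with any model of $\Phi,\rho$ in which $p<1$, with weight close to the former, gives a model of the weakened term with $p<1$. Hence $\Phi,\rho$ already entails $p$, contradicting $\models_\Luk$-minimality of $\tau$. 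The case $r{\leq}\zero$ is dual, and this holds modulo the degenerate case where $\rho$ is empty. Some argument of this kind is needed for the minimal-solution case.
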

As expected from the previous statement, determining the existence of sufficient solutions is $\np$-complete. Somewhat surprisingly, the existence of concise \emph{full solutions} is also $\np$-complete as in the general case. This is because we can reduce classical satisfiability of CNFs to the existence of full solutions to PSC APs.
\begin{restatable}{theorem}{PSCsolutionexistence}\label{theorem:PSCsolutionexistence}
Given a~PSC AP $\Pmbb=\langle\Gamma,\delta,\Hmsf,\Vmc\rangle$, it is:
\begin{enumerate}[noitemsep,topsep=0pt]
\item $\np$-complete to check if it has \emph{sufficient} solutions;
\item $\np$-complete to check if it has \emph{concise full} solutions.
\end{enumerate}
\end{restatable}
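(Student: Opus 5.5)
For membership in item~1, we follow a guess-and-verify approach. A sufficient solution $\eta$ uses only events from $\Hmsf$ and constants from $\Vmc$. We may assume each PIL occurs at most once per event and direction, since repeated bounds can be replaced by the strongest one without changing $\eta$'s models. So $\eta$ is polynomial in $|\Pmbb|$ and can be guessed in $\np$. By Theorem~\ref{theorem:PSCrecognitionPcomplete}, checking that the guessed $\eta$ is a~sufficient solution takes polynomial time. The key fact behind this is that $\Emc[\Pmbb]$ is CP and $\Emc(\eta)\subseteq\Hmsf$, so $\Gamma\cup\{\eta,\delta\}$ is CIP. Proposition~\ref{prop:simpletranslation} then reduces both $\Gamma,\eta\not\models_\FP\bot$ and $\Gamma,\eta\models_\FP\delta$ to $\Luk$-(un)satisfiability of polynomially many simple-clause-like constraints. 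Here $\delta$ is a~PSC or an atom $\Prob(\tau)\diamond\cvalue$, so its negation $\neg\triangle\delta$ still yields a~set of linear constraints.

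For hardness in item~1, we reduce from $\Luk$-abduction. By \cite{InoueKozhemiachenko2025}, solution existence for simple-clause $\Luk$-abduction problems is $\np$-complete. We translate such a~problem $\langle\Phi,\chi,\Hmsf\rangle$ via Proposition~\ref{prop:LukisfragmentofFPLuk} into $\langle\Phi^\Prob,\chi^\Prob,\Hmsf',\Vmc\rangle$. In this translation, $\Hmsf'$ consists of the variables of the interval literals, and $\Vmc$ is a~grid containing all their thresholds, taking $n$ to be the lcm of the denominators. All events are then atomic, so the theory is trivially CIP and the clauses become PSCs. Interval literals $p\diamond\cvalue$ correspond exactly to PILs $\Prob(p)\diamond\cvalue$, so solutions correspond one-to-one. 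The step needing care is the hypothesis set. An $\Luk$-problem restricts hypotheses to specific interval literals, whereas here any value in $\Vmc$ and any direction is allowed. We will handle this either by citing the $\Luk$ hardness proof, whose hardness instances use literals of a~uniform shape (e.g.\ $p\geq\one$ type), or by adding theory constraints that make the unintended PILs useless or inconsistent.

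For item~2, membership follows from Theorem~\ref{theorem:fullsolutionexistenceNP}: a~concise full solution has polynomial size and is verified in polynomial time by Theorem~\ref{theorem:fullsolutionrecognitionpolynomial}. For hardness we reduce CNF-satisfiability, and the intended encoding is that the distribution concentrates on a~single world. Given clauses $C_1,\ldots,C_m$ over $p_1,\ldots,p_k$, we use fresh variables $p_i^+$ and $p_i^-$, so the inner formulas stay positive. The theory $\Gamma$ contains:
\begin{itemize}[noitemsep,topsep=0pt]
\item $\Prob(p_i^+)\oplus\Prob(p_i^-)$ together with $\neg\Prob(p_i^+)\oplus\neg\Prob(p_i^-)$, forcing the two probabilities to sum to~$1$;
\item for each clause, a PSC $\bigoplus_{l\in C_j}\Prob(l^\ast)$, where $l^\ast$ is the appropriate fresh atom.
\end{itemize}
The values are $\Vmc=\{0,1\}$, and $\delta$ is a~trivially true constraint. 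Every atom is a~single variable, so $\Gamma$ is CIP. With $\Vmc=\{0,1\}$, a~full solution is a~point mass on one world $X$. The PSCs then hold iff the assignment induced by $X$ (taking $p_i$ true iff $p_i^+\in X$) satisfies every clause. The complementarity constraints force exactly one of $p_i^\pm$ to hold in $X$. Such a~solution has $2$ PILs and is therefore concise.

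The main obstacle I expect is the hypothesis-shape mismatch in the item~1 hardness reduction.
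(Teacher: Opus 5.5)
\textbf{The gap is in the hardness half of item~1.} It is the obstacle you flag yourself, and it is not resolved. The unfinished part is the back-direction. A sufficient solution of the translated problem may use PILs that correspond to no hypothesis of the $\Luk$-problem, so a negative $\Luk$-instance can become a positive $\FP$-instance. For example, take $\Phi=\{q\oplus p\}$, observation $p$ and hypotheses $\{q{\geq}\one\}$. There is no $\Luk$-solution, yet $\Prob(q){\leq}\zero$ solves $\langle\Phi^\Prob,\Prob(p),\{q\},\{0,1\}\rangle$.

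Your grid $\Vmc$ makes this worse. It adds more unintended PILs such as $\Prob(q){\geq}\overline{\tfrac{1}{2}}$, and an lcm of denominators can make $|\Vmc|$ exponential.

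The paper takes your first option. It uses source instances whose observation is a variable $p\notin\Var[\Hmsf]$ (so $\delta=\Prob(p)$ is a PSC) and whose hypotheses have only the forms $q{\geq}\one$ and $r{\leq}\zero$. It sets $\Hmsf'=\Var[\Hmsf]$ and $\Vmc'=\{0,1\}$, and translates back via $\eta^\uparrow$ as in Theorem~\ref{theorem:sufficientsolutionrecognitionDP}. But the example above already has exactly this shape. Moreover, the equivalences that justified $\eta^\uparrow$ there came from the axioms $p\vee\neg p$, which are not simple clauses. So on either route you still owe an argument that these extra PILs create no new solutions.

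\textbf{How to close it.} Strict PILs are harmless. With atomic events, the models of $\Gamma$ plus any PILs form a convex set of value vectors cut out by half-spaces. Suppose $\eta'\odot\Prob(q){>}\zero$ is a solution. Then any countermodel of $\eta'$ must have $q=0$. Mixing it with a model of the full term gives a countermodel of the full term. Hence $\Prob(q){>}\zero$, and likewise $\Prob(q){<}\one$, can be deleted.

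Opposite-direction PILs need one of two things: either source instances that offer both $q{\geq}\one$ and $q{\leq}\zero$, or a gadget. For the gadget, replace each hypothesis $q{\geq}\one$ (resp.\ $r{\leq}\zero$) by a fresh $h$ with the PSC $\neg\Prob(h)\oplus\Prob(q)$ (resp.\ $\neg\Prob(h)\oplus\neg\Prob(r)$), and let $\Hmsf'$ consist of these $h$. Since $h$ occurs only negatively, $\Prob(h){\leq}\zero$ can always be dropped: reset $h$ to $0$ in any countermodel. Meanwhile $\Prob(h){\geq}\one$ matches the original hypothesis exactly.

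\textbf{The rest is sound.} Both membership arguments are the paper's, and your item~2 reduction is correct. It differs from the paper's only in the dual-rail variables $p_i^\pm$. The paper writes each clause directly as $\bigoplus_i\Prob(p^j_i)\oplus\bigoplus_i\neg\Prob(q^j_i)$. This is already a PSC with atomic events, because negation outside $\Prob$ is allowed; your own complementarity PSCs use it too. So the detour is unnecessary but harmless.
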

\subsection{Interval Clause Fragment\label{ssec:PIC}}
Let us now consider the probabilistic interval clause fragment of~$\FP$. Propositional interval clauses (i.e., formulas of the form $\pi_1\oplus\ldots\oplus\pi_n$ with $\pi$'s being \emph{interval literals}) were introduced by Inoue and Kozhemiachenko~\shortcite{InoueKozhemiachenko2025} to formalise statements such as ‘if the value of~$p$ is at least~$0.5$, then the value of~$q$ is at most~$0.6$’~--- $p{\geq}\overline{0.5}\rightarrow q{\leq}\overline{0.6}$ or, equivalently $p{<}\overline{0.5}\oplus q{\leq}\overline{0.6}$. In this section, we introduce the $\FP$-counterparts of interval clauses and investigate the complexity of abductive reasoning with \emph{probabilistic interval clauses}.
\begin{definition}\label{def:PIC}
A~\emph{probabilistic interval clause} (PIC) is a~formula $\bigoplus^n_{i=1}\lambda_i$ with every $\lambda$ having the form $\Prob(\phi)\diamond\cvalue$ with $\phi\in\LCPL$. A~\emph{PIC theory} is a~set of PICs.
\end{definition}

It is known from~\cite{InoueKozhemiachenko2025} that the complexity of abductive reasoning with interval-clause theories is the same as with arbitrary $\LLukint$-theories. Thus, by Proposition~\ref{prop:LukisfragmentofFPLuk}, the same would hold w.r.t.\ abductive reasoning with PIC theories. Still, the interval clause fragment of~$\Lukinttriangle$ can be restricted to the so-called \emph{cover-free} fragment, over which solution recognition was polynomial and solution existence $\np$-complete.
\begin{definition}\label{def:coverfreeintervalclauses}
Let $\Phi\subseteq\LLukint$ be a~set of interval clauses represented as $\bigodot^m_{i=1}(p_i\diamond\cvalue_i)\!\rightarrow\!(q\diamond\dvalue)$ and $\bigodot^m_{i=1}(p_i\diamond\cvalue_i)\!\rightarrow\!\bot$. $\Phi$ is \emph{cover-free} (CF) if no pair of interval literals $\pi$ and $\pi'$ over the same variable occurs in~$\Phi$ s.t.\ $\Luk\models\pi\oplus\pi'$.
\end{definition}

By Proposition~\ref{prop:LukisfragmentofFPLuk}, it would follow from~\cite[Theorem~16]{InoueKozhemiachenko2025} that solution existence is $\np$-hard for $\FP$ APs $\Pmbb=\langle\Phi^\Prob,\lambda^\Prob,\Hmsf,\Vmc\rangle$ with $\Phi$ being a~CF set of \emph{interval clauses} and $\lambda$ an interval literal. It is thus instructive to see whether we can find a~subclass of CF theories where solution existence can be determined in polynomial time. The definition below introduces \emph{short} probabilistic cover-free (SPCF) theories. We use this name to indicate that $n=1$ in the implications of the form $\bigodot^n_{i=1}\lambda_i\rightarrow\lambda$ (i.e., clauses with a~non-empty head must be \emph{binary}).
\begin{definition}\label{def:SPCF}
An SPCF theory is a~CIP theory $\Gamma$ s.t.
\begin{itemize}[noitemsep,topsep=0pt]
\item $\Gamma$ is a~set of PICs represented as $\bigodot^n_{i=1}\Prob(\pi_i)\diamond\cvalue_i\rightarrow\bot$ or $\Prob(\pi)\diamond\cvalue\rightarrow\Prob(\varrho)\diamond\dvalue$;
\item for every pair of probabilistic atoms $\lambda$ and $\lambda'$ occurring in~$\Gamma$, it holds that $\FP\not\models\lambda\oplus\lambda'$.
\end{itemize}
\end{definition}
Consider now the following example.
\begin{example}\label{example:SPCF}
Assume that $\Gamma$ contains the following relations between the frequencies of different events. If black ice~($b$) or thick fog~($f$) occurs at least $40\%$ of the time, then traffic accidents~($a$) will happen at least every other day. If traffic accidents or snowdrifts~($s$) happen at least twice a~week, then major traffic jams~($j$) will occur at least once a~week.

We can formalise~$\Gamma$ as follows: $\Prob(b{\vee}f){\geq}\overline{\tfrac{2}{5}}\rightarrow\Prob(a){\geq}\overline{\tfrac{1}{2}}$, $\Prob(a{\vee}s){\geq}\overline{\tfrac{2}{7}}\rightarrow\Prob(j){\geq}\overline{\tfrac{1}{7}}$. Now, to explain why traffic jams happen every week, we can, for example, assume that thick fog occurred at least $40\%$ of the time: $\eta=\Prob(f)\geq\overline{\tfrac{2}{5}}$. Note that $\Gamma,\eta\consvDashFPLuk\Prob(j){\geq}\overline{\tfrac{1}{7}}$ and $\Gamma\cup\{\eta,\Prob(j){\geq}\overline{\tfrac{1}{7}}\}$ is SPCF.
\end{example}
The next definition introduces the class of $\FP$ abduction problems whose theories are SPCF. Note that we demand that the set of \emph{all events} occurring in the problem be~CP.
\begin{definition}\label{def:SPCFabduction}
An~\emph{SPCF abduction problem} is an $\FP$ AP $\Pmbb=\langle\Gamma,\Prob(\chi)\diamond\dvalue,\Hmsf,\Vmc\rangle$ s.t.\ $\Gamma\cup\{\Prob(\chi)\diamond\dvalue\}$ is an SPCF theory and $\Emc[\Pmbb]$ is chained positive.
\end{definition}

The polynomial decidability of the sufficient solution existence follows from the fact that the problem of determining the existence of solutions to SPCF APs is reducible to determining the existence of solutions to classical abduction problems whose theories contain so-called $\ihsbminus$-clauses (literals~--- $p$ and $\neg q$~--- and implications of the form $p\rightarrow q$ and $\bigwedge^n_{i=1}p_i\rightarrow\bot$) proposed by Khanna et al.~\shortcite{KhannaSudanTrevisan2002}. As determining the existence of solutions to $\ihsbminus$ abduction problems is \emph{polynomially tractable}~\cite[Proposition~12]{CreignouZanuttini2006}, the existence of sufficient solutions to SPCF APs can also be decided in polynomial time. On the other hand, determining the existence of \emph{full} solutions to SPCF APs is $\np$-hard because we can construct a~reduction from $\CPL$-satisfiability of CNFs. Thus, SPCF theories manifest an unusual behaviour: it is \emph{harder} to check whether a~full solution exists than whether a~sufficient solution exists.

The intuitive reason behind this is that solutions to $\CPL$ abduction problems with $\ihsbminus$ theories and observations in the form of a~single variable can be w.l.o.g.\ assumed to contain one literal only. Thus (after translating the original problem into its $\CPL$ counterpart), it suffices to check hypotheses one by one. Moreover, for any candidate~$\eta$ for a~sufficient solution to~$\Pmbb{=}\langle\Gamma,\Prob(\chi)\diamond\dvalue,\Hmsf,\Vmc\rangle$, $\Gamma\cup\{\eta,\Prob(\chi)\diamond\dvalue\}$ is a~CIP theory. However, to check if a~full solution exists, one needs to verify whether there is a~suitable $\langle\Vmbf,\Vmc\rangle$-complete term, which is independent from~$\Hmsf$ and is not necessarily CP.
\begin{restatable}{theorem}{SPCFsolutionexistence}\label{theorem:SPCFsolutionexistence}
For a SPCF AP $\Pmbb=\langle\Gamma,\!\Prob(\chi)\diamond\dvalue,\!\Hmsf,\!\Vmc\rangle$, it is
\begin{enumerate}[noitemsep,topsep=0pt]
\item decidable in polynomial time if it has sufficient solutions;
\item $\np$-complete to check if it has concise full solutions.
\end{enumerate}
\end{restatable}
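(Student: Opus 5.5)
For part~1, the plan is to reduce sufficient solution existence for an SPCF AP to solution existence for a classical abduction problem with an $\ihsbminus$ theory, and then invoke \cite[Proposition~12]{CreignouZanuttini2006}. The key observation is that, since $\Gamma\cup\{\Prob(\chi)\diamond\dvalue\}$ is SPCF and $\Emc[\Pmbb]$ is CP, every probabilistic atom $\Prob(\pi)\diamond\cvalue$ occurring in $\Pmbb$, or in a candidate PIT over $\Hmsf$ and $\Vmc$, can be treated as a propositional variable $x_{\pi,\diamond,c}$ taking only the values $0$ and~$1$. The SPCF clauses $\Prob(\pi)\diamond\cvalue\rightarrow\Prob(\varrho)\diamond\dvalue$ become binary implications $x\rightarrow y$, and the clauses $\bigodot^n_{i=1}\Prob(\pi_i)\diamond\cvalue_i\rightarrow\bot$ become negative clauses $\bigwedge_i x_i\rightarrow\bot$. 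To these I add polynomially many \emph{coherence constraints}, which encode exactly when a set of such atoms can be jointly true under some probability measure.
\begin{itemize}[noitemsep,topsep=0pt]
\item Monotonicity along chains: if $\pi\models_\CPL\varrho$ (both in $\Emc[\Pmbb]$ or $\Hmsf$), then $\Prob(\pi)\geq\cvalue$ entails $\Prob(\varrho)\geq\cvalue'$ for $c'\leq c$. This gives implications $x\rightarrow y$.
\item Threshold monotonicity on a single event, again of the form $x\rightarrow y$.
\item Incompatibility of pairs, e.g.\ $\Prob(\pi)\leq\cvalue$ with $\Prob(\varrho)\geq\cvalue'$ when $\varrho\models\pi$ and $c<c'$. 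This gives binary negative clauses.
\end{itemize}

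The second SPCF condition ($\FP\not\models\lambda\oplus\lambda'$) guarantees that no two atoms are covering. This is what lets the theory avoid needing positive binary clauses $x\vee y$, so the resulting theory stays inside $\ihsbminus$.

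The correctness argument, which I expect to be the main obstacle, is showing that a Boolean assignment satisfying the coherence constraints is realisable by an actual probabilistic model. Here I would use Proposition~\ref{prop:simpletranslation}: because the events form disjoint chains of positive conjunctions and disjunctions, any assignment of values respecting the chain order is coherent. Chains over disjoint variable sets can be realised independently via a product measure, and values along a single chain can be realised by nested events. Hence the CPL problem has a solution iff $\Pmbb$ has a sufficient solution. A classical solution yields a PIT over $\Hmsf$ with values in $\Vmc$. Conversely, for a sufficient solution I argue, as in the intuitive discussion, that a single PIL already suffices, because $\ihsbminus$ observations are single variables. The polynomial-time algorithm is then to test each candidate single PIL $\Prob(h)\diamond\cvalue$ with $h\in\Hmsf$ and $c\in\Vmc$, of which there are polynomially many. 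For each one I check consistency and entailment via unit propagation on the binary implication graph together with the negative clauses.

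For part~2, $\np$-membership is immediate from Theorem~\ref{theorem:fullsolutionexistenceNP}, since an SPCF AP is a special case. For $\np$-hardness I would reduce from $\CPL$-satisfiability of a CNF $\bigwedge_j C_j$. A full solution must be a $\langle\Vmbf,\Vmc\rangle$-complete PIT over \emph{all} variables, so it fixes a distribution. The plan is to encode each clause $C_j$, via a fresh variable $c_j$, using SPCF-admissible constraints such as $\Prob(\bigvee C_j^+)\ldots$. Since negated literals are not allowed in CP events, I would instead introduce, for each variable $p$, a twin $p'$ with $\Prob(p\vee p')\geq\one$ and $\Prob(p\wedge p')\leq\zero$ expressed through $\rightarrow\bot$ clauses. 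The clause $C_j$ becomes a positive disjunction over these literals, constrained by $\Prob(C_j)\geq\one$. Then $\Vmc=\{0,1\}$ forces a point-mass distribution, i.e.\ a classical valuation satisfying all clauses, and the observation is chosen trivially true (e.g.\ $\Prob(p)\geq\zero$). The delicate point is keeping $\Emc[\Pmbb]$ chained positive while using overlapping clause disjunctions. I would handle it by giving each clause its own fresh copies of the variables, linked to the originals through $\Vmc=\{0,1\}$ equalities expressed as binary implications $\Prob(p)\geq\one\rightarrow\Prob(p^{j})\geq\one$ and their converses, which keeps each chain disjoint.
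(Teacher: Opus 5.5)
\paragraph{Part~1.} Your Part~1 follows the paper's argument. The paper likewise replaces each PIL $\lambda$ over $\Emc[\Pmbb]$ and $\Vmc$ by a fresh $r_\lambda$ and turns the SPCF clauses into $\ihsbminus$ clauses. It adds $r_\lambda\to r_{\lambda'}$ for $\lambda\models_\FP\lambda'$ and $r_\lambda\wedge r_{\lambda'}\to\bot$ for jointly unsatisfiable pairs. It uses non-covering exactly to exclude two false atoms that would otherwise need a positive binary clause. It realises Boolean models through the monotone-assignment lemma and then cites Creignou--Zanuttini; your single-PIL test by unit propagation is the algorithm behind that citation.

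There is one caveat, which you share with the paper's Lemma~\ref{lemma:monotoneassignmentCIP}. The claim that ``the events form disjoint chains'' does not follow from the definition of CP. For instance, $\{p\wedge q,p,q,p\vee q\}$ is CP, yet the monotone assignment $f(p\wedge q)=0$, $f(p)=f(q)=f(p\vee q)=\tfrac12$ is not coherent. This breaks the single-PIL claim. Take $\Gamma=\{\Prob(p\wedge q){>}\zero\to\bot,\ \Prob(p){>}\overline{\tfrac12}\to\bot,\ \Prob(q){>}\overline{\tfrac12}\to\bot\}$, $\delta=\Prob(p\vee q){\geq}\one$, $\Hmsf=\{p,q\}$ and $\Vmc=\{0,\tfrac12,1\}$. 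This is an SPCF AP, since all atoms are non-valid lower bounds and hence pairwise non-covering. The PIT $\Prob(p){\geq}\overline{\tfrac12}\odot\Prob(q){\geq}\overline{\tfrac12}$ is a sufficient solution, but no single PIL is, so your test (and equally the paper's $\ihsbminus$ reduction) answers ``no''. The argument is sound only if each variable-connected component of $\Emc[\Pmbb]$ is a $\models_\CPL$-chain. Under that reading your nested-events-plus-product-measure realisation is fine.

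\paragraph{Part~2.} Here you take a different and heavier route than the paper. The paper keeps every event atomic. It writes each CNF clause as $(\bigwedge_i p_i\wedge\bigwedge_i\neg q_i)\to\bot$ and translates it to $(\bigodot_i\Prob(p_i){\geq}\one\odot\bigodot_i\Prob(q_i){\leq}\zero)\to\bot$. Negation therefore lives at the outer level as an upper-bound PIL, CP is trivial, and no twins, disjunctive events or per-clause copies are needed. You missed that negative literals can simply go into the body of a negative clause. Your twin-and-copy encoding does keep $\Emc[\Pmbb]$ CP: each clause event shares variables only with its own copies, and $p,p',p\wedge p',p\vee p'$ are pairwise comparable or variable-disjoint.

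There is, however, a concrete error in your choice of observation. $\Prob(p){\geq}\zero$ is $\FP$-valid, so $\FP\models\lambda\oplus(\Prob(p){\geq}\zero)$ for every $\lambda$. This violates the non-covering condition that the definition of an SPCF AP imposes on $\Gamma\cup\{\delta\}$. For the same reason you cannot take as observation the positive form of one of your $\to\bot$ constraints: for example, $\Prob(p\vee p'){\geq}\one$ covers $\Prob(p\vee p'){<}\one$. Similarly, a unit clause $C_j=\ell^j$ written as $\Prob(\ell^j){<}\one\to\bot$ covers the linking atom $\Prob(\ell^j){\geq}\one$. Use the paper's device: put a fact $\Prob(r){\geq}\one$ with $r$ fresh into $\Gamma$, take $\delta=\Prob(r){\geq}\one$, and pad unit clauses.
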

\section{Modelling Most Likely Explanations in~$\FP$\label{sec:probabduction}}
Until now, we have interpreted probabilistic explanations as statements \emph{about probabilities} that explain the observed frequency of events, and showed how one can use~$\FP$ to model such explanations. A~more traditional perspective on probabilistic explanations is to consider \emph{the most likely explanations} of events. More formally, one can represent this task as follows: given a~theory~$\Phi$, an observation~$\chi$, and probability assignments of (some) events, one has to come up with an explanation $\tau$ s.t.\ there is no other explanation $\sigma$ whose probability is higher than that of~$\tau$. In this section, we will demonstrate how to model this reasoning task in~$\FP$.
\begin{definition}\label{def:probabilisticAP}
A~\emph{probabilistic abduction problem} (PrAP) is a~tuple $\Pmbb_\pfrak=\langle\Phi,\chi,\Hmsf,\Embb,\pfrak\rangle$ s.t.\ 
\begin{itemize}[noitemsep,topsep=0pt]
\item $\Phi\cup\{\chi\}\cup\Embb\subseteq\LCPL$ is finite and $\Var[\Embb]\subseteq\Var[\Phi\cup\{\chi\}]$;
\item $\Hmsf$ is a~finite set of literals s.t.\ $\Var[\Hmsf]\subseteq\Var[\Phi\cup\{\chi\}]$;
\item $\pfrak:\Embb\rightarrow[0,1]_\Qmbb$.
\end{itemize}
We call $\Phi$, $\chi$, $\Hmsf$, $\Embb$, and $\pfrak$ \emph{theory}, \emph{observation}, \emph{hypotheses}, \emph{events}, and \emph{value assignment}, respectively.
\end{definition}
Note that PrAPs extend the (propositional fragments of the) frameworks in~\cite{Poole1993,Sato1995}, as we allow arbitrary (not just Horn) theories. Besides, PrAPs are similar to abduction problems with preferences on hypotheses~\cite[\S4.4]{EiterGottlob1995}. The difference is that $\pfrak$~can be undefined on some hypotheses or their conjunctions and that we use numerical values, not a~qualitative preference.

To determine solutions to PrAPs, we will consider probabilistic models whose measures are coherent with~$\pfrak$. For simplicity, we will represent $\Phi$ as $\bigwedge_{\phi\in\Phi}\phi$ and w.l.o.g.\ assume that theories of PrAPs contain only one formula.
\begin{definition}[Solutions to PrAPs]\label{def:PrAPsolution}
For a~PrAP $\Pmbb_\pfrak=\langle\{\phi\},\chi,\Hmsf,\Embb,\pfrak\rangle$ and $\Vmbf=\Var[\Pmbb_\pfrak]$, we define the following.
\begin{itemize}[noitemsep,topsep=0pt]
\item A~\emph{solution} to~$\Pmbb_\pfrak$ is an $\LCPL$-term $\tau$ built from~$\Hmsf$ s.t.\ $\phi,\tau\models_\CPL\chi$ and there is a~probabilistic model $\Mfrak=\langle2^\Vmbf,\mu\rangle$ s.t.\ $\mu$ is coherent with~$\pfrak$ and $\mu\left(\left\|\phi\wedge\tau\right\|_\Mfrak\right){>}0$.
\item A~solution $\tau$ is \emph{preferred} iff for every other solution~$\sigma$, there is a~probabilistic model~$\Mfrak=\langle2^\Vmbf,\mu\rangle$ s.t.\ $\mu$ is coherent with~$\pfrak$ and $\mu(\|\tau\|_\Mfrak)\geq\mu(\|\sigma\|_\Mfrak)$.%
\end{itemize}
\end{definition}
Note that $\CPL$ abduction problems can be seen as PrAPs with $\Embb=\{\phi\}$ and $\pfrak(\phi)=1$ (i.e., every statement in the theory has probability~$1$). Indeed, there is $\Mfrak=\langle2^\Vmbf,\mu\rangle$ s.t.\ $\mu\left(\left\|\phi\wedge\tau\right\|_\Mfrak\right)>0$ iff $\phi\wedge\tau$ is $\CPL$-sa\-tis\-fi\-able. Observe also, the notion of preferred solutions refines that of $\subseteq$-minimal solutions: if $\sigma$ and $\tau$ are $\subseteq$-minimal, we can still prefer $\tau$ if there is a~probabilistic model $\Mfrak=\langle2^\Vmbf,\mu\rangle$ s.t.\ $\mu(\|\tau\|_\Mfrak)\geq\mu(\|\sigma\|_\Mfrak)$. Solutions to PrAPs can also be characterised in terms of \emph{conditional probability} as follows.
\begin{definition}\label{def:conditionalprobability}
Let $\Mfrak{=}\langle2^\Vmbf,\mu\rangle$ be a~probabilisitic model, $\Var[\{\phi,\chi\}]{\subseteq}\Vmbf$, and $\mu(\|\chi\|_\Mfrak){>}0$. The \emph{conditional probability of~$\phi$ given~$\chi$} is defined as $\Probfrak_\mu(\phi{\mid}\chi)=\tfrac{\mu(\|\phi\wedge\chi\|_\Mfrak)}{\mu(\|\chi\|_\Mfrak)}$.
\end{definition}
\begin{restatable}{proposition}{PrAPsolutionsconditionalprobability}\label{prop:PrAPsolutionsconditionalprobability}
Let $\Pmbb_\pfrak=\langle\{\phi\},\chi,\Hmsf,\Embb,\pfrak\rangle$ be a~PrAP, $\Vmbf=\Var[\Pmbb_\pfrak]$, and $\tau$ an $\LCPL$-term. Then $\tau$ is a~solution to~$\Pmbb_\pfrak$ iff the following conditions hold:
\begin{enumerate}[noitemsep,topsep=0pt]
\item $\mu\left(\left\|\phi\wedge\tau\right\|_\Mfrak\right)\leq\mu(\|\chi\|_\Mfrak)$ in all probabilistic models;
\item there is a~probabilistic model $\Mfrak'=\langle2^\Vmbf,\mu'\rangle$ s.t.\ $\mu'$ is coherent with~$\pfrak$ and $\Probfrak_{\mu'}\left(\chi\mid\phi\wedge\tau\right)=1$.
\end{enumerate}
\end{restatable}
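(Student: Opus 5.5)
The plan is to unfold Definition~\ref{def:PrAPsolution} and match its two clauses to the two conditions of the statement. The classical entailment $\phi,\tau\models_\CPL\chi$ should correspond to condition~1, and the existence of a coherent model giving $\phi\wedge\tau$ positive measure should correspond to condition~2. Throughout I read ``$\tau$ an $\LCPL$-term'' as a term built from~$\Hmsf$, since that requirement appears in Definition~\ref{def:PrAPsolution} and is not affected by either condition. The one semantic fact I rely on is that, for any model $\Mfrak=\langle2^\Vmbf,\mu\rangle$, the truth sets are Boolean-homomorphic: $\|\phi\wedge\tau\|_\Mfrak$ is exactly the set of states $X\subseteq\Vmbf$ that classically satisfy $\phi\wedge\tau$. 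I also use that $\mu$ is monotone with respect to $\subseteq$, which follows from additivity.

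($\Rightarrow$) Assume $\tau$ is a solution. Since $\phi\wedge\tau\models_\CPL\chi$, every state satisfying $\phi\wedge\tau$ satisfies $\chi$. Hence $\|\phi\wedge\tau\|_\Mfrak\subseteq\|\chi\|_\Mfrak$ in every model, and monotonicity gives condition~1. For condition~2, take the coherent $\mu'$ supplied by the definition, which has $\mu'(\|\phi\wedge\tau\|_{\Mfrak'})>0$, so $\Probfrak_{\mu'}(\chi\mid\phi\wedge\tau)$ is defined. By the same inclusion, $\|\chi\wedge\phi\wedge\tau\|=\|\phi\wedge\tau\|$, so the numerator equals the denominator and the ratio is~$1$.

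($\Leftarrow$) Assume conditions~1 and~2. The model from condition~2 is coherent with~$\pfrak$. By Definition~\ref{def:conditionalprobability}, the conditional probability is only defined when $\mu'(\|\phi\wedge\tau\|_{\Mfrak'})>0$, so this model gives the positivity clause of a solution. For the entailment I argue by contraposition. Suppose some $X\subseteq\Vmbf$ classically satisfies $\phi\wedge\tau\wedge\neg\chi$. Let $\mu_X$ be the Dirac measure concentrated on~$\{X\}$. Then $\mu_X(\|\phi\wedge\tau\|)=1>0=\mu_X(\|\chi\|)$, which contradicts condition~1. Hence $\phi,\tau\models_\CPL\chi$.

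The only delicate point is the scope of condition~1. I read it as ranging over all models on $2^\Vmbf$, not only those coherent with~$\pfrak$. This reading is what allows the Dirac measure, which need not be coherent with $\pfrak$, to be used as a counterexample. With quantification restricted to coherent models, the backward direction would fail in general, because $\pfrak$ could assign probability $0$ to every state witnessing $\phi\wedge\tau\wedge\neg\chi$. I will therefore state this reading of condition~1 explicitly in the proof.
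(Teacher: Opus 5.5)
Your proof is correct and follows essentially the same route as the paper. In the forward direction, truth-set inclusion gives condition~1 and the numerator-equals-denominator argument gives condition~2. In the converse, a Dirac measure on a falsifying state refutes condition~1, and $\Probfrak_{\mu}(\chi\mid\phi\wedge\tau)$ is undefined when $\mu(\|\phi\wedge\tau\|)=0$. The paper's argument also relies implicitly on your two explicit readings: condition~1 ranges over all models, not only coherent ones, and $\tau$ is built from~$\Hmsf$.
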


Let us now establish a~translation of classical probabilistic abduction into~$\FP$. As solutions to PrAPs are \emph{events} and not \emph{assertions about their probabilities}, we will show the correspondence between pairs of PrAPs and their solutions and $\FP$-theories that express the probabilities coherent with~$\pfrak$.
\begin{definition}\label{def:probabilisticAPFPLukcounterpart}
Let $\Embb\subseteq\LCPL$ be finite and $\pfrak:\Embb\rightarrow[0,1]_\Qmbb$. The \emph{$\FP$-counterpart of~$\pfrak$} is the following set of $\LProbint$-formulas:
\begin{align*}
\Xi_\pfrak&=\{\Prob(\phi){\approx}\cvalue\mid\phi\in\Embb,\pfrak(\phi)=c\}
\end{align*}
\end{definition}
\begin{restatable}{theorem}{PrAPtoFPLuk}\label{theorem:PrAPtoFPLuk}
Let $\Pmbb_\pfrak=\langle\{\phi\},\chi,\Hmsf,\Embb,\pfrak\rangle$ be a~PrAP, $\tau$ an $\LCPL$-term, and $\Vmbf=\Var[\Pmbb_\pfrak]$. Then it holds that:
\begin{enumerate}[noitemsep,topsep=0pt]
\item $\tau$ is a~solution to~$\Pmbb_\pfrak$ iff $\Xi_\pfrak,\neg\triangle\neg\Prob(\phi\wedge\tau)\not\models_\FP\bot$ and $\FP\models\Prob(\phi\wedge\tau)\rightarrow\Prob(\chi)$;
\item $\tau$ is a~preferred solution to~$\Pmbb_\pfrak$ iff, in addition to that, $\Xi_\pfrak,\Prob(\sigma)\rightarrow\Prob(\tau)\not\models_\FP\bot$ for every solution~$\sigma$ to~$\Pmbb_\pfrak$.
\end{enumerate}
\end{restatable}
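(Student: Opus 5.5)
The plan is to unpack both sides of each equivalence using the semantics of Definitions~\ref{def:Luksemantics} and~\ref{def:FPLuksemantics} and compare them with Definition~\ref{def:PrAPsolution}. First, I will record three elementary facts about a probabilistic model $\Mfrak$.
\begin{itemize}[noitemsep,topsep=0pt]
\item $\Imc_\Mfrak(\Prob(\psi){\approx}\cvalue)=1$ iff $\mu(\|\psi\|_\Mfrak)=c$. Hence $\Mfrak$ satisfies $\Xi_\pfrak$ iff $\mu$ is coherent with~$\pfrak$.
\item $\Imc_\Mfrak(\neg\triangle\neg\Prob(\psi))=1$ iff $\mu(\|\psi\|_\Mfrak)>0$.
\item $\Imc_\Mfrak(\Prob(\psi)\rightarrow\Prob(\psi'))=1$ iff $\mu(\|\psi\|_\Mfrak)\leq\mu(\|\psi'\|_\Mfrak)$.
\end{itemize}

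Before that, I need a routine \emph{marginalisation lemma}, which handles a technical mismatch. Definition~\ref{def:FPentailment} evaluates formulas over $2^{\Vmbf'}$, where $\Vmbf'$ is the set of variables of the formulas involved. Definition~\ref{def:PrAPsolution}, by contrast, speaks of models over $\Vmbf=\Var[\Pmbb_\pfrak]$, which may be larger, or merely different. The lemma has two halves.
\begin{itemize}[noitemsep,topsep=0pt]
\item For $\Vmbf'\subseteq\Vmbf$, a measure on $2^{2^\Vmbf}$ restricts to $2^{2^{\Vmbf'}}$ via $\mu'(\{Y\})=\sum_{X\cap\Vmbf'=Y}\mu(\{X\})$.
\item Conversely, a measure on $2^{2^{\Vmbf'}}$ extends to $2^{2^\Vmbf}$, e.g.\ by putting all mass of $Y$ on the single state $Y$ itself.
\end{itemize}
In both directions, $\mu(\|\psi\|)$ is preserved for every $\psi$ with $\Var(\psi)\subseteq\Vmbf'$, by induction on~$\psi$. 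I expect this bookkeeping to be the only real obstacle, and it is mild: all variables in $\Embb$, $\Hmsf$, $\phi$, $\chi$, $\tau$ lie in $\Vmbf$. (If $\tau$ uses variables outside $\Vmbf$, the extension direction covers this as well.)

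For item~1, I will prove two separate equivalences. \textbf{(a)} $\phi,\tau\models_\CPL\chi$ iff $\FP\models\Prob(\phi\wedge\tau)\rightarrow\Prob(\chi)$.
\begin{itemize}[noitemsep,topsep=0pt]
\item Left to right: classical entailment gives $\|\phi\wedge\tau\|_\Mfrak\subseteq\|\chi\|_\Mfrak$ in every model, and monotonicity of $\mu$ gives the inequality.
\item Right to left, contrapositively: take a state $X\subseteq\Var(\phi\wedge\tau\wedge\chi)$ satisfying $\phi\wedge\tau\wedge\neg\chi$ and the Dirac measure on $\{X\}$. Then $\Prob(\phi\wedge\tau)$ gets value $1$ and $\Prob(\chi)$ gets value $0$, so the implication takes value $0$.
\end{itemize}
\textbf{(b)} There is a model over $\Vmbf$ coherent with $\pfrak$ and with $\mu(\|\phi\wedge\tau\|)>0$ iff $\Xi_\pfrak,\neg\triangle\neg\Prob(\phi\wedge\tau)\not\models_\FP\bot$. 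By the three facts above, the right-hand side says that some model over the variables of these formulas satisfies coherence and positivity. The marginalisation lemma transfers such a model to and from $\Vmbf$.

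Item~2 then follows the same pattern. $\Xi_\pfrak,\Prob(\sigma)\rightarrow\Prob(\tau)\not\models_\FP\bot$ holds iff some model satisfying $\Xi_\pfrak$ has $\mu(\|\sigma\|)\leq\mu(\|\tau\|)$. By the facts above and the marginalisation lemma, this is exactly the condition in the definition of a preferred solution, for each solution~$\sigma$. Combined with item~1, which ensures that $\tau$ itself is a solution, this gives the claim.
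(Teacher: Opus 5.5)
Your proposal is correct and follows essentially the paper's route: both arguments unpack the semantics of $\Xi_\pfrak$, $\neg\triangle\neg\Prob(\cdot)$ and $\Prob(\cdot)\rightarrow\Prob(\cdot)$, and use a Dirac measure on a state satisfying $\phi\wedge\tau\wedge\neg\chi$ for the converse of the entailment condition (the paper routes this through Proposition~\ref{prop:PrAPsolutionsconditionalprobability}). Your marginalisation lemma, which transfers models between $\Vmbf$ and the variable sets used in Definition~\ref{def:FPentailment}, makes explicit a step the paper skips; like the paper, you also tacitly assume $\tau$ is built from $\Hmsf$, which the statement needs but does not say, since otherwise the right-hand side of item~1 can hold for a term that is not a solution.
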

Let us illustrate the correspondence from Theorem~\ref{theorem:PrAPtoFPLuk}.
\begin{example}\label{example:PrAPtranslation}
Let $\Pmbb_\pfrak=\langle\Phi,s,\Hmsf,\Embb,\pfrak\rangle$ and, additionally, $\Phi=\{p\rightarrow s,(q\wedge r)\rightarrow s\}$, $\Hmsf{=}\{p,q,r\}$, $\Embb=\Hmsf$, $\pfrak(p)=\tfrac{1}{2}$, $\pfrak(q)=\tfrac{1}{3}$, and $\pfrak(r)=\tfrac{1}{4}$. One can see that $p$~and~$q\wedge r$ are solutions to~$\Pmbb_\pfrak$. Moreover, $p$~is the preferred solution, and $q\wedge r$ \emph{is not} because $\mu(\|p\|)\geq\mu(\|q\wedge r\|)$ in all probabilistic models coherent with~$\pfrak$.

Now, set $\Xi_\pfrak=\{\Prob(p){\approx}\overline{\tfrac{1}{2}},\Prob(q){\approx}\overline{\tfrac{1}{3}},\Prob(r){\approx}\overline{\tfrac{1}{4}}\}$ and let $\phi_\Pmbb=\bigwedge_{\phi\in\Phi}$. This gives us $\Xi_\pfrak,\neg\triangle\neg\Prob(\phi_\Pmbb\wedge p)\not\models_\FP\bot$, $\FP\models\Prob(\phi_\Pmbb{\wedge} p){\rightarrow}\Prob(s)$, and $\Xi_\pfrak,\Prob(q{\wedge}r){\rightarrow}\Prob(p)\not\models_\FP\bot$, as required.
\end{example}

Using Theorem~\ref{theorem:PrAPtoFPLuk}, we can obtain the following complexity evaluations of classical probabilistic abduction.
\begin{restatable}{theorem}{PrAPcomplexity}\label{theorem:PrAPcomplexity}
Given a~PrAP $\Pmbb_\pfrak$ and an $\LCPL$-term~$\tau$, it is:
\begin{enumerate}[noitemsep,topsep=0pt]
\item $\DP$-complete to check if $\tau$ is a~solution to~$\Pmbb_\pfrak$;
\item in $\Pi^\Pmsf_2$ to check if $\tau$ is a~preferred solution to~$\Pmbb_\pfrak$;
\item $\Sigma^\Pmsf_2$-complete to check if $\Pmbb_\pfrak$~has solutions.
\end{enumerate}
\end{restatable}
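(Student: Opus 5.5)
Theorem~\ref{theorem:PrAPtoFPLuk} reduces each of the three tasks to $\FP$-satisfiability and $\FP$-validity, whose complexity is fixed by Proposition~\ref{prop:FPLukNPcoNP}. For the lower bounds we reuse classical abduction, viewing a $\CPL$ abduction problem as the PrAP with $\Embb=\{\phi\}$ and $\pfrak(\phi)=1$, as noted after Definition~\ref{def:PrAPsolution}.

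\emph{Item 1 (solution recognition).} For membership, Theorem~\ref{theorem:PrAPtoFPLuk}(1) shows that $\tau$ is a solution iff two conditions hold. The first, $\Xi_\pfrak,\neg\triangle\neg\Prob(\phi\wedge\tau)\not\models_\FP\bot$, is the $\FP$-satisfiability of the polynomial-size formula $\bigodot\Xi_\pfrak\odot\neg\triangle\neg\Prob(\phi\wedge\tau)$, hence in $\np$. The second, $\FP\models\Prob(\phi\wedge\tau)\rightarrow\Prob(\chi)$, is an $\FP$-validity check, hence in $\conp$. Equivalently, it is the $\CPL$-validity of $\phi\wedge\tau\rightarrow\chi$, since a measure concentrated on a countermodel falsifies it. The conjunction of an $\np$ and a $\conp$ condition lies in $\DP$. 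For hardness, take $\Embb=\{\phi\}$ and $\pfrak(\phi)=1$. Then $\tau$ solves the PrAP iff $\phi\wedge\tau$ is $\CPL$-satisfiable and $\phi,\tau\models_\CPL\chi$. That is exactly classical solution recognition, which is $\DP$-complete~\cite{EiterGottlob1995}. Even more directly, SAT--UNSAT pairs $(\psi_1,\psi_2)$ with disjoint variables reduce to it via $\phi=\psi_1$, $\tau=\top$ (or a fresh hypothesis variable), and $\chi=\neg\psi_2$.

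\emph{Item 2 (preferred solutions, $\Pi^\Pmsf_2$ membership).} The $\DP$ part is already in $\Pi^\Pmsf_2$. Consider the remaining condition: for every term $\sigma$ built from $\Hmsf$, if $\sigma$ is a solution then $\Xi_\pfrak,\Prob(\sigma)\rightarrow\Prob(\tau)\not\models_\FP\bot$. Its complement is: there exists a polynomial-size $\sigma$ such that $\sigma$ is a solution, which is in $\DP\subseteq\Delta^\Pmsf_2$, and $\bigodot\Xi_\pfrak\odot(\Prob(\sigma)\rightarrow\Prob(\tau))$ is $\FP$-unsatisfiable, which is in $\conp$. This complement is in $\Sigma^\Pmsf_2$, so the condition itself is in $\Pi^\Pmsf_2$. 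Hence the intersection is in $\Pi^\Pmsf_2$.

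\emph{Item 3 (existence).} For membership, guess $\tau$ (a subset of $\Hmsf$) and verify it with one $\np$ and one $\conp$ oracle call, giving $\Sigma^\Pmsf_2$. For hardness, use the same embedding $\Embb=\{\phi\}$, $\pfrak(\phi)=1$. Existence of solutions for classical propositional abduction with literal hypotheses is $\Sigma^\Pmsf_2$-complete~\cite{EiterGottlob1995}. The embedding preserves solutions exactly, because coherence forces $\mu(\|\phi\|)=1$, so $\mu(\|\phi\wedge\tau\|)>0$ is possible iff $\phi\wedge\tau$ is satisfiable.

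The main obstacle is the correctness of this embedding as used in items 1 and 3. We must check that the condition ``there is a coherent $\mu$ with $\mu(\|\phi\wedge\tau\|)>0$'' really coincides with $\CPL$-satisfiability of $\phi\wedge\tau$ when $\pfrak(\phi)=1$. The direction from satisfiability to a coherent measure uses the Dirac measure on a satisfying state, which also gives $\mu(\|\phi\|)=1$. We must also check that the variable conditions of Definition~\ref{def:probabilisticAP} are met. Everything else is bookkeeping on top of Theorem~\ref{theorem:PrAPtoFPLuk} and Proposition~\ref{prop:FPLukNPcoNP}.
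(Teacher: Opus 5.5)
Your proposal is correct and follows the paper's proof essentially step for step: membership via Theorem~\ref{theorem:PrAPtoFPLuk} and Proposition~\ref{prop:FPLukNPcoNP} (an $\np$ check plus a $\conp$ check for Item~1, guess-and-verify for Items~2 and~3), and hardness by viewing classical abduction as a PrAP with $\Embb=\{\phi\}$, $\pfrak(\phi)=1$. In Item~2 you also verify, in $\DP$, that the guessed $\sigma$ is itself a solution. The paper's write-up omits this check, but the complement procedure needs it to be sound, and it costs nothing within $\Sigma^\Pmsf_2$.
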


Notice that \emph{cardinality-minimal} ($\leq$-minimal) solutions (solutions with the fewest hypotheses) to classical abduction problems are exactly the preferred solutions w.r.t.\ a~uniform probability distribution. If $\pfrak$~is a~uniform probability distribution, then $\pfrak(\sigma)\leq\pfrak(\tau)$ iff $|\sigma|\leq|\tau|$. Recognition of $\leq$-minimal solutions is known to be \emph{$\Pi^\Pmsf_2$-complete}~\cite{EiterGottlob1995}. Unfortunately, there seems to be no way to construct a~polynomial reduction from the recognition of $\leq$-minimal solutions to preferred solutions because a~uniform probability distribution cannot be concisely represented.

Furthermore, note from Definition~\ref{def:probabilisticAP} that our PrAPs are very general: a~value assignment in~$\Pmbb_\pfrak$ is not necessarily a~probability distribution on $2^{\Var[\Pmbb_\pfrak]}$. In most approaches to probabilistic abduction, however, it is assumed that $\pfrak$ is a~probability distribution. Formally, this means that $\Embb=\{\tau_1,\ldots,\tau_n\}$ is a~set of $\LCPL$-terms s.t.\ $\Var(\tau_i)=\Var[\Pmbb_\pfrak]$ for all $i\in\{1,\ldots,n\}$ and $\sum^n_{i=1}\pfrak(\tau_i)=1$.

Under this assumption, solution recognition becomes $\conp$-complete. For $\conp$-membership, observe that verifying $\phi,\tau\models_\CPL\chi$ is $\conp$-complete, but checking that $\Xi_\pfrak,\neg\triangle\neg\Prob(\phi\wedge\tau)\not\models_\FP\bot$ takes polynomial time w.r.t.\ the size of~$\pfrak$. To show $\conp$-hardness, we construct the following polynomial reduction from $\CPL$-validity. Given $\chi\in\LCPL$, $p\notin\Var(\chi)$, $\Hmsf{=}\{p\}$, $\Embb{=}\{p\wedge\bigwedge_{q\in\Var(\chi)}q\}$, and $\pfrak(p\wedge\bigwedge_{q\in\Var(\chi)}q){=}1$, $p$~is a~solution to $\Pmbb_\pfrak=\langle\{p\vee\neg p\},\chi,\Hmsf,\Embb,\pfrak\rangle$ iff $\chi$~is $\CPL$-valid.
\begin{restatable}{theorem}{conprecognitionPrAP}\label{theorem:conprecognitionPrAP}
Given a~PrAP $\Pmbb_\pfrak=\langle\{\phi\},\chi,\Hmsf,\Embb,\pfrak\rangle$ s.t.\
\begin{itemize}[noitemsep,topsep=0pt]
\item $\Embb\subseteq\{\bigwedge_{p\in X}p\wedge\bigwedge_{q\notin X}\neg q{\mid} X\subseteq\Var[\{\phi,\chi\}]\}$, and
\item $\sum_{\sigma\in\Embb}\pfrak(\sigma){=}1$
\end{itemize}
and an $\LCPL$-term $\tau$, it is $\conp$-complete to check if $\tau$ is a~solution to~$\Pmbb_\pfrak$.
\end{restatable}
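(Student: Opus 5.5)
Membership and hardness are argued separately; the paper's sketch before the statement is the guide for both.

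For $\conp$-membership, by Definition~\ref{def:PrAPsolution} (equivalently Theorem~\ref{theorem:PrAPtoFPLuk}), $\tau$ is a solution iff two conditions hold: (i) $\phi,\tau\models_\CPL\chi$, and (ii) some probabilistic model $\Mfrak=\langle2^\Vmbf,\mu\rangle$ coherent with $\pfrak$ satisfies $\mu(\|\phi\wedge\tau\|_\Mfrak)>0$. Condition (i) is a $\CPL$-entailment check and lies in $\conp$.

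For (ii), the hypotheses on $\Embb$ are the key. Every $\sigma\in\Embb$ is a complete term over $\Var[\{\phi,\chi\}]$, so the $\|\sigma\|_\Mfrak$ are pairwise disjoint. Their $\pfrak$-values sum to $1$, so any coherent $\mu$ puts all its mass on the states described by members of $\Embb$ with $\pfrak(\sigma)>0$.

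There is a small point to settle here: $\Vmbf$ may also contain variables of $\Hmsf$ that do not occur in $\phi,\chi$. I expect this to be harmless, since $\Var[\Hmsf]\subseteq\Var[\Phi\cup\{\chi\}]$ by Definition~\ref{def:probabilisticAP}. Thus each $\sigma$ determines a single state $X_\sigma\in2^\Vmbf$, and the coherent measure is unique: $\mu(\{X_\sigma\})=\pfrak(\sigma)$.

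It follows that $\mu(\|\phi\wedge\tau\|_\Mfrak)=\sum\{\pfrak(\sigma)\mid\sigma\in\Embb,\ X_\sigma\models\phi\wedge\tau\}$. Each summand needs only the evaluation of a classical formula under a total assignment, so (ii) is decidable in polynomial time by checking whether some $\sigma$ with $\pfrak(\sigma)>0$ satisfies $\phi\wedge\tau$. This mirrors the argument for Theorem~\ref{theorem:fullsolutionrecognitionpolynomial}. Since $\Pmsf\subseteq\conp$ and $\conp$ is closed under intersection, the conjunction of (i) and (ii) is in $\conp$.

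For $\conp$-hardness, use the reduction given just before the statement. Given $\chi$, take a fresh $p$ and set $\Pmbb_\pfrak=\langle\{p\vee\neg p\},\chi,\{p\},\Embb,\pfrak\rangle$ with $\Embb=\{\sigma\}$, $\sigma=p\wedge\bigwedge_{q\in\Var(\chi)}q$, and $\pfrak(\sigma)=1$. The instance satisfies both restrictions, since $\sigma$ is a complete term over $\Var[\{p\vee\neg p,\chi\}]$ and its value is $1$.

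Condition (ii) holds for $\tau=p$, because the unique coherent model gives the state making everything true mass $1$, and that state satisfies $p$. Condition (i) reads $p\models_\CPL\chi$, which, as $p\notin\Var(\chi)$, is equivalent to $\chi$ being $\CPL$-valid. So $p$ is a solution iff $\chi$ is valid, and this polynomial reduction from $\CPL$-validity gives $\conp$-hardness.

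The only delicate step is the uniqueness of the coherent measure and the fact that the $X_\sigma$ are well defined over all of $\Vmbf$; everything else is routine.
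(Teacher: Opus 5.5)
Your proposal is correct and follows the paper's proof essentially step for step. For membership you use a $\conp$ check of $\phi,\tau\models_\CPL\chi$ together with a polynomial scan for some $\sigma\in\Embb$ with $\pfrak(\sigma)>0$ and $\sigma\models_\CPL\phi\wedge\tau$, justified by uniqueness of the coherent measure and the fact that $\Var[\Hmsf],\Var[\Embb]\subseteq\Var[\{\phi,\chi\}]$ gives $\Vmbf=\Var[\{\phi,\chi\}]$; for hardness you use the same reduction from $\CPL$-validity with $\Pmbb_\pfrak=\langle\{p\vee\neg p\},\chi,\{p\},\{p\wedge\bigwedge_{q\in\Var(\chi)}q\},\pfrak\rangle$.
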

\section{Conclusion and Future Work\label{sec:conclusion}}
We considered probabilistic abduction formalised in a~fuzzy probabilistic logic $\FP$. Our study provides a~comprehensive picture of the complexity of full and sufficient solution recognition and existence. We also formalised \emph{classical} probabilistic abduction in~$\FP$, thus connecting two interpretations of probabilistic explanations: (1)~the explanation \emph{why the given event has the observed probability} and (2)~a~most likely explanation \emph{why the given event occurs}.

We also observe that the results in Section~\ref{sec:complexity} can be straightforwardly extended to abductive reasoning with \emph{conditional} probabilities. For that, one can consider abduction in the logic $\FPRLtriangle$~\cite{Flaminio2007} that extends~$\FP$ with multiplication by a~rational number from~$[0,1]$. Since $\FPRLtriangle$ is $\np$-complete, the complexity is preserved. Even more expressive propositional ground logics can be employed to deal, for instance, with product operation. In this case, we should move to the probability logic $\FP(\Luk\Pi)$ based on the combination of Łukasiewicz and Product logics by Esteva et al.~\shortcite{EstevaGodoMontagna2001}. Besides the increased expressive power, it is known from~\cite{HajekTulipani2001} that the satisfiability of $\FP(\Luk\Pi)$ is in $\pspace$, which would make abductive reasoning in it highly intractable.

Several questions, however, remain open. First, we did not consider determining the relevance and necessity of hypotheses. That is, given an abduction problem~$\Pmbb$ and a~hypothesis~$h$, determine if $h$~occurs in some (respectively, all) solutions of~$\Pmbb$. We conjecture that the complexity of relevance and necessity of hypotheses to solutions of $\FP$ APs would coincide with that of $\Luk$-abduction problems. It also makes sense to consider relevance and necessity of PILs w.r.t.\ \emph{full solutions}. Additionally, we plan to establish the exact complexity of recognising preferred solutions to PrAPs.

Another important task would be to provide a~translation of $\FP$-abduction into $\Luk$-abduction. Flaminio~\shortcite{Flaminio2020RSL} constructed an embedding of \emph{deductive} reasoning in fuzzy probabilistic into deductive reasoning in fuzzy propositional logic. Thus, it makes sense to extend their results and obtain a~similar correspondence for abduction.

A~further connection to investigate would be the one between abduction in~$\FP$ and alternative notions of explanations. In particular, \emph{contrastive explanations}~\cite{IgnatievNarodytskaSherMarques-Silva2021} and \emph{extended abduction} studied by Inoue and Sakama~\shortcite{InoueSakama1995,InoueSakama1999,Inoue2000,SakamaInoue2003}. Intuitively, a~contrastive explanation of~$\phi$ is telling \emph{why $\neg\phi$ did not happen}. In extended abduction, an explanation entails the observation but not some undesired consequence. In a~similar fashion, in $\FP$-abduction, an explanation why $\phi$ has probability~$20\%$ also tells why $\phi$ \emph{does not have} probability~$25\%$.

Another question is that of \emph{prioritisation} of full solutions. We considered preference based on the \emph{principle of maximal entropy}, but there are other alternatives discussed by Dubois et al.~\shortcite{DuboisGilioKern-Isberner2008}. One of them is the principle of \emph{maximal likelihood} studied by Inoue et al.~\shortcite{InoueSatoIshihataKameyaNabeshima2009} and Ishihata et al.~\shortcite{IshihataKameyaSato2012}. It would be instructive to consider the prioritisation of full solutions according to the maximal likelihood principle.

Finally, we plan to consider other notions of uncertainty. First, more expressive logics that can reason about independent events and conditional probabilities~\cite{HajekTulipani2001} or belief functions~\cite{GodoHajekEesteva2003}. Second, fuzzy logics for \emph{qualitative} probabilities~\cite{BilkovaFrittellaKozhemiachenkoMajer2023}. For the latter, it will make sense to establish a~correspondence between qualitative probabilistic abduction and preference abduction~\cite{InoueSakama1999IJCAI,KonczakVogel2005} where the explanation is a~\emph{comparative} statement about likelihoods of events.
\vfill\eject
\section*{Acknowledgements}
Tommaso Flaminio was supported by the Spanish project SHORE (PID2022-141529NB-C22) funded by the MCIN/AEI/10.13039/501100011033 and the H2020-MSCA-RISE-2020 project MOSAIC (Grant Agreement number 101007627). Katsumi Inoue was supported by JSPS KAKENHI Grant Number JP25K03190 and JST CREST Grant Number JPMJCR22D3, Japan. We would also like to thank the reviewers for their constructive suggestions that improved the quality of the paper.
\section*{AI Declaration}
The authors have not employed any Generative AI tools.
\bibliographystyle{kr}
\bibliography{kr-sample}

@InProceedings{HajekGodoEsteva1995,
author = {H\'{a}jek, Petr and Godo, Llu\'{\i}s and Esteva, Francesc},
  booktitle = {UAI'95: Proceedings of the Eleventh Conference on Uncertainty in Artificial Intelligence},
  title     = {Fuzzy logic and probability},
  year      = {1995},
  address   = {San Francisco},
  editor    = {Besnard, P. and Hanks, S.},
  pages     = {237--244},
  publisher = {Morgan Kaufmann},
doi={10.5555/2074158.2074185}
}

@INCOLLECTION{IgnatievNarodytskaSherMarques-Silva2021,
  title     = "From contrastive to abductive explanations and back again",
  booktitle = "{AIxIA 2020 -- Advances in Artificial Intelligence}",
  author    = "Ignatiev, Alexey and Narodytska, Nina and Asher, Nicholas and Marques-Silva, Joao",
  publisher = "Springer International Publishing",
  pages     = "335--355",
  series    = "Lecture Notes in Artificial Intelligence",
volume={12414},
doi={10.1007/978-3-030-77091-4_21},
editor={Matteo Baldoni and Stefania Baldini},
  year      =  2021,
  address   = "Cham"
}

@inproceedings{InoueSakama1995,
author = {Inoue, Katsumi and Sakama, Chiaki},
title = {Abductive framework for nonmonotonic theory change},
year = {1995},
isbn = {1558603638},
publisher = {Morgan Kaufmann Publishers Inc.},
address = {San Francisco, CA, USA},
booktitle = {Proceedings of the 14th International Joint Conference on Artificial Intelligence --- Volume 1},
pages = {204--210},
location = {Montreal, Quebec, Canada},
series = {IJCAI'95}
}

@INCOLLECTION{KonczakVogel2005,
  title     = "Abduction and Preferences in Linguistics",
  booktitle = "Logic Programming and Nonmonotonic Reasoning",
  author    = "Konczak, Kathrin and Vogel, Ralf",
  publisher = "Springer Berlin Heidelberg",
  pages     = "384--388",
  series    = "Lecture Notes in Computer Science",
  year      =  2005,
  address   = "Berlin, Heidelberg",
doi={10.1007/11546207_32},
editor={Chitta Baral and Gianluigi Greco and Nicola Leone and Giorgio Terracina}
}

@inproceedings{InoueSakama1999IJCAI,
  author       = {Katsumi Inoue and Chiaki Sakama},
  editor       = {Thomas Dean},
  title        = {Abducing Priorities to Derive Intended Conclusions},
  booktitle    = {Proceedings of the Sixteenth International Joint Conference on Artificial Intelligence},
  pages        = {44--49},
  publisher    = {Morgan Kaufmann},
  year         = {1999},
}

@ARTICLE{SakamaInoue2003,
  title     = "An abductive framework for computing knowledge base updates",
  author    = "Sakama, Chiaki and Inoue, Katsumi",
  journal   = "Theory and Practice of Logic Programming",
  publisher = "Cambridge University Press (CUP)",
  volume    =  3,
  number    =  06,
  pages     = "671--715",
doi={10.1017/S1471068403001716},
year      =  2003
}

@article{GodoHajekEesteva2003,
  author       = {Llu{\'{\i}}s Godo and
                  Petr H{\'{a}}jek and
                  Francesc Esteva},
  title        = {{A Fuzzy Modal Logic for Belief Functions}},
  journal      = {Fundamenta Informaticae},
  volume       = {57},
  number       = {2-4},
  pages        = {127--146},
  year         = {2003},
}

@INCOLLECTION{Inoue2000,
  title     = "A simple characterization of extended abduction",
  booktitle = "{Computational Logic --- CL 2000}",
  author    = "Inoue, Katsumi",
  publisher = "Springer Berlin Heidelberg",
  pages     = "718--732",
  series    = "Lecture Notes in Artificial Intelligence",
volume={1861},
doi={10.1007/3-540-44957-4_48},
editor={John Lloyd and Veronica Dahl and Ulrich Furbach and Manfred Kerber and Kung-Kiu Lau and Catuscia Palamidessi and Lu\'{\i}s {Moniz Pereira} and Yehoshua Sagiv and Peter J. Stuckey},
  year      =  2000,
  address   = "Berlin, Heidelberg"
}

@ARTICLE{InoueSakama1999,
  title     = "Computing extended abduction through transaction programs",
  author    = "Inoue, Katsumi and Sakama, Chiaki",
  journal   = "Ann. Math. Artif. Intell.",
  publisher = "Springer Science and Business Media LLC",
  volume    =  25,
  number    = "3--4",
  pages     = "339--367",
  year      =  1999,
 doi={10.1023/A:1018926021566}
}

@ARTICLE{BilkovaFrittellaKozhemiachenkoMajer2023,
  title     = "Qualitative reasoning in a two-layered framework",
  author    = "B{\'\i}lkov{\'a}, Marta and Frittella, Sabine and Kozhemiachenko, Daniil and Majer, Ondrej",
  journal   = "International Journal of Approximate Reasoning",
  publisher = "Elsevier BV",
  volume    =  154,
  pages     = "84--108",
  month     =  mar,
  year      =  2023,
doi={10.1016/j.ijar.2022.12.011}
}

@inproceedings{Poole1985,
  title={{On the Comparison of Theories: Preferring the Most Specific Explanation}},
  author={Poole, David},
  booktitle={Proceedings of the Ninth International Joint Conference on Artificial Intelligence},
editor       = {Aravind K. Joshi},
  pages={144--147},
publisher    = {Morgan Kaufmann},
  year={1985}
}

@book{Williamson2010,
  title={{In defence of objective Bayesianism}},
  author={Williamson, Jon},
  year={2010},
  publisher={OUP},
pagenumber={185+vi.}
}

@InProceedings{CorsiFlaminioGodoHosni2023,
  title = {A~modal logic for uncertainty: a completeness theorem},
  author = {Corsi, E.A. and Flaminio, T. and Godo, L. and Hosni, H.},
  booktitle = {Proceedings of the Thirteenth International Symposium on Imprecise Probability: Theories and Applications},
  pages = 	 {119--129},
  year = 	 {2023},
  editor = 	 {Miranda, E. and Montes, I. and Quaeghebeur, E. and Vantaggi, B.},
  volume = 	 {215},
  series = 	 {Proceedings of Machine Learning Research},
  publisher =    {PMLR}
}

@inproceedings{InoueSatoIshihataKameyaNabeshima2009,
  author       = {Inoue, Katsumi and Sato, Taisuke and Ishihata, Masakazu and Kameya, Yoshitaka and Nabeshima, Hidetomo},
  editor       = {Boutilier, Craig},
  title        = {{Evaluating Abductive Hypotheses using an EM Algorithm on BDDs}},
  booktitle    = {Proceedings of the 21st International Joint Conference on Artificial Intelligence},
  pages        = {810--815},
  year         = {2009},
}

@inproceedings{IgnatievNarodytskaMarques-Silva2019,
  author       = {Ignatiev, Alexey and Narodytska, Nina and Marques{-}Silva, Jo{\~{a}}o},
  title        = {Abduction-Based Explanations for Machine Learning Models},
  booktitle    = {The Thirty-Third {AAAI} Conference on Artificial Intelligence},
  pages        = {1511--1519},
  publisher    = {{AAAI} Press},
  year         = {2019},
  doi          = {10.1609/AAAI.V33I01.33011511},
}

@inproceedings{ConsoleTorasso1990,
  author       = {Console, Luca and Torasso, Pietro},
  title        = {Integrating Models of the Correct Behavior into Abductive Diagnosis},
  booktitle    = {9th European Conference on Artificial Intelligence},
  pages        = {160--166},
  year         = {1990},
}

@article{Nilsson1986,
  author       = {Nilsson, Nils J.},
  title        = {Probabilistic Logic},
  journal      = {Artificial Intelligence},
  volume       = {28},
  number       = {1},
  pages        = {71--87},
  year         = {1986},
  doi          = {10.1016/0004-3702(86)90031-7},
}

@inproceedings{Pople1973,
  author       = {Pople, Harry E.},
  editor       = {Nilsson, Nils J.},
  title        = {On the Mechanization of Abductive Logic},
  booktitle    = {Proceedings of the 3rd International Joint Conference on Artificial Intelligence},
  pages        = {147--152},
  publisher    = {William Kaufmann},
  year         = {1973},
}

@inproceedings{IshihataKameyaSato2012,
  author       = {Ishihata, Masakazu and Kameya, Yoshitaka and Sato, Taisuke},
  editor       = {Muggleton, Stephen H. and Tamaddoni-Nezhad, Alireza and Lisi, Francesca A.},
  title        = {{Variational Bayes Inference for Logic-Based Probabilistic Models on BDDs}},
  booktitle    = {Inductive Logic Programming~--- 21st International Conference},
  series       = {Lecture Notes in Artificial Intelligence},
  volume       = {7207},
  pages        = {189--203},
  publisher    = {Springer},
  year         = {2011},
  doi          = {10.1007/978-3-642-31951-8\_19},
}

@BOOK{PengReggia1990,
  title     = "Abductive inference models for diagnostic problem-solving",
  author    = "Peng, Yun and Reggia, James A.",
  publisher = "Springer",
  series    = "Artificial Intelligence",
  edition   =  {1st (reprint)},
  month     =  dec,
  year      =  2012,
  address   = "New York, NY",
  language  = "en"
}

@INCOLLECTION{ValkovskySavvinGerasimov1999,
  title     = "Abduction problem in probabilistic constraint logic programming",
  booktitle = "Advances in Soft Computing",
  author    = "Valkovsky, Vladislav B. and Savvin, Konstantin O. and Gerasimov, Michael B.",
  publisher = "Springer London",
  pages     = "85--98",
  year      =  1999,
  address   = "London"
}

@ARTICLE{Jaynes1957-1,
  title     = "Information theory and statistical mechanics",
  author    = "Jaynes, Edwin Thompson",
  journal   = "Physical Review",
  publisher = "American Physical Society (APS)",
  volume    =  106,
  number    =  4,
  pages     = "620--630",
  month     =  may,
  year      =  1957,
}

@ARTICLE{Jaynes1957-2,
  title     = "Information theory and statistical mechanics. {II}",
  author    = "Jaynes, Edwin Thompson",
  journal   = "Physical Review",
  publisher = "American Physical Society (APS)",
  volume    =  108,
  number    =  2,
  pages     = "171--190",
  month     =  oct,
  year      =  1957,
}

@inproceedings{KozhemiachenkoSedlar2025,
    author = {Kozhemiachenko, Daniil and Sedlár, Igor},
    title = {Complexity of {{\L}ukasiewicz} Modal Probabilistic Logics},
    booktitle = {Proceedings Twentieth Conference on Theoretical Aspects of Rationality and Knowledge},
editor={Bjorndahl, Adam},
series={Electronic Proceedings in Theoretical Computer Science},
volume={437},
pages={350--364},
    year = {2025}
}

@Article{HajekGodoEsteva2000,
  author  = {H{\'a}jek, P. and Godo, L. and Esteva, F.},
  journal = {Neural Network World},
  title   = {Reasoning about probability using fuzzy logic},
  year    = {2000},
  number  = {5},
  pages   = {811--824},
  volume  = {10},
}

@INPROCEEDINGS{BienvenuInoueKozhemiachenko2024KR,
  title           = "{Abductive Reasoning in a Paraconsistent Framework}",
  booktitle       = "Proceedings of the Twenty-First International Conference on Principles of Knowledge Representation and Reasoning",
  author          = "Bienvenu, Meghyn and Inoue, Katsumi and Kozhemiachenko, Daniil",
  publisher       = "International Joint Conferences on Artificial Intelligence Organization",
  pages           = "134--144",
  month           =  nov,
  year            =  2024,
  address         = "California",
  conference      = "21st International Conference on Principles of Knowledge Representation and Reasoning \{KR-2024\}",
  location        = "Hanoi, Vietnam",
doi={10.24963/kr.2024/13}
}

@book{deFinetti2017,
  title={Theory of Probability},
volume={1},
  author={de Finetti, Bruno},
  publisher={Wiley},
  year={1974},
city={New York}
}

@ARTICLE{Flaminio2020RSL,
  title     = "Three characterizations of strict coherence on infinite-valued events",
  author    = "Flaminio, Tommaso",
  journal   = "Review of Symbolic Logic",
  publisher = "Cambridge University Press (CUP)",
  volume    =  13,
  number    =  3,
  pages     = "593--610",
  month     =  sep,
  year      =  2020,
  language  = "en",
doi={10.1017/S1755020319000546}
}

@Article{Flaminio2007,
  author    = {Flaminio, T.},
  journal   = {Archive for Mathematical Logic},
  title     = {{NP-containment for the coherence test of assessments of conditional probability: a fuzzy logical approach}},
  year      = {2007},
  issn      = {1432-0665},
  month     = feb,
  number    = {3–4},
  pages     = {301--319},
  volume    = {46},
  doi       = {10.1007/s00153-007-0045-3},
  publisher = {Springer Science and Business Media LLC},
}

@Article{FlaminioUgolini2024,
  author       = {Flaminio, T. and Ugolini, S.},
  title        = {Encoding de {F}inetti's coherence within {{\L}}ukasiewicz logic and {MV}-algebras},
  journal      = {Annals of Pure and Applied Log.},
  volume       = {175},
  number       = {9},
  pages        = {103337},
  year         = {2024},
  doi          = {10.1016/J.APAL.2023.103337},
}

@misc{FlaminioInoueKozhemiachenko2026arxiv,
    author = {Flaminio, Tommaso and Inoue, Katsumi and Kozhemiachenko, Daniil},
    year = {2026},
    title = {{Probabilistic Abduction in a~Fuzzy Logic Framework}},
    note = {Available at \href{}{arXiv: XXX}},
}

@INCOLLECTION{Fenstad1967,
  title     = "Representations of probabilities defined on first order languages",
  booktitle = "Studies in Logic and the Foundations of Mathematics",
  author    = "Fenstad, J.E.",
  publisher = "Elsevier",
  pages     = "156--172",
  series    = "Studies in logic and the foundations of mathematics",
  year      =  1967,
  language  = "en"
}

@inproceedings{DeRaedtKimmigToivonen2007,
  title={ProbLog: A probabilistic Prolog and its application in link discovery},
  author={Raedt, Luc de and Kimmig, Angelika and Toivonen, Hannu},
  booktitle={Proceedings of the 20th international joint conference on artificial intelligence},
  pages={2462--2467},
  year={2007},
}

@inproceedings{InoueKozhemiachenko2025,
    title     = {{Complexity of Abduction in Łukasiewicz Logic}},
    author    = {Inoue, Katsumi and Kozhemiachenko, Daniil},
    booktitle = {{Proceedings of the 22nd International Conference on Principles of Knowledge Representation and Reasoning}},
    pages     = {417--428},
    year      = {2025},
    month     = {10},
    doi       = {10.24963/kr.2025/41},
  }

@inproceedings{FlaminioPretoUgolini2023,
  author       = {Flaminio, T. and Preto, S. and Ugolini, S.},
  editor       = {Marquis, P. and Son, T.C. and Kern{-}Isberner, G.},
  title        = {Reasoning about Probability via Continuous Functions},
  booktitle    = {Proceedings of the 20th International Conference on Principles of Knowledge Representation and Reasoning},
  pages        = {282--290},
  year         = {2023},
  doi          = {10.24963/KR.2023/28},
}

@ARTICLE{SatoIshihataInoue2011,
  title     = "Constraint-based probabilistic modeling for statistical abduction",
author    = "Sato, Taisuke and Ishihata, Masakazu and Inoue, Katsumi",
journal   = "Machine Learning",
  publisher = "Springer Science and Business Media LLC",
  volume    =  83,
  number    =  2,
  pages     = "241--264",
  month     =  may,
  year      =  2011,
  language  = "en"
}

@Article{EstevaGodoMontagna2001,
  author    = {Esteva, Francesc and Godo, Lluís and Montagna, Franco},
  journal   = {Archive for Mathematical Logic},
  title     = {The {$L\Pi$} and {$L\Pi\frac{1}{2}$} logics: two complete fuzzy systems joining Łukasiewicz and Product Logics},
  year      = {2001},
  number    = {1},
  pages     = {39--67},
  volume    = {40},
  doi       = {10.1007/s001530050173},
  publisher = {Springer Science and Business Media LLC},
}

@inproceedings{KateMooney2009,
  title={{Probabilistic abduction using Markov logic networks}},
  author={Kate, Rohit and Mooney, Raymond J.},
  booktitle={IJCAI-09 Workshop on Plan, Activity, and Intent Recognition},
  pages={22--28},
  year={2009}
}

@ARTICLE{Poole1993,
  title     = "{Probabilistic Horn abduction and Bayesian networks}",
  author    = "Poole, D.",
  journal   = "Artificial Intelligence",
  publisher = "Elsevier BV",
  volume    =  64,
  number    =  1,
  pages     = "81--129",
  month     =  nov,
  year      =  1993,
  language  = "en"
}

@ARTICLE{DuboisGilioKern-Isberner2008,
  title     = "{Probabilistic Abduction Without Priors}",
  author    = "Dubois, D. and Gilio, A. and Kern-Isberner, G.",
  journal   = "International Journal of Approximate Reasoning",
  publisher = "Elsevier BV",
  volume    =  47,
  number    =  3,
  pages     = "333--351",
  month     =  mar,
  year      =  2008,
  language  = "en"
}

@article{BaldiCintulaNoguera2020,
  title={{Classical and Fuzzy Two-Layered Modal Logics for Uncertainty: Translations and Proof-Theory}},
  author={Baldi, P. and Cintula, P. and Noguera, C.},
  journal={International Journal of Computational Intelligence Systems},
  volume={13},
  number={1},
  pages={988--1001},
  year={2020},
  publisher={Atlantis Press}
}

@inproceedings{BaldiCintulaNoguera2019,
  author       = {Baldi, Paolo and Cintula, Petr and Noguera, Carles},
  editor       = {Nov{\'{a}}k, Vil{\'{e}}m and Mar{\'{\i}}k, Vladim{\'{\i}}r and Stepnicka, Martin and Navara, Mirko and Hurt{\'{\i}}k, Petr},
  title        = {Translating Classical Probability Logics into Modal Fuzzy Logics},
  booktitle    = {Proceedings of the 11th Conference of the European Society for Fuzzy Logic and Technology, {EUSFLAT} 2019},
  series       = {Atlantis Studies in Uncertainty Modelling},
  volume       = {1},
  publisher    = {Atlantis Press},
  year         = {2019},
  doi          = {10.2991/EUSFLAT-19.2019.49},
}

@article{HajekTulipani2001,
  title={Complexity of fuzzy probability logics},
  author={H{\'a}jek, P. and Tulipani, S.},
  journal={Fundamenta Informaticae},
  volume={45},
  number={3},
  pages={207--213},
  year={2001},
  publisher={IOS Press}
}

@inproceedings{Vojtas1999,
  author       = {Vojt{\'{a}}\v{s}, P.},
  editor       = {Mayor, G. and Su{\~{n}}er, J.},
  title        = {{Fuzzy Logic Abduction}},
  booktitle    = {Proceedings of the {EUSFLAT-ESTYLF} Joint Conference},
  pages        = {319--322},
  publisher    = {Universitat de les Illes Balears, Palma de Mallorca, Spain},
  year         = {1999},
}

@inproceedings{YamadaMukaidono1995,
  title={{Fuzzy Abduction Based on \L{}ukasiewicz Infinite-Valued Logic and Its Approximate Solutions}},
  author={Yamada, K. and Mukaidono, M.},
  booktitle={Proceedings of 1995 IEEE International Conference on Fuzzy Systems.},
  volume={1},
  pages={343--350},
  year={1995},
  organization={IEEE}
}

@ARTICLE{ChakrabortyKonarPalJain2013,
  title     = "Extending the Contraposition Property of Propositional Logic for
               Fuzzy Abduction",
  author    = "Chakraborty, A. and Konar, A. and Pal, N.R. and Jain, L.C.",
  journal   = "IEEE Transactions in Fuzzy Systems",
  publisher = "Institute of Electrical and Electronics Engineers (IEEE)",
  volume    =  21,
  number    =  4,
  pages     = "719--734",
  month     =  aug,
  year      =  2013
}

@article{FaginHalpernMegiddo1990,
  title={A logic for reasoning about probabilities},
  author={Fagin, R. and Halpern, J.Y. and Megiddo, N.},
  journal={Information and computation},
  volume={87},
  number={1--2},
  pages={78--128},
  year={1990},
  publisher={Academic Press},
doi={10.1016/0890-5401(90)90060-U}
}

@incollection{SakamaInoue1995,
    author = {Sakama, C. and Inoue, K.},
    isbn = {9780262291439},
    title = "{The Effect of Partial Deduction in Abductive Reasoning}",
    booktitle = "{Logic Programming: The 12th International Conference}",
    publisher = {The MIT Press},
    year = {1995},
    month = {06},
    doi = {10.7551/mitpress/4298.003.0042},
}

@inproceedings{Stickel1990,
  title={{Rationale and Methods for Abductive Reasoning in Natural-Language Interpretation}},
  author={Stickel, M.E.},
  booktitle={Natural Language and Logic},
series={Lecture Notes in Artificial Intelligence},
volume={459},
  pages={233--252},
  year={1990},
  organization={Springer},
doi={10.1007/3-540-53082-7_26},
}

@ARTICLE{CreignouZanuttini2006,
  title     = "{A Complete Classification of the Complexity of Propositional Abduction}",
  author    = "Creignou, N. and Zanuttini, B.",
  journal   = "SIAM Journal of Computing",
  publisher = "Society for Industrial \& Applied Mathematics (SIAM)",
  volume    =  36,
  number    =  1,
  pages     = "207--229",
  month     =  jan,
  year      =  2006
}

@ARTICLE{BofillManyaVidalVillaret2019,
  title     = "{New complexity results for {\L}ukasiewicz logic}",
  author    = "Bofill, M. and Many{\`a}, F. and Vidal, A. and Villaret, M.",
  journal   = "Soft Comput.",
  publisher = "Springer Science and Business Media LLC",
  volume    =  23,
  number    =  7,
  pages     = "2187--2197",
  month     =  apr,
  year      =  2019,
  language  = "en"
}

@ARTICLE{Vojtas2001,
  title     = "Fuzzy logic programming",
  author    = "Vojt{\'a}{\v s}, P.",
  journal   = "Fuzzy Sets And Systems",
  publisher = "Elsevier BV",
  volume    =  124,
  number    =  3,
  pages     = "361--370",
  month     =  dec,
  year      =  2001,
  language  = "en"
}

@ARTICLE{Ebrahim2001,
  title     = "Fuzzy logic programming",
  author    = "Ebrahim, R.",
  journal   = "Fuzzy Sets And Systems",
  publisher = "Elsevier BV",
  volume    =  117,
  number    =  2,
  pages     = "215--230",
  month     =  jan,
  year      =  2001,
  language  = "en"
}

@incollection{Baaz1996,
  title={{Infinite-valued G{\"o}del logics with $0 $-$1 $-projections and relativizations}},
  author={Baaz, M.},
  booktitle={G{\"o}del'96: Logical foundations of mathematics, computer science and physics---Kurt G{\"o}del's legacy, Brno, Czech Republic, August 1996, proceedings},
  pages={23--33},
  year={1996},
  publisher={Association for Symbolic Logic}
}

@ARTICLE{ElAyebMarquisRusinowitch1993,
  title     = "Preferring diagnoses by abduction",
  author    = "El Ayeb, B. and Marquis, P. and Rusinowitch, M.",
  journal   = "IEEE Transactions on Systems, Man, and Cybernetics",
  publisher = "Institute of Electrical and Electronics Engineers (IEEE)",
  volume    =  23,
  number    =  3,
  pages     = "792--808",
  year      =  1993,
}

@inproceedings{Sato1995,
  author       = {Sato, Taisuke},
  editor       = {Sterling, Leon},
  title        = {A Statistical Learning Method for Logic Programs with Distribution Semantics},
  booktitle    = {Proceedings of the Twelfth International Conference on Logic Programming},
  pages        = {715--729},
  publisher    = {{MIT} Press},
  year         = {1995},
}

@ARTICLE{Paul1993,
  title     = "Approaches to abductive reasoning: an overview",
  author    = "Paul, G.",
  journal   = "Artificial Intelligence Review",
  publisher = "Springer Science and Business Media LLC",
  volume    =  7,
  number    =  2,
  pages     = "109--152",
  month     =  apr,
  year      =  1993,
  language  = "en"
}

@BOOK{FlachKakas2000,
  title     = "Abduction and Induction",
  editor    = "Flach, P.A. and Kakas, A.C.",
  publisher = "Springer",
  series    = "Applied Logic Series",
volume={18},
  month     =  dec,
  year      =  2000,
  address   = "Dordrecht, Netherlands",
  language  = "en"
}

@BOOK{Magnani2001,
  title     = "Abduction, reason and science",
  author    = "Magnani, L.",
  publisher = "Springer",
  edition   =  2001,
  month     =  jun,
  year      =  2011,
  address   = "New York, NY",
  language  = "en"
}

@INPROCEEDINGS{KhannaSudanTrevisan2002,
  title      = "Constraint satisfaction: the approximability of minimization problems",
  booktitle  = "Proceedings of Computational Complexity. Twelfth Annual {IEEE} Conference",
  author     = "Khanna, Sanjeev and Sudan, Madhu and Trevisan, Luca",
  publisher  = "IEEE Computing Society",
  year       =  2002,
  conference = "Computational Complexity. Twelfth Annual IEEE Conference",
  location   = "Ulm, Germany",
doi={10.1109/CCC.1997.612323}
}

@BOOK{JosephsonJosephson1994,
  title     = "Abductive Inference: Computation, Philosophy, Technology",
  editor    = "Josephson, J.R. and Josephson, S.G.",
  publisher = "Cambridge University Press",
  month     =  oct,
  year      =  1994,
  address   = "Cambridge, England"
}

@ARTICLE{Koitz-HristovWotawa2018,
  title     = "Applying algorithm selection to abductive diagnostic reasoning",
  author    = "Koitz-Hristov, R. and Wotawa, F.",
  journal   = "Applied Intelligence",
  publisher = "Springer Science and Business Media LLC",
  volume    =  48,
  number    =  11,
  pages     = "3976--3994",
  month     =  nov,
  year      =  2018,
  language  = "en"
}

@article{DaiXuYuZhou2019,
  title={Bridging machine learning and logical reasoning by abductive learning},
  author={Dai, W.-Z. and Xu, Q. and Yu, Y. and Zhou, Z.-H.},
  journal={Advances in Neural Information Processing Systems},
  volume={32},
  year={2019}
}

@inproceedings{BhagavatulaLeBrasMalaviyaSakaguchiHoltzmanRashkinDowneyYihChoi2020,
  author       = {Bhagavatula, C. and Le Bras, R. and Malaviya, C. and Sakaguchi, K. and Holtzman, A. and Rashkin, H. and Downey, D. and Yih, W.-t. and Choi, Y.},
  title        = {{Abductive Commonsense Reasoning}},
  booktitle    = {8th International Conference on Learning Representations, {ICLR} 2020},
  publisher    = {OpenReview.net},
  year         = {2020},
}

@ARTICLE{EiterGottlob1995,
  title     = "{The Complexity of Logic-Based Abduction}",
  author    = "Eiter, T. and Gottlob, G.",
  journal   = "Journal of the Association for Computing Machinery",
  publisher = "Association for Computing Machinery (ACM)",
  volume    =  42,
  number    =  1,
  pages     = "3--42",
  month     =  jan,
  year      =  1995,
  language  = "en"
}

@ARTICLE{MellouliBouchon-Meunier2003,
  title     = "Abductive reasoning and measures of similitude in the presence of fuzzy rules",
  author    = "Mellouli, N. and Bouchon-Meunier, B.",
  journal   = "Fuzzy Sets And Systems",
  publisher = "Elsevier BV",
  volume    =  137,
  number    =  1,
  pages     = "177--188",
  month     =  jul,
  year      =  2003,
  language  = "en"
}

@ARTICLE{MiyataFuruhashiUchikawa1998,
  title     = "A proposal of abductive inference with degrees of manifestations",
  author    = "Miyata, Y. and Furuhashi, T. and Uchikawa, Y.",
  journal   = "International Journal of Intelligent Systems",
  publisher = "Hindawi Limited",
  volume    =  13,
  number    =  6,
  pages     = "467--481",
  month     =  jun,
  year      =  1998,
  language  = "en"
}

@ARTICLE{AzzoliniBellodiferilliRiguzziZese2022,
  title     = "Abduction with probabilistic logic programming under the distribution semantics",
  author    = "Azzolini, Damiano and Bellodi, Elena and Ferilli, Stefano and Riguzzi, Fabrizio and Zese, Riccardo",
  journal   = "International Journal of Approximate Reasoning",
  publisher = "Elsevier BV",
  volume    =  142,
  pages     = "41--63",
  month     =  mar,
  year      =  2022,
  language  = "en",
doi={10.1016/j.ijar.2021.11.003}
}

@article{BienvenuInoueKozhemiachenko2026,
    author = {Meghyn Bienvenu and Katsumi Inoue and Daniil Kozhemiachenko},
    title = {{Abductive Reasoning in Expansions of Belnap--Dunn Logic}},
    journal = {Journal of Artificial Intelligence Research},
volume={85},
doi={10.1613/jair.1.18874},
    year = {2026}
}

@ARTICLE{Shanahan2005,
  title     = "Perception as abduction: turning sensor data into meaningful representation",
  author    = "Shanahan, M.",
  journal   = "Cognitive Science",
  publisher = "Wiley",
  volume    =  29,
  number    =  1,
  pages     = "103--134",
  month     =  jan,
  year      =  2005,
}

@INCOLLECTION{Tsypyschev2017,
  title     = "Application of risk theory approach to fuzzy abduction",
  series = "Advances in Intelligent Systems and Computing",
booktitle={Cybernetics and Mathematics Applications in Intelligent Systems},
volume={574},
  author    = "Tsypyschev, V.N.",
  publisher = "Springer International Publishing",
  pages     = "13--19",
  year      =  2017,
  address   = "Cham"
}

@INPROCEEDINGS{dAllonnesAkdagBouchon-Meunier2007,
  title           = "Selecting implications in fuzzy abductive problems",
  booktitle       = "2007 {IEEE} Symposium on Foundations of Computational Intelligence",
  author          = "d'Allonnes, A.R. and Akdag, H. and Bouchon-Meunier, B.",
  publisher       = "IEEE",
  month           =  apr,
  year            =  2007,
  location        = "Honolulu, HI, USA"
}

@INCOLLECTION{BergadanoCutelloGunetti2000,
  title     = "Abduction in Machine Learning",
  booktitle = "Abductive Reasoning and Learning",
  author    = "Bergadano, F. and Cutello, V. and Gunetti, D.",
  publisher = "Springer Netherlands",
volume={4},
series={Handbook of Defeasible Reasoning and Uncertainty Management Systems},
editor={Gabbay, D.M. and Smets, P.},
  pages     = "197--229",
  year      =  2000,
  address   = "Dordrecht"
}
\clearpage
\onecolumn
\appendix
\section{Proofs of Section~\ref{sec:complexity}}
\FPLuktoLukembedding*
\begin{proof}
It is clear from Definitions~\ref{def:Luksemantics} and~\ref{def:FPLuksemantics} that if $\phi$ is $\Luk$-valid, then it is $\FP$-valid. For the converse direction, let $v(\phi)<1$ for some $\Luk$-valuation~$v$. As all variables in~$\phi$ are pairwise independent, it is clear that there is a~measure $\mu$ on $2^{2^{\Var(\phi)}}$ s.t.\ $\mu(\|p\|)=v(p)$ for every $p\in\Var(\phi)$. From here, it follows that $\Imc_\Mfrak(\phi^\Prob)=v(\phi)<1$ with $\Mfrak=\langle{2^{\Var(\phi)}},\mu\rangle$.
\end{proof}
\subsection{Proofs of Section~\ref{ssec:solutionrecognition}}
We begin with a~technical definition.
\begin{definition}\label{def:outercounterpart}
Let $\alpha\!\in\!\LProbint$, $\sharp{\in}\{\neg,\triangle\}$, and $\circledast{\in}\{\odot,\oplus,\rightarrow\}$. For every $\phi\in\Emc(\alpha)$, we let $p_\phi$ to be a~fresh propositional variable and define the \emph{outer counterpart} $\alpha^\uparrow$ of~$\alpha$ as follows:
\begin{align*}
(\Prob(\phi))^\uparrow&=p_\phi&(\Prob(\phi)\diamond\cvalue)^\uparrow&=p_\phi\diamond\cvalue\\
(\sharp\alpha)^\uparrow&=\sharp\alpha^\uparrow&(\alpha\circledast\beta)^\uparrow&=\alpha^\uparrow\circledast\beta^\uparrow
\end{align*}
For $\Gamma\subseteq\LProbint$, we set $\Gamma^\uparrow=\{\alpha^\uparrow\mid\alpha\in\Gamma\}$.
\end{definition}
Let us briefly consider how $(\cdot)^\uparrow$ works. E.g., if $\alpha=\Prob(q\wedge r)\rightarrow\neg\Prob(s)$, we need two fresh variables $p_{q\wedge r}$ and $p_s$. This gives us $\alpha^\uparrow=p_{q\wedge r}\rightarrow\neg p_s$. Note that the lengths of~$\alpha^\uparrow$ and~$\alpha$ differ by a~constant factor.
\sufficientsolutionrecognitionDP*
\begin{proof}
$\DP$-membersip follows immediately from Proposition~\ref{prop:FPLukNPcoNP} because to recognise a~sufficient solution, we need to check that $\Gamma,\eta\not\models_{\FP}\bot$ (an $\np$ task) and $\Gamma,\eta\models_{\FP}\delta$ (a~$\conp$ task) and these two checks can be run independently of one another.

For $\DP$-hardness, we construct a~polynomial reduction from the solution recognition in Łukasiewicz logic (recall Definition~\ref{def:Lukabduction} for the notions of $\Luk$-abduction problems and their solutions), which is shown to be $\DP$-complete by Inoue and Kozhemiachenko~\shortcite{InoueKozhemiachenko2025}. In particular, solution recognition for $\Luk$-abduction problems $\Pmbb=\langle\Phi,\chi,\Hmsf\rangle$ s.t.\ $\{p\vee\neg p\mid p\in\Var[\Pmbb]\}\subseteq\Phi$\footnote{Here, $p\vee\neg p$ is a~shorthand for $(p\rightarrow\neg p)\rightarrow\neg p$.} and $\Hmsf$ contains only interval literals $p{\geq}\one$ and $q{\leq}\zero$ is shown to be $\DP$-hard (cf.~\cite[Theorem~1]{InoueKozhemiachenko2025}).

Given such a~problem, we define an $\FP$ abduction problem $\Pmbb^\Prob=\langle\Phi^\Prob,\chi^\Prob,\Hmsf',\Vmc'\rangle$ with $\Hmsf'=\Var[\Hmsf]$ and $\Vmc'=\{0,1\}$ (recall Definition~\ref{def:FPLuktoLukembedding} for $\Phi^\Prob$ and~$\chi^\Prob$). Now let $\tau$ be an interval term. We show that $\tau$ is an $\Luk$-solution to~$\Pmbb$ iff~$\tau^\Prob$ is a~sufficient solution to~$\Pmbb^\Prob$. If $\tau$ solves~$\Pmbb$, then $\Phi,\tau\models_\Luk\chi$ and $\Phi,\tau\not\models_\Luk\bot$. By Proposition~\ref{prop:LukisfragmentofFPLuk}, this is equivalent to $\Phi^\Prob,\tau^\Prob\models_{\FP}\chi$ and $\Phi,\tau^\Prob\not\models_{\FP}\bot$.

Conversely, let $\eta$ be as follows:
\begin{align*}
\eta&=\bigodot^k_{i=1}\Prob(p_i){\geq}\one\odot\bigodot^{k'}_{i=1}\Prob(p'_i){>}\zero\odot\bigodot^m_{i=1}\Prob(q_i){\leq}0\odot\bigodot^{m'}_{i=1}\Prob(q'_i){<}\one
\end{align*}
We define (recall Definition~\ref{def:outercounterpart}):
\begin{align}\label{equ:etauparrow}
\eta^\uparrow&=\bigodot_{\Prob(p){\geq}\one\in\eta}p{\geq}\one\odot\bigodot_{\Prob(p'){>}\zero\in\eta}p'{\geq}\one\odot\bigodot_{\Prob(q){\leq}\zero\in\eta}q{\leq}\zero\odot\bigodot_{\Prob(q'){<}\one\in\eta}q'{\leq}\zero
\end{align}
We show that $\eta$ is a~sufficient solution to~$\Pmbb^\Prob$ iff $\eta^\uparrow$ is a~solution to the $\Luk$-abduction problem~$\Pmbb$. Since $\Prob(r)\vee\neg\Prob(r)\in\Phi^\Prob$ for every $r\in\Var[\Pmbb]$ and since $\Imc_\Mfrak(\alpha\vee\neg\alpha)=1$ iff $\Imc_\Mfrak(\alpha)\in\{0,1\}$ for every $\alpha\in\LProbint$, it follows that
\begin{align}\label{equ:LEMequivalence}
\forall i\in\{1,\ldots,k'\}:\Phi^\Prob\models_{\FP}\Prob(p'_i){>}\zero\leftrightarrow\Prob(p'_i){\geq}\one
&&
\forall j\in\{1,\ldots,m'\}:\Phi^\Prob\models_{\FP}\Prob(q'_j){<}\one\leftrightarrow\Prob(q'_j){\leq}\zero
\end{align}
Thus, we can replace PILs $\Prob(p'_i){>}\zero$ and $\Prob(q'_j){<}\one$ in~$\eta$ with $\Prob(p'_i){\geq}\one$ and $\Prob(q'_j){\leq}\zero$:
\begin{align}\label{equ:etapitchfork}
\eta^\pitchfork&=\bigodot^k_{i=1}\Prob(p_i){\geq}\one\odot\bigodot^{k'}_{i=1}\Prob(p'_i){\geq}\one\odot\bigodot^m_{i=1}\Prob(q_i){\leq}0\odot\bigodot^{m'}_{i=1}\Prob(q'_i){\leq}\zero
\end{align}
It is clear that $\eta^\pitchfork$~is a~sufficient solution to~$\Pmbb^\Prob$ iff $\eta$ is a~sufficient solution to~$\Pmbb^\Prob$. Now, observe from Definition~\ref{def:FPLuktoLukembedding}, \eqref{equ:etauparrow}, and~\eqref{equ:etapitchfork} that $\eta^\pitchfork=(\eta^\uparrow)^\Prob$. Thus, we have
\begin{align*}
\Phi^\Prob,\eta\consvDashFPLuk\chi^\Prob&\text{ iff }\Phi^\Prob,\eta^\pitchfork\consvDashFPLuk\chi^\Prob\tag{from~\eqref{equ:LEMequivalence}}\\
&\text{ iff }\Phi,\eta^\uparrow\consvDashLuk\chi\tag{by Proposition~\ref{prop:LukisfragmentofFPLuk}}
\end{align*}
The result now follows.
\end{proof}
\fullsolutionrecognitionpolynomial*
\begin{proof}
Observe from Definition~\ref{def:probabilisticintervalliterals} that $\langle\Vmbf,\Vmc\rangle$-complete PITs correspond to a~unique probability distribution on~$2^\Vmbf$. This means that for every $\theta=\bigodot^k_{i=1}\Prob(\tau_i){\leq}\cvalue_i\odot\bigodot^k_{i=1}\Prob(\tau_i){\geq}\cvalue_i$ there is exactly one probabilistic model $\Mfrak_\theta=\langle2^\Vmbf,\mu_\theta\rangle$ s.t.\ $\Imc_{\Mfrak_\theta}(\theta)=1$. Thus, to determine whether $\Gamma,\theta\consvDashFPLuk\delta$, it suffices to check that $\Imc_{\Mfrak_\theta}(\gamma)=1$ for every $\gamma\in\Gamma$ and $\Imc_{\Mfrak_\theta}(\delta)=1$.

Now, we need to compute the values of probabilistic atoms occurring in~$\Gamma\cup\{\delta\}$. To do that, observe that since $\theta$ defines a~probability distribution, $c_i$'s correspond to the assignment of probabilities to the \emph{atoms} of $2^{2^\Vmbf}$. Thus, $\mu_\theta(\|\phi\|_{\Mfrak_\theta})=\sum_{\tau_i\models_\CPL\phi}c_i$. Furthermore, since $\tau_i$'s contain all variables in~$\Vmbf$, it takes polynomial time to check that $\tau_i\models_\CPL\phi$. Thus, it takes polynomial time (w.r.t.\ the length of~$\theta$) to determine the value $\mu_\theta(\|\phi\|_{\Mfrak_\theta})$ for every $\phi\in\Emc[\Pmbb]$. It follows that the values of $\Imc_{\Mfrak_\theta}(\Prob(\phi))$ can also be computed in polynomial time. Hence, it takes polynomial time to determine the values of $\Imc_{\Mfrak_\theta}(\gamma)$ for $\gamma\in\Gamma\cup\{\delta\}$, as required.
\end{proof}
\sufficientsolutionexistence*
\begin{proof}
For $\Sigma^\Pmsf_2$-hardness, we reuse the reduction from Theorem~\ref{theorem:sufficientsolutionrecognitionDP}. Namely, given an $\Luk$-abduction problem $\Pmbb=\langle\Phi,\chi,\Hmsf\rangle$ s.t.\ $\{p\vee\neg p\mid p\in\Var[\Pmbb]\}\subseteq\Phi$ and $\Hmsf$ contains only interval literals $p{\geq}\one$ and $q{\leq}\zero$,\footnote{Observe from~\cite[Theorem~5]{InoueKozhemiachenko2025} that solution existence for such $\Luk$-abduction problems is $\Sigma^\Pmsf_2$-hard. Moreover, observe that since $p\vee\neg p$ is a~shorthand for $(p\rightarrow\neg p)\rightarrow\neg p$, it follows from Definition~\ref{def:Luksemantics} that for every $\Luk$-valuation $v$, it holds that $v(p\vee\neg p)=1$ iff $v(p)\in\{0,1\}$. Thus, adding $p\vee\neg p$ for each $p\in\Var[\Pmbb]$ to the theory forces all propositional variables to have \emph{classical} values.} we construct $\Pmbb^\Prob=\langle\Phi^\Prob,\chi^\Prob,\Hmsf',\Vmc'\rangle$ as shown in the proof of Theorem~\ref{theorem:sufficientsolutionrecognitionDP}. Then, by Proposition~\ref{prop:LukisfragmentofFPLuk}, $\tau$ is a~solution to~$\Pmbb$, iff $\tau^\Prob$ is a~sufficient solution to~$\Pmbb^\Prob$. And conversely, $\eta$~is a~sufficient solution to~$\Pmbb^\Prob$ iff $\eta^\uparrow$ (cf.~\eqref{equ:etauparrow}) is a~solution to~$\Pmbb$.

For $\Sigma^\Pmsf_2$-membership, we observe from Theorem~\ref{theorem:sufficientsolutionrecognitionDP} that we can guess a~solution candidate~$\eta$ and then verify it in $\DP$ time. This gives us the desired upper bound. The result now follows.
\end{proof}

\fullsolutionexistenceNP*
\begin{proof}
We begin with $\np$-hardness. To show that the existence of concise full solutions is $\np$-hard, we provide a~polynomial reduction from $\CPL$-satisfiability. Namely, let $\phi\in\CPL$, $p\in\Var(\phi)$, and $\Vmbf=\Var(\phi)$. We show that $\phi\in\LCPL$ is $\CPL$-satisfiable iff $\Pmbb=\langle\varnothing,\Prob(\phi),\{p\},\{0,1\}\rangle$ has a~full solution.\footnote{Note that the theory of $\Pmbb$ is empty. Recall, furthermore, from Definition~\ref{def:solutions} that full solutions do not depend on the set of hypotheses.}

Let $\phi$ be $\CPL$-satisfiable and let $v$ be a~$\CPL$-valuation s.t.\ $v(\phi)=1$. We show that $\theta=\Prob\left(\bigwedge_{v(q)=1}q\wedge\bigwedge_{v(r)=0}\neg r\right){\geq}\one$ is a~full solution to~$\Pmbb$. First, $\theta$ is a~$\langle\Vmbf,\Vmc\rangle$-complete PIT, whence $\theta\not\models_{\FP}\bot$. Second, let $\Mfrak=\langle2^\Vmbf,\mu\rangle$ be a~probabilistic model s.t.\ $\Imc_\Mfrak(\theta)=1$. It follows that $\mu(\left\|\bigwedge_{v(q)=1}q\wedge\bigwedge_{v(r)=0}\neg r\right\|_\Mfrak)=1$, and thus $\mu(\|\phi\|_\Mfrak)=1$. Hence, $\theta\consvDashFPLuk\Prob(\phi)$ as required.

Conversely, let $\theta$ be a~full solution to~$\Pmbb$. Since $\theta$ is a~$\langle\Vmbf,\Vmc\rangle$-complete PIT, we can w.l.o.g.\ assume that $\theta=\Prob(\tau){\geq}\one$ with $\tau$~being an $\LCPL$-term s.t.\ $\Var(\tau)=\Vmbf$ and $\Prob(\tau){\geq}\one\consvDashFPLuk\Prob(\phi)$. Hence, by Definitions~\ref{def:FPLuksemantics} and~\ref{def:FPentailment} we have that $\tau\consvDashCPL\phi$. Thus, we can define a~classical valuation $v$ s.t.\ $v(q)=1$ iff $q\in\tau$. It is clear that $v(\phi)=1$, i.e., $\phi$ is $\CPL$-satisfiable.

For $\np$-membership observe from Theorem~\ref{theorem:fullsolutionrecognitionpolynomial} that it takes polynomial time w.r.t.\ the combined size of the given $\langle\Vmbf,\Vmc\rangle$-complete PIT~$\theta$ and an $\FP$ AP~$\Pmbb$ to check whether $\theta$ is a~full solution to~$\Pmbb$. Thus, given an $\FP$ AP, we can guess a~candidate for a~concise full solution and then verify it in polynomial time w.r.t.\ the size of~$\Pmbb$ (recall from Definition~\ref{def:concisesolution} that the size of a~concise solution to~$\Pmbb$ is bounded by the size of~$\Pmbb$).
\end{proof}
\subsection{Proofs of Section~\ref{ssec:prioritisation}}
\maxentropyconp*
\begin{proof}
We begin with $\conp$-hardness. We construct a~polynomial reduction from the validity of $\LCPL$-formulas, which is $\conp$-complete. Let $\phi\in\LCPL$, $q\notin\Var(\phi)$, and $\Vmbf=\Var(\phi)\cup\{q\}$, and consider the following $\FP$ abduction problem $\Pmbb_\mathsf{taut}=\langle\Gamma_\mathsf{taut},\delta_\mathsf{taut},\Hmsf_\mathsf{taut},\Vmc_\mathsf{taut}\rangle$ and PIT~$\theta_\mathsf{taut}$:
\begin{align*}
\Gamma_\mathsf{taut}&=\left\{\Prob\left(\phi\rightarrow\left(q\wedge\bigwedge_{p\in\Var(\phi)}p\right)\right)\right\}&\delta_\mathsf{taut}&=\Prob(q\vee\neg q)&
\Hmsf_\mathsf{taut}&=\{q\}&\Vmc_\mathsf{taut}&=\{0,\tfrac{1}{2},1\}\\
\theta_\mathsf{taut}&=\Prob\left(q\wedge\bigwedge_{p\in\Var(\phi)}p\right){\approx}\one
\end{align*}
Observe that $\theta_\mathsf{taut}$ is indeed a $\langle\Vmbf,\Vmc_\mathsf{taut}\rangle$-complete PIT and that $\Pmbb_\mathsf{taut}$ and $\theta_\mathsf{taut}$ are constructible in polynomial time. Moreover, it contains only four PILs, so it is concise ($\Pmbb_\mathsf{taut}$ contains two different probabilistic atoms, whence we can use up to six PILs in a~full solution for it to be concise). We show that $\phi$~is $\CPL$-valid iff $\theta_\mathsf{taut}$ is a~CEM solution to~$\Pmbb_\mathsf{taut}$.

First, assume that $\phi$ is $\CPL$-valid. Then we have that $\phi\rightarrow(q\wedge\bigwedge_{p\in\Var(\phi)}p)$ is (classically) equivalent to $q\wedge\bigwedge_{p\in\Var(\phi)}p$. Thus, $\Prob(\phi\rightarrow(q\wedge\bigwedge_{p\in\Var(\phi)}p))$ is $\FP$-equivalent to $\Prob(q\wedge\bigwedge_{p\in\Var(\phi)}p)$. It follows from Definitions~\ref{def:FPLuksemantics} and~\ref{def:FPLukabduction} that $\theta_\mathsf{taut}$ is a~full solution to~$\Pmbb_\mathsf{taut}$. In fact, it is \emph{the only} full solution to~$\Pmbb_\mathsf{taut}$ because any other probability distribution is inconsistent with~$\Gamma_\mathsf{taut}$. Hence, $\theta_\mathsf{taut}$ is (trivially) entropy-maximal. One can also compute that $\Hmbb(\theta_\mathsf{taut})=0$.

For the converse direction, assume that $\phi$ is not $\CPL$-valid. In this case, $\phi\rightarrow(q\wedge\bigwedge_{p\in\Var(\phi)}p)$ is \emph{not equivalent} to $q\wedge\bigwedge_{p\in\Var(\phi)}p$. Now let $v$~be a~classical valuation s.t.\ $v(\phi)=0$ and construct the following PIT:
\begin{align*}
\theta^+&=\left(\Prob\left(q\wedge\bigwedge_{p\in\Var(\phi)}p\right){=}\tfrac{1}{2}\right)\odot\left(\Prob\left(\neg q\wedge\bigwedge_{v(r)=1}r\wedge\bigwedge_{v(s)=0}\neg s\right){=}\tfrac{1}{2}\right)
\end{align*}
Observe that~$\theta^+$ is $\langle\Vmbf,\Vmc_\mathsf{taut}\rangle$-complete and concise (it contains four PILs). Note, moreover, that even if there is only one valuation that falsifies~$\phi$, $\theta^+$ is still correctly defined because $q\wedge\bigwedge_{p\in\Var(\phi)}p,\neg q\wedge\bigwedge_{v(r)=1}r\wedge\bigwedge_{v(s)=0}\neg s\models_\CPL\bot$ and $\Var(q\wedge\bigwedge_{p\in\Var(\phi)} p)=\Var(\neg q\wedge\bigwedge_{v(r)=1}r\wedge\bigwedge_{v(s)=0}\neg s)=\Vmbf$. One can also see from Definition~\ref{def:solutions} that~$\theta^+$ is a~complete solution to~$\Pmbb$. Finally, we can compute the entropy of~$\theta^+$ and obtain $\Hmbb(\theta^+)=1$. Hence, $\theta_\mathsf{taut}$ was not a~CEM solution.

For $\conp$-membership, we sketch an $\np$ procedure that solves the complementary problem. Given a~$\langle\Vmc,\Vmbf\rangle$-complete PIT~$\theta$, we guess another $\langle\Vmc,\Vmbf\rangle$-com\-plete PIT $\theta'$ with at most~$2\cdot k+2$ PILs, calculate $\Hmbb(\theta)$ and $\Hmbb(\theta')$, and verify that $\Hmbb(\theta)<\Hmbb(\theta')$ in polynomial time. If we succeed, $\theta$ \emph{is not a~CEM solution}.
\end{proof}
\PITcomplexity*
\begin{proof}
We begin with the Item~1. $\np$-membership follows from Proposition~\ref{prop:FPLukNPcoNP}. We provide a~reduction from the classical \emph{non-validity problem of DNFs}, which is $\np$-complete. Namely, let $\phi=\bigvee\limits^m_{i=1}\bigwedge\limits^n_{j=1}l^i_j$ be a~formula in DNF, and define 
\begin{align*}
\zeta_\phi&\coloneqq\bigodot\limits^m_{i=1}\left(\Prob\left(\bigwedge\limits^{n}_{j=1}l^i_j\right)\leq\zero\right)
\end{align*}
We show that there is some classical valuation $v$ s.t.\ $v(\phi)=0$ iff $\zeta_\phi$ is $\FP$-satisfiable. Let first $\phi$ be $\CPL$-valid. It is clear that $\mu(\|\phi\|_\Mfrak)\!=\!1$ for every probabilistic model $\Mfrak\!=\!\langle2^{\Var(\phi)},\mu\rangle$. This means that $\sum^m_{i=1}\mu(\|\bigwedge^{n}_{j=1}l^i_j\|)\geq1$, whence, $\Imc_\Mfrak(\Prob(\bigwedge^{n}_{j=1}\!l^i_j))>0$ for some $i\in\{1,\ldots,m\}$. This means that for every $\Mfrak$, there is at least one $i\in\{1,\ldots,m\}$ s.t.\ $\Imc_\Mfrak(\Prob(\bigwedge^{n}_{j=1}\!l^i_j)\leq\zero)=0$. Hence, $\zeta_\phi$ is $\FP$-unsatisfiable.

Conversely, let $\phi$ be classically \emph{non-valid}. Then there is some classical valuation~$v$ s.t.\ $v(\phi)=0$. Since $\phi$ is in DNF, $v(\bigwedge^n_{j=1}l^i_j)=0$ for every $i\in\{1,\ldots,m\}$. We need to construct a~probabilistic model $\Mfrak=\langle2^{\Var(\phi)},\mu\rangle$ s.t.\ $\Imc_\Mfrak(\zeta_\phi)=1$. This means that $\mu(\|\bigwedge^n_{j=1}l^i_j\|_\Mfrak)=0$ for every $i\in\{1,\ldots,m\}$. Since $\mu$ is a~probability measure, it suffices to define it on the atoms of $2^{2^{\Var(\phi)}}$, i.e., on subsets of variables of~$\phi$. Now, let $\Mfrak=\langle2^{\Var(\phi)},\mu\rangle$ be a~probabilistic model s.t.\ $\mu(\{X\})=1$ when $X=\{p\mid v(p)=1\}$ and $\mu(\{Y\})=0$ for every other $Y\subseteq\Var(\phi)$ s.t.\ $Y\neq X$. It is clear that $\mu(\|\bigwedge^n_{j=1}l^i_j\|_\Mfrak)=0$ for every $i\in\{1,\ldots,m\}$. Hence, $\Imc_\Mfrak(\zeta_\phi)=1$, as required.

In the second item, $\conp$-membership follows from Proposition~\ref{prop:FPLukNPcoNP}. For hardness, that we reduce $\FP$-\emph{unsatisfiability} of PITs to entailment as follows: $\zeta$ is $\FP$-unsatisfiable iff $\zeta\models_{\FP}\Prob(p)>\one$. (Note that $\Prob(p)>\one$ is $\FP$-unsatisfiable.)
\end{proof}

Before proceeding to the proof of Theorem~\ref{theorem:FPminimalrecognitionDP}, we first formalise the notion of the ‘next weakest’ PIL. The next definitions generalise a~similar notion from~\cite[cf.~\href{https://arxiv.org/abs/2507.13847}{arXiv:2507.13847}]{InoueKozhemiachenko2025}.

\begin{definition}\label{def:hypothesesordering}
Let $\Hmsf$ be a~finite set of $\LCPL$-terms and $\sigma\in\Hmsf$. We say that
\begin{itemize}[noitemsep,topsep=0pt]
\item $\varrho$ is a~\emph{next strongest term in~$\Hmsf$} w.r.t.\ $\sigma$ ($\varrho\Rrightarrow_\Hmsf\sigma$) iff $\varrho\models_\CPL\sigma$, $\sigma\not\models_\CPL\varrho$, and there is no $\sigma'\in\Hmsf$ s.t.\ $\sigma'\models_\CPL\sigma$, $\varrho\models_\CPL\sigma$, and $\sigma'\not\models_\CPL\varrho$;
\item $\tau$ is a~\emph{next weakest term in~$\Hmsf$} w.r.t.\ $\sigma$ ($\tau\Lleftarrow_\Hmsf\sigma$) iff $\sigma\models_\CPL\tau$, $\tau\not\models_\CPL\sigma$, and there is no $\sigma''\in\Hmsf$ in s.t.\ $\sigma\models_\CPL\sigma''$, $\sigma''\models_\CPL\tau$, and $\tau\not\models_\CPL\sigma''$.
\end{itemize}
\end{definition}

The following statement is evident.
\begin{lemma}\label{lemma:hypothesesorderingpolynomial}
Given a~set of $\LCPL$-terms $\Hmsf$ and $\varrho,\sigma,\tau\in\Hmsf$, it takes polynomial time w.r.t.\ the size of~$\Hmsf$ to determine whether~$\varrho$ is a~next strongest term in~$\Hmsf$ and whether~$\tau$ is a~next weakest term in~$\Hmsf$ w.r.t.~$\sigma$.
\end{lemma}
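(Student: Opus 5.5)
The statement is Lemma~\ref{lemma:hypothesesorderingpolynomial}, and the plan is to reduce it to polynomially many entailment checks between $\LCPL$-terms. Each such check is decidable in polynomial (indeed linear) time by a purely syntactic test.

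First, I would recall how classical entailment between terms works. For $\LCPL$-terms $\varrho$ and $\sigma$, $\varrho\models_\CPL\sigma$ holds iff either $\varrho$ is classically unsatisfiable or every literal of $\sigma$ occurs in $\varrho$. A term is unsatisfiable iff it contains a complementary pair $p,\neg p$. Both conditions can be checked by sorting or marking the literals of the terms in time polynomial (in fact quadratic at worst) in $|\varrho|+|\sigma|$. So every atomic query ``$\alpha\models_\CPL\beta$'' with $\alpha,\beta\in\Hmsf$ costs time polynomial in the size of~$\Hmsf$.

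Next, I would unfold Definition~\ref{def:hypothesesordering}.
\begin{itemize}[noitemsep,topsep=0pt]
\item To decide $\varrho\Rrightarrow_\Hmsf\sigma$, we first test $\varrho\models_\CPL\sigma$ and $\sigma\not\models_\CPL\varrho$. We then iterate over all $\sigma'\in\Hmsf$ and test the three conditions in the forbidden configuration. I would read these as $\sigma'\models_\CPL\sigma$, $\varrho\models_\CPL\sigma'$ and $\sigma'\not\models_\CPL\varrho$, correcting the apparent typo.
\item The case $\tau\Lleftarrow_\Hmsf\sigma$ is symmetric: two direct tests, then a loop over $\sigma''\in\Hmsf$ with three term-entailment tests each.
\end{itemize}
Altogether this takes $O(|\Hmsf|)$ entailment tests, each polynomial, so the total is polynomial in the size of~$\Hmsf$. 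Whichever reading of the definition is taken, the condition is a fixed Boolean combination of entailments between members of~$\Hmsf$, so the bound is unaffected.

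There is no real obstacle here. The only point needing care is that the syntactic characterisation of term entailment must handle inconsistent terms, via the complementary-literal check. Once that is in place, the claim follows immediately.
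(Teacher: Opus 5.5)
Your argument is correct and is exactly the routine check the paper leaves implicit when it calls the lemma evident: $O(|\Hmsf|)$ syntactic term-entailment tests, with unsatisfiable terms handled through complementary literals. Your robustness remark is also the right safeguard, because even your corrected reading of Definition~\ref{def:hypothesesordering} is always witnessed by $\sigma'=\sigma$ (and $\sigma''=\sigma$ for the next-weakest case), so a strictness condition such as $\sigma\not\models_\CPL\sigma'$ is evidently intended; any such repair is still a fixed Boolean combination of term entailments, and the polynomial bound is unaffected.
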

\begin{definition}\label{def:nextweakestPIL}
Let $\Vmc=\{0,\tfrac{1}{n},\ldots,\tfrac{n-1}{1},1\}$, $\diamond\in\{\leq,<,\geq,>\}$, and $\Hmsf\subseteq\LCPL$ be a~finite set of $\LCPL$-terms. Let further $\lambda=\Prob(\sigma)\diamond\overline{\tfrac{k}{n}}$ with $\sigma\in\Hmsf$. We define $\lambda^\flat_\Vmc$ as follows.
\begin{align*}
(\Prob(\sigma){\geq}\overline{\tfrac{k}{n}})^\flat_\Vmc&=\Prob(\sigma){>}\overline{\tfrac{k-1}{n}}&(\Prob(\sigma){>}\overline{\tfrac{k}{n}})^\flat_\Vmc&=\Prob(\sigma){\geq}\overline{\tfrac{k}{n}}&(\Prob(\sigma){\leq}\overline{\tfrac{k}{n}})^\flat_\Vmc&=\Prob(\sigma){<}\overline{\tfrac{k+1}{n}}&(\Prob(\sigma){<}\overline{\tfrac{k}{n}})^\flat_\Vmc&=\Prob(\sigma){\leq}\overline{\tfrac{k}{n}}
\end{align*}
Furthermore, let $\lambda_1=\Prob(\sigma){\geq}\overline{\tfrac{k}{n}}$, $\lambda_2=\Prob(\sigma){>}\overline{\tfrac{k}{n}}$, $\lambda_3=\Prob(\sigma){\leq}\overline{\tfrac{k}{n}}$, and $\lambda_4=\Prob(\sigma){<}\overline{\tfrac{k}{n}}$ be PILs. We define:
\begin{align*}
\Lambda^\flat_{\lambda_1,\Hmsf}&=\{\Prob(\tau){\geq}\overline{\tfrac{k}{n}}\mid\tau\Lleftarrow_\Hmsf\sigma\}&
\Lambda^\flat_{\lambda_2,\Hmsf}&=\{\Prob(\tau){>}\overline{\tfrac{k}{n}}\mid\tau\Lleftarrow_\Hmsf\sigma\}\\
\Lambda^\flat_{\lambda_3,\Hmsf}&=\{\Prob(\varrho){\leq}\overline{\tfrac{k}{n}}\mid\varrho\Rrightarrow_\Hmsf\sigma\}&
\Lambda^\flat_{\lambda_4,\Hmsf}&=\{\Prob(\varrho){<}\overline{\tfrac{k}{n}}\mid\varrho\Rrightarrow_\Hmsf\sigma\}
\end{align*}
\end{definition}

The next lemma can be straightforwardly obtained from Definition~\ref{def:nextweakestPIL}.
\begin{lemma}\label{lemma:nextweakestPIL}
Let $\Vmc=\{0,\tfrac{1}{n},\ldots,\tfrac{n-1}{1},1\}$ and $\Hmsf$ be a~finite set of $\LCPL$-terms. Let further, $\lambda$ be a~PIL s.t.\ $\Emc(\lambda)\subseteq\Hmsf$, $\lambda^\flat_\Vmc$ exist and $\lambda^*\in\Lambda^\flat_{\lambda,\Hmsf}$. Then the following statements hold.
\begin{enumerate}[noitemsep,topsep=0pt]
\item There is no PIL $\lambda'$ s.t.\ $\Emc(\lambda')\subseteq\Hmsf$, $\lambda\models_\FP\lambda'$, $\lambda'\models_\FP\lambda^\flat_\Vmc$, $\lambda'\not\models_\FP\lambda$, and $\lambda^\flat_\Vmc\not\models_\FP\lambda'$.
\item There is no PIL $\lambda''$ s.t.\ $\Emc(\lambda'')\subseteq\Hmsf$, $\lambda\models_\FP\lambda''$, $\lambda''\models_\FP\lambda^*$, $\lambda''\not\models_\FP\lambda$, and $\lambda^*\not\models_\FP\lambda''$.
\end{enumerate}
\end{lemma}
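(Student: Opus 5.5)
The argument is a semantic comparison of permitted value sets combined with the monotonicity of $\mu$ along $\models_\CPL$. The tool is this observation: for PILs $\lambda_a=\Prob(\sigma_a)\diamond_a\overline{c_a}$ and $\lambda_b=\Prob(\sigma_b)\diamond_b\overline{c_b}$, we have $\lambda_a\models_\FP\lambda_b$ iff, in every probabilistic model, the value of $\Prob(\sigma_a)$ lying in $\Vmsf_\Prob(\lambda_a)$ forces the value of $\Prob(\sigma_b)$ into $\Vmsf_\Prob(\lambda_b)$. For $\sigma_a=\sigma_b$ this reduces to $\Vmsf_\Prob(\lambda_a)\subseteq\Vmsf_\Prob(\lambda_b)$. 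For $\sigma_a\neq\sigma_b$ the only available link is monotonicity, $\sigma_a\models_\CPL\sigma_b$ implies $\mu(\|\sigma_a\|)\le\mu(\|\sigma_b\|)$. Apart from this, the two values can be chosen independently, because the set of reachable pairs $(\mu(\|\sigma_a\|),\mu(\|\sigma_b\|))$ is $\{(x,y)\mid x\le y\}$, $\{(x,y)\mid x\ge y\}$, or the full square $[0,1]^2$, depending on whether $\sigma_a\models\sigma_b$, $\sigma_b\models\sigma_a$, or neither. Terms that are not jointly satisfiable may need separate treatment; I would note this and handle it via the same monotonicity bookkeeping.

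The first step is to establish a characterisation lemma: $\lambda_a\models_\FP\lambda_b$ (with both PILs satisfiable) iff one of the following holds.
\begin{enumerate}[noitemsep,topsep=0pt]
\item $\sigma_a\equiv\sigma_b$ and $\Vmsf_\Prob(\lambda_a)\subseteq\Vmsf_\Prob(\lambda_b)$.
\item Both are lower bounds ($\geq$ or $>$), $\sigma_a\models_\CPL\sigma_b$, and the threshold of $\lambda_a$ is at least as strong as that of $\lambda_b$.
\item Both are upper bounds ($\le$ or $<$), $\sigma_b\models_\CPL\sigma_a$, and the threshold of $\lambda_a$ is at least as strong as that of $\lambda_b$.
\end{enumerate}
Each converse direction is proved by building a two-point or three-point model that separates the values.

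For item~1, I would take $\lambda'$ with $\lambda\models\lambda'\models\lambda^\flat_\Vmc$. By the characterisation, the term of $\lambda'$ is sandwiched between $\sigma$ and $\sigma$ in the relevant direction, so it is equivalent to $\sigma$. This reduces the question to intervals of one atom. With the same term, $\Vmsf_\Prob(\lambda)\subseteq\Vmsf_\Prob(\lambda')\subseteq\Vmsf_\Prob(\lambda^\flat_\Vmc)$, and by Definition~\ref{def:nextweakestPIL} these two sets differ by a single point ($\tfrac{k}{n}$ when passing from $>$ to $\geq$) or by an open gap $(\tfrac{k-1}{n},\tfrac{k}{n})$. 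The PILs range over those whose constants are used in solutions, namely constants in $\Vmc$. An intermediate interval would therefore need an endpoint strictly between consecutive elements of $\Vmc$, which is impossible. The alternative would be to coincide with one of the two ends, and that contradicts the non-entailment conditions.

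For item~2, take $\lambda^*=\Prob(\tau)\ge\overline{\tfrac{k}{n}}$ with $\tau\Lleftarrow_\Hmsf\sigma$; the other three cases are symmetric, using $\Rrightarrow_\Hmsf$ for the upper bounds. Let $\lambda''=\Prob(\sigma'')\diamond\overline{d}$ with $\lambda\models\lambda''\models\lambda^*$. By the characterisation, $\lambda''$ is a lower bound, $\sigma\models\sigma''\models\tau$, and the threshold satisfies $d=\tfrac{k}{n}$ with $\diamond$ equal to $\geq$, since the threshold is squeezed between two equal ones. Then the definition of $\tau\Lleftarrow_\Hmsf\sigma$ forces $\sigma''\equiv\sigma$ or $\sigma''\equiv\tau$, and the two cases yield $\lambda''\models\lambda$ or $\lambda^*\models\lambda''$ respectively.

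I expect the main obstacle to be the characterisation lemma, specifically showing that entailment between PILs over distinct, incomparable terms never holds, and handling degenerate cases. These include thresholds $0$ or $1$, where $\Prob(\sigma)\ge\zero$ is valid, and terms that are contradictory or equivalent. I would first dispose of trivial PILs separately and then construct the separating measures explicitly, placing mass on a world satisfying $\sigma_a\wedge\neg\sigma_b$ (or the reverse) as needed.
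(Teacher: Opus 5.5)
You are right that the constants of $\lambda'$ and $\lambda''$ must come from $\Vmc$. Without this, $\Prob(\sigma){\geq}\overline{c'}$ with $\tfrac{k-1}{n}<c'<\tfrac{k}{n}$ refutes Item~1, and the paper uses the same restriction tacitly. The genuine gap is in your characterisation lemma, which is false for $\LCPL$-terms because terms may contain negative literals. Your trichotomy of reachable pairs $(\mu(\|\sigma_a\|),\mu(\|\sigma_b\|))$ misses the case where $\sigma_a\wedge\sigma_b$ is unsatisfiable. There only $\{(x,y)\mid x+y\leq 1\}$ is reachable, and for complementary literals only the line $x+y=1$. In this case a non-trivial lower bound entails a non-trivial upper bound on an incomparable term. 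For example, $\Prob(p\wedge q){\geq}\overline{0.6}\models_\FP\Prob(\neg p\wedge r){\leq}\overline{0.4}$, and $\Prob(p){\geq}\overline{c}$ is even $\FP$-equivalent to $\Prob(\neg p){\leq}\overline{1-c}$. None of your three clauses covers this. You did flag jointly unsatisfiable terms, but ``the same monotonicity bookkeeping'' does not handle them: the operative constraint is additivity, which reverses the direction of the bound.

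As a result, the conclusions you draw ``by the characterisation'' fail in concrete instances. In Item~1, if $p,\neg p\in\Hmsf$, then $\lambda'=\Prob(\neg p){\leq}\overline{\tfrac{n-k}{n}}$ lies between $\lambda=\Prob(p){\geq}\overline{\tfrac{k}{n}}$ and $\lambda^\flat_\Vmc$, yet its term is not equivalent to $p$. In Item~2, take $\Hmsf=\{p\wedge q,p,\neg p\}$, $\lambda=\Prob(p\wedge q){\geq}\overline{\tfrac{k}{n}}$ and $\lambda^*=\Prob(p){\geq}\overline{\tfrac{k}{n}}$. Then $\lambda''=\Prob(\neg p){\leq}\overline{\tfrac{n-k}{n}}$ satisfies $\lambda\models_\FP\lambda''\models_\FP\lambda^*$. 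But $\lambda''$ is an upper bound whose term does not lie between $p\wedge q$ and $p$, contrary to your claim. The lemma itself survives, because these PILs are equivalent to $\lambda$ or $\lambda^*$, but your argument does not show this.

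To close the gap, treat opposite-type $\lambda',\lambda''$ separately. In any such sandwich, one of the two links has the form ``upper bound $\models_\FP$ non-valid lower bound''. A model putting all mass on a world that falsifies both terms refutes that link, unless the disjunction of the two terms is valid. For consistent $\LCPL$-terms, that happens only when they are complementary literals $l$ and $\neg l$. In that case, rewrite $\Prob(\neg l){\leq}\overline{d}$ as $\Prob(l){\geq}\overline{1-d}$ (and ${<}$ as ${>}$), noting that $1-d\in\Vmc$. Your same-type interval and next-weakest argument then applies. The paper's sketch does isolate this cross-type situation as its own case~(iii), using that $\sigma\wedge\sigma'$ must be unsatisfiable. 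Its claim there that $\lambda'\not\models_\FP\lambda^\flat_\Vmc$ needs the same complementary-literal exception.
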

\begin{proof}[Proof sketch]
We begin with Item~1 and consider only the case of $\lambda=\Prob(\sigma){\geq}\cvalue$ with $c=\tfrac{k}{n}$ for brevity. Other cases can be dealt with similarly. Now, assume for contradiction that there is some $\lambda'$ s.t.\ $\lambda\models_\FP\lambda'$, $\lambda'\models_\FP\lambda^\flat_\Vmc$, $\lambda'\not\models_\FP\lambda$, and $\lambda^\flat_\Vmc\not\models_\FP\lambda'$. We consider the following cases of~$\lambda'$: (i)~$\lambda'=\Prob(\sigma'){\geq}\cvalue'$; (ii)~$\lambda'=\Prob(\sigma'){>}\cvalue'$; (iii)~$\lambda'=\Prob(\sigma'){\leq}\cvalue'$; (iv)~$\lambda'=\Prob(\sigma'){<}\cvalue'$.

In the first case, we have $c'=c$. If $c'>c$, then $\Prob(\sigma){\geq}\cvalue\not\models_\FP\Prob(\sigma'){\geq}\cvalue'$. If $c'<c$, then $\Prob(\sigma'){\geq}\cvalue'\not\models_\FP\Prob(\sigma){\geq}\overline{\tfrac{k-1}{n}}$. Now, one can see that $\Prob(\sigma){\geq}\cvalue\models_\FP\Prob(\sigma'){\geq}\cvalue$ entails $\sigma\models_\CPL\sigma'$. If $\sigma'\not\models_\CPL\sigma$, then $\Prob(\sigma'){\geq}\overline{\tfrac{k}{n}}\not\models_\FP\Prob(\sigma){>}\overline{\tfrac{k-1}{n}}$, which contradicts the assumption that $\lambda'\models_\FP\lambda^\flat_\Vmc$. Otherwise, $\lambda'\models_\FP\lambda$, contradicting the assumption that $\lambda'\not\models_\FP\lambda$.

In the second case, we can show that $c'=\tfrac{k-1}{n}$ using a~similar reasoning. From here, it will follow that either $\lambda^\flat_\Vmc\models_\FP\lambda'$ or $\lambda\not\models_\FP\lambda'$. Again, the reasoning is similar to the previous case. In the third case, one can check that $\Prob(\sigma){\geq}\overline{\tfrac{k}{n}}\models_\FP\Prob(\sigma'){\leq}\cvalue'$ implies that $\sigma,\sigma'\models_\CPL\bot$ and $c'\leq\tfrac{n-k}{n}$. But in this case, we have $\Prob(\sigma'){\leq}\cvalue'\not\models_\FP\Prob(\sigma){>}\overline{\tfrac{k-1}{n}}$, contrary to the assumption. The reasoning in the fourth case is similar to the reasoning in the third case.

For Item~2, let $\lambda=\Prob(\sigma){\geq}\cvalue$ and $\lambda^*=\Prob(\tau){\geq}\cvalue$ with $\tau\Lleftarrow_\Hmsf\sigma$. We assume for contradiction that there is some $\lambda''$ s.t.\ $\lambda\models_\FP\lambda''$, $\lambda''\models_\FP\lambda^*$, $\lambda''\not\models_\FP\lambda$, and $\lambda^*\not\models_\FP\lambda''$. As in Item~1, we have four cases of~$\lambda''$: (i)~$\lambda''=\Prob(\sigma''){\geq}\cvalue''$; (ii)~$\lambda''=\Prob(\sigma''){>}\cvalue''$; (iii)~$\lambda''=\Prob(\sigma''){\leq}\cvalue''$; (iv)~$\lambda''=\Prob(\sigma''){<}\cvalue''$.

In case~(i), $\Prob(\sigma){\geq}\cvalue\models_\FP\Prob(\sigma''){\geq}\cvalue''$ entails that $c''=c$ (recall the proof of Item~1). Moreover, $\Prob(\sigma){\geq}\cvalue\models_\FP\Prob(\sigma''){\geq}\cvalue$ implies that $\sigma\models_\CPL\sigma''$. Now, observe from Definitions~\ref{def:hypothesesordering} and~\ref{def:nextweakestPIL}, that this entails that either (a)~$\sigma''\models_\CPL\sigma$ or (b)~$\tau\models_\CPL\sigma$. But then (a)~contradicts the assumption that $\lambda''\not\models_\FP\lambda$, and (b)~contradicts the assumption that $\lambda^*\not\models_\FP\lambda''$. Cases (ii)--(iv) can be considered similarly.
\end{proof}

Let us also recall from~\cite{InoueKozhemiachenko2025} the notion of $\models_\Luk$-minimal solutions to $\Luk$-abduction problems.
\begin{definition}\label{def:Lukminsolutions}
A~solution $\tau$ to an $\Luk$-abduction problem $\Pmbb=\langle\Phi,\chi,\Hmsf\rangle$ is called \emph{proper} iff $\tau\not\models_\Luk\chi$. A~proper solution~$\tau$ is \emph{$\models_\Luk$-minimal} iff there is no other proper solution $\sigma$ s.t.\ $\tau\models_\Luk\sigma$ and $\sigma\not\models_\Luk\tau$.
\end{definition}
\FPminimalrecognitionDP*
\begin{proof}
We begin with the proof of $\DP$-hardness. For this, we provide a~polynomial reduction from the recognition of $\models_\Luk$-minimal solutions to~$\Luk$-abduction problems, which is $\DP$-complete. Namely, we consider the following class of $\Luk$-abduction problems $\Pmbb=\langle\Phi\cup\{q\},\chi\odot q,\Hmsf\rangle$: $\Phi=\{p\vee\neg p\mid p\in\Var(\chi)\cup\{q\}\}$, $\chi\in\LLukint$, $q\notin\Var(\chi)$, $\Hmsf=\{p{\geq}\one\mid p\in\Var(\chi)\}\cup\{p{\leq}\zero\mid p\in\Var(\chi)\}$.\footnote{Observe from the proof of Theorem~3 in~\cite{InoueKozhemiachenko2025} that recognising $\models_\Luk$-minimal solutions to such problems is $\DP$-complete.} We construct an $\FP$ AP $\Pmbb^\Prob=\langle\Phi^\Prob,\chi^\Prob,\Hmsf',\Vmc\rangle$ as shown in the proof of Theorem~\ref{theorem:sufficientsolutionrecognitionDP}. Let~$\tau$ be an interval term. We show that $\tau$ is an $\models_\Luk$-minimal solution to~$\Pmbb$ iff $\tau^\Prob$ is an $\models_\FP$-minimal solution to~$\Pmbb^\Prob$.

Let $\tau$ be an $\models_\Luk$-minimal solution to~$\Pmbb$. It follows that $\Phi,q,\tau\consvDashLuk\chi\odot q$ and there is no interval term $\sigma$ s.t.\ $\Phi,q,\sigma\consvDashLuk\chi\odot q$, $\tau\models_\Luk\sigma$, and $\sigma\not\models_\Luk\tau$. By Proposition~\ref{prop:LukisfragmentofFPLuk}, we obtain that $\Phi^\Prob,\Prob(q),\tau^\Prob\consvDashFPLuk\chi^\Prob\odot\Prob(q)$. Thus, $\tau^\Prob$ is a~sufficient solution to~$\Pmbb^\Prob$. Now, assume for contradiction that there is some other sufficient solution~$\eta$ to~$\Pmbb^\Prob$ s.t.\ $\tau^\Prob\models_\FP\eta$ and $\eta\not\models_\FP\tau^\Prob$. Now define $\eta^\uparrow$ as in~\eqref{equ:etauparrow}. Using the same reasoning as in Theorem~\ref{theorem:sufficientsolutionrecognitionDP}, we obtain that $\eta^\uparrow$ is a~solution to~$\Pmbb$. Moreover, it is easy to see using Proposition~\ref{prop:LukisfragmentofFPLuk} that $\tau\models_\Luk\eta^\uparrow$ but $\eta^\uparrow\not\models_\FP\tau$. This contradicts the assumption that $\tau$ is an~$\models_\Luk$-solution to~$\Pmbb$.

Conversely, let $\eta$ be an $\models_\FP$-minimal solution to~$\Pmbb^\Prob$. Then $\Phi^\Prob,\Prob(q)\consvDashFPLuk\chi^\Prob\odot\Prob(q)$ and, additionally, there is no other sufficient solution $\zeta$ s.t.\ $\eta\models_\FP\zeta$ and $\zeta\not\models_\FP\eta$. Moreover, since $q\notin\Hmsf'$ (because $q\notin\Var[\Hmsf]$), we have $\eta\not\models_\FP\chi^\Prob\odot\Prob(q)$. Now, we can define~$\eta^\uparrow$ as shown in~\eqref{equ:etauparrow}. By the same reasoning as in Theorem~\ref{theorem:sufficientsolutionrecognitionDP}, we can show that $\eta^\uparrow$ is a~(proper) solution of~$\Pmbb$.

Now, assume for contradiction that there is some other proper solution $\tau$ of~$\Pmbb$ s.t.\ $\eta^\uparrow\models_\Luk\tau$ and $\tau\not\models_\Luk\eta^\uparrow$. Since $\eta^\uparrow$ and $\tau$ contain only interval literals of the form $p{\geq}\one$ and $q{\leq}\zero$, it follows that every literal occurring in~$\tau$ also occurs in~$\eta^\uparrow$, and there are some interval literals in~$\eta^\uparrow$ that do not occur in~$\tau$. By Proposition~\ref{prop:LukisfragmentofFPLuk}, this implies that $(\eta^\uparrow)^\Prob\models_\FP\tau^\Prob$, $\tau^\Prob\not\models_\FP(\eta^\uparrow)^\Prob$, and that $\tau^\Prob$ is a~sufficient solution to~$\Pmbb^\Prob$. Furthermore, observe that $(\eta^\uparrow)^\Prob=\eta^\pitchfork$ (recall~\eqref{equ:etapitchfork} for~$\eta^\pitchfork$). Now given a~PIT $\eta$, let $\eta^\ddag$ denote the result of replacing $\Prob(p){\geq}\one$ and $\Prob(q){\leq}\zero$ with, respectively, $\Prob(p){>}\zero$ and $\Prob(q){<}\one$ in~$\eta$. Observe from~\eqref{equ:LEMequivalence} that $(\tau^\Prob)^\ddag$ is also a~sufficient solution to~$\Pmbb^\Prob$.

Now, note that $\Prob(p){\geq}\one\models_\FP\Prob(p){>}\zero$ and $\Prob(q){\leq}\zero\models_\FP\Prob(q){<}\one$. Moreover, $\eta^\pitchfork\models_\FP\tau^\Prob$, $\tau^\Prob\models_\FP(\tau^\Prob)^\ddag$, and the only difference between $\eta^\pitchfork$ and $\eta$ is that some $\Prob(p){\geq}\one$'s and $\Prob(q){\leq}\zero$'s were replaced with $\Prob(p){>}\zero$'s and $\Prob(q){<}\one$. But observe that since all interval literals in~$\tau$ occur in~$\eta^\uparrow$, then for each PIT $\lambda$ in~$(\tau^\Prob)^\ddag$, it either occurs in~$\eta$, or its stronger counterpart ($\Prob(p){\geq}\one$ or $\Prob(q){\leq}\zero$) occurs in~$\eta$. Additionally, there are interval literals that occur only in~$\eta^\uparrow$ but not in~$\tau$. Hence, they (or their weaker counterparts) occur in~$\eta$, but neither they nor their weaker counterparts occur in~$(\tau^\Prob)^\ddag$. Hence, $\eta\models_\FP(\tau^\Prob)^\ddag$ and $(\tau^\Prob)^\ddag\not\models_\FP\eta$. Thus, $\eta$ is not an $\models_\FP$-minimal solution to~$\Pmbb^\Prob$, contrary to the assumption. This gives us $\DP$-hardness.

Now, consider $\DP$-membership. Let $\eta=\bigodot^n_{i=1}\lambda_i$ be a~PIT and $\Pmbb=\langle\Gamma,\delta,\Hmsf,\Vmc\rangle$ an~$\FP$-abduction problem. To check that $\eta$~is an $\models_\FP$-minimal sufficient solution to~$\Pmbb$, we need to check that it is a~sufficient solution and that there is no other sufficient solution $\eta$ s.t.\ $\eta\models_\FP\eta$ and $\eta\not\models_\FP\eta$. For every PIL~$\lambda$ in~$\eta$, we construct
\begin{align*}
\eta^\flat_{\lambda,\Vmc}&=
\begin{cases}
\bigodot\limits_{\myoverset{\lambda'\in\eta}{\lambda'\neq\lambda}}\lambda'\odot\lambda^\flat_\Vmc&\text{if }\lambda^\flat_\Vmc\text{ is defined}\\
\bigodot\limits_{\myoverset{\lambda'\in\eta}{\lambda'\neq\lambda}}\lambda'&\text{otherwise}
\end{cases}&
\Theta^\flat_{\lambda,\Hmsf}&=\left\{\bigodot\limits_{\myoverset{\lambda'\in\eta}{\lambda'\neq\lambda}}\lambda'\odot\lambda^*\mid\lambda^*\in\Lambda^\flat_{\lambda,\Hmsf}\right\}
\end{align*}
as shown in Definition~\ref{def:nextweakestPIL}. Note that the size of~$\eta^\flat_{\lambda,\Vmc}$ is polynomial in the size of~$\eta$ and the size of~$\Theta^\flat_{\lambda,\Hmsf}$ is polynomial in the size of~$\Pmbb$ because there are at most $|\Hmsf|$~next weakest or next strongest terms in~$\Hmsf$ w.r.t.\ a~given $\sigma\in\Hmsf$. It also follows from Definition~\ref{def:nextweakestPIL} that $\eta\models_\FP\eta^*$ and $\eta^*\not\models_\FP\eta$ for every $\eta^*\in\Theta^\flat_{\lambda,\Hmsf}$. Furthermore, by Lemma~\ref{lemma:hypothesesorderingpolynomial}, $\Theta^\flat_{\lambda,\Hmsf}$ can be constructed in polynomial time. From Lemma~\ref{lemma:nextweakestPIL}, it is clear that given~$\eta$, it suffices to check that none of $\eta^\flat_{\lambda,\Vmc}$'s and $\eta^*\in\Theta^\flat_{\lambda,\Hmsf}$ is a~solution to~$\Pmbb$.

Thus, the $\DP$ procedure is as follows. Given~$\eta=\bigodot^n_{i=1}\lambda_i$ and~$\Pmbb$, we use an $\np$-oracle to guess that (i)~$\Gamma,\eta\not\models_\FP\bot$, (ii)~$\Gamma,\eta\models_\FP\delta$, (iii)~$\Gamma,\eta^*\not\models_\FP\delta$ for every PIL~$\lambda$ in~$\eta$ and every $\eta^*\in\Theta^\flat_{\lambda,\Hmsf}$. Note that calls~(i) and~(iii) are $\np$, call~(ii) is $\conp$, all of them are independent, and there are $\Omc(n^2\cdot|\Hmsf|\cdot|\Vmc|)$ calls in total. This gives us the desired $\DP$-membership.
\end{proof}
\section{Proofs of Section~\ref{sec:fragments}}
We begin with the definition of \emph{entailment-monotone} functions.
\begin{definition}\label{def:entailmentmonotonicity}
Let $\Upsilon\subseteq\LCPL$. A~function $f:\Upsilon\rightarrow[0,1]$ is called \emph{$\models_\CPL$-monotone} iff the following condition holds:
\begin{align*}
\forall \pi,\varrho\in\Upsilon:\pi\models_\CPL\varrho\Longrightarrow f(\pi)\leq f(\varrho)
\end{align*} 
\end{definition}
\begin{lemma}\label{lemma:monotoneassignmentCIP}
Let $\Upsilon\subseteq\LCPL$ be a~finite chained positive set and $\Var[\Upsilon]=\{p_1,\ldots,p_k\}=\Vmbf$. Then for every $\models_\CPL$-monotone $f:\Upsilon\rightarrow[0,1]$, there is a~probabilistic model $\Mfrak_f=\langle2^\Vmbf,\mu\rangle$ s.t.\ $\mu(\|\pi\|_\Mfrak)=f(\pi)$ for every $\pi\in\Upsilon$.
\end{lemma}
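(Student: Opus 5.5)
The plan is to build $\mu$ as the law of a random valuation that is \emph{comonotone}, i.e.\ driven by uniform thresholds. Then every positive conjunction and disjunction becomes a threshold event whose probability is a $\min$ or a $\max$.

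First, split $\Upsilon$ into components. Say $\pi\sim\varrho$ iff $\Var(\pi)\cap\Var(\varrho)\neq\varnothing$, and take the transitive closure. Distinct components use disjoint sets of variables. I will build a measure on each component separately and take the product measure on $2^\Vmbf$. Every $\pi\in\Upsilon$ mentions variables of a single component, so $\mu(\|\pi\|_\Mfrak)$ is computed in that component's factor alone. It therefore suffices to treat one component $\Upsilon_0$ with variables $\Vmbf_0$. The variables of $\Vmbf$ occurring in no formula can be fixed to false with probability $1$.

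For a single component, take a uniform random $U$ on $[0,1]$ and a map $g:\Vmbf_0\rightarrow[0,1]$. Make $p$ true iff $U<g(p)$. Then $\bigwedge_{p\in P}p$ holds iff $U<\min_{p\in P}g(p)$, and $\bigvee_{q\in Q}q$ holds iff $U<\max_{q\in Q}g(q)$. The induced distribution on $2^{\Vmbf_0}$ has finitely many atoms, namely the sets $\{p\mid g(p)>u\}$, so it is a probability measure in the sense of Definition~\ref{def:probabilitymeasure}. The statement thus reduces to the combinatorial claim that there is $g$ with
\begin{align*}
\min_{p\in\Var(\pi)}g(p)=f(\pi)\text{ for conjunctions }\pi\in\Upsilon_0,\qquad
\max_{q\in\Var(\pi)}g(q)=f(\pi)\text{ for disjunctions }\pi\in\Upsilon_0.
\end{align*}

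To find $g$, I will use the chained-positive condition through the syntactic characterisation of $\models_\CPL$ between positive formulas. A conjunction over $P$ entails a conjunction over $P'$ iff $P'\subseteq P$. A disjunction over $Q$ entails a disjunction over $Q'$ iff $Q\subseteq Q'$. A conjunction over $P$ entails a disjunction over $Q$ iff $P\cap Q\neq\varnothing$, and a disjunction never entails a conjunction unless both are the same single variable. Consequently, conjunctions that share a variable are nested, and so are disjunctions that share a variable; each variable's conjunctions form a chain under $\supseteq$, and its disjunctions form a chain under $\subseteq$.

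My first attempt at $g$ is
\[
g(p)=\min\Big(\min\{f(\delta)\mid\delta\text{ a disjunction},\,p\in\Var(\delta)\},\ \max\{f(\kappa)\mid\kappa\text{ a conjunction},\,p\in\Var(\kappa)\}\Big),
\]
with the convention $\min\varnothing=1$ and $\max\varnothing=0$. I then verify the displayed equations in two steps. Monotonicity of $f$ along the entailments above gives the inequalities $\min\ge f(\pi)$ and $\max\le f(\pi)$. Nestedness then supplies a witness variable attaining equality: the variables of the smallest conjunction, or of the minimal disjunction, in the relevant chain.

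The main obstacle is this witness step. A variable can lie both in conjunctions and in disjunctions, and a conjunction over $P$ entails a disjunction over $Q$ as soon as $P\cap Q\neq\varnothing$. So one must check that the $\min$ and $\max$ constraints never pin the same variable to incompatible values. If the closed formula fails, my fallback is induction on $|\Upsilon|$. I would remove an entailment-maximal formula $\pi$, obtain $g$ for the rest by the induction hypothesis, and then adjust $g$ only on the variables of $\pi$ that occur in no other formula. Chainedness guarantees that, when no such free variable exists, the value $f(\pi)$ is already forced by the nested neighbours of $\pi$ via monotonicity.
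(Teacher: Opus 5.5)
Your reduction fails at its central step. The claim that a suitable $g$ exists is false, even when the component is a single $\models_\CPL$-chain. Take $\Upsilon=\{p_1\wedge p_2,\ p_2,\ p_1\vee p_2\}$, which is exactly the paper's ``largest possible chain'' for $k=2$, and the monotone $f(p_1\wedge p_2)=0.2$, $f(p_2)=0.4$, $f(p_1\vee p_2)=0.6$. You need $g(p_2)=0.4$, so $\min(g(p_1),g(p_2))=0.2$ forces $g(p_1)=0.2$, and then $\max(g(p_1),g(p_2))=0.4\neq0.6$. The reason is structural. Thresholding all variables by one $U$ makes the support of $\mu$ a nested family of states. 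Hence $\mu(\|p\wedge q\|)$ and $\mu(\|p\vee q\|)$ are forced to be the $\min$ and $\max$ of $\mu(\|p\|)$ and $\mu(\|q\|)$, and a coherent monotone $f$ need not obey this. Here $f$ is realised by weight $0.2$ on each of the states $\{p_1,p_2\}$, $\{p_2\}$, $\{p_1\}$ and $0.4$ on $\varnothing$, where $\{p_1\}$ and $\{p_2\}$ are incomparable. Your inductive fallback fails on the same example. The maximal formula $p_1\vee p_2$ has no free variable, yet its value is not forced by its neighbours: every value in $[0.4,1]$ is coherent with the other two. Separately, your closed formula with the convention $\max\varnothing=0$ sets $g=0$ on variables occurring only in disjunctions, so it already fails for $\Upsilon=\{p\vee q\}$.

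The missing idea is to be comonotone in the \emph{formulas} rather than in the variables, which is what the paper's rank-based distribution $\partial_f$ does. Take a chain $\pi_1\models_\CPL\cdots\models_\CPL\pi_m$ with strict entailments; equivalent members get equal values by monotonicity. Set $\|\pi_0\|=\varnothing$ and $f(\pi_0)=0$, and pick any state $w_i\in\|\pi_i\|\setminus\|\pi_{i-1}\|$ with weight $f(\pi_i)-f(\pi_{i-1})$. Give the state $\varnothing$, which falsifies every positive formula, the remaining weight $1-f(\pi_m)$. Since $w_i\in\|\pi_j\|$ iff $i\le j$, this yields $\mu(\|\pi_j\|)=f(\pi_j)$, and the $w_i$ need not be nested.

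This construction requires every component to be a chain, which your argument does not secure. Read literally, the pairwise condition of Definition~\ref{def:positivechains} admits $\{p,q,p\vee q\}$. There, $f(p)=f(q)=0.1$, $f(p\vee q)=0.5$ is monotone but incoherent, since $\mu(\|p\vee q\|)\le\mu(\|p\|)+\mu(\|q\|)$. So your observation that each variable's conjunctions and disjunctions form chains is not enough. You must use, as the paper's proof does, that $\Upsilon$ splits into $\models_\CPL$-chains over pairwise disjoint variable sets; the paper's remark that $\{a,b,a\wedge b\}$ is not CP reflects this intended reading. With that, your product-over-components step is fine and combines with the per-chain construction.
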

\begin{proof}
Let $\Upsilon=\biguplus^n_{i=1}\Theta_i$ with each $\Theta_i$ being linearly ordered w.r.t.\ $\models_\CPL$. Observe that $|\Upsilon|\leq2\cdot k-n$. Indeed, notice that since $\Var[\Theta_i]\cap\Var[\Theta_j]=\varnothing$ when $i\neq j$, it follows that the largest possible chain has the following form
\begin{align*}
\bigwedge^k_{i=1}p_i\models_\CPL\bigwedge^k_{i=2}p_i\models_\CPL\ldots\models_\CPL p_k\models_\CPL p_{k-1}\vee p_k\models_\CPL\ldots\models_\CPL\bigvee^k_{i=1}p_i
\end{align*}
and contains $2\cdot k-1$ formulas. If there are $n$ chains, then each will contain at most $2\cdot k_j-1$ formulas, where $k_j$ is the number of variables in the largest formula in the chain~$j$.

Furthermore, one can see that $\Upsilon\not\models_\CPL\bot$ because all formulas are $\neg$-free. Now, for each $\pi\in\Upsilon$, let $[\pi]_f=\{\varrho\mid \varrho\in\Upsilon,f(\pi)=f(\varrho)\}$ and $\sharp_f(\pi)$ be the number of $[\varrho]_f$'s s.t.\ $f(\varrho)\leq f(\pi)$. It is clear that $\sharp_f(\pi)\leq|\Upsilon|$ for every $\pi\in\Upsilon$. Set further, $r=\max\{\sharp_f(\pi)\mid\pi\in\Upsilon\}$. Moreover, one can see that there are no $\Upsilon',\Upsilon''\subseteq\Upsilon$ s.t.\ $\Upsilon'\cap\Upsilon''=\varnothing$ and $\CPL\models\bigvee_{\pi\in\Upsilon'}\pi\leftrightarrow\bigwedge_{\varrho\in\Upsilon''}\varrho$.

Now, given $w\in2^\Vmbf$, define
\begin{align*}
\rank(w)&=
\begin{cases}
0&\text{if there is no }\pi\in\Upsilon\text{ s.t.\ }w\models_\CPL\pi\\
\min\{\sharp_f(\pi)\mid\forall\pi':\sharp_f(\pi)=\sharp_f(\pi')\Rightarrow w\models_\CPL\pi'\}&\text{otherwise}
\end{cases}
\end{align*}
It is clear that $\rank(\bigwedge^k_{i=1}p_i)=1$ because $w\models_\CPL\pi$ for all $\pi\in\Upsilon$ and $\rank(\bigwedge^k_{i=1}\neg p_i)=0$. Furthermore, for every $i\leq r$, there is some (maybe, several) $w\in2^\Vmbf$ s.t.\ $\rank(w)=i$. For each such $i$, we choose one $w\in2^\Vmbf$ s.t.\ $\rank(w)=i$ and denote it with $w_i$ (in particular, we set $w_1=\bigwedge^k_{i=1}p_i$ and $w_0=\bigwedge^k_{i=1}\neg p_i$). We can now define the probability distribution $\partial_f$ on $2^\Vmbf$ as follows.
\begin{enumerate}[noitemsep,topsep=0pt]
\item We set $\partial_f(w_1)=\min\{f(\pi)\mid\pi\in\Upsilon\}$.
\item Then, for each $i\in\{2,\ldots,r\}$, pick $\pi\in\Upsilon$ s.t.\ $\sharp_f(\pi)=i$ and set $\partial_f(w_i)=f(\pi)-\partial_f(w_{i-1})$.
\item If $\max\{f(\pi)\mid\pi\in\Upsilon\}<1$, set additionally, $\partial(w_0)=1-\partial(w_r)$.
\end{enumerate}
It is clear that the probability measure $\mu_{\partial_f}$ induced by~$\partial_f$ is coherent with~$f$. The result now follows.
\end{proof}

\simpletranslation*
\begin{proof}
We define $\Psi_\Gamma$ as follows:
\begin{align}\label{equ:PsiGamma}
\Psi_\Gamma&=\{\alpha^\uparrow\mid\alpha\in\Gamma\}\cup\{p_\phi\rightarrow p_\chi\mid\phi,\chi\in\Emc[\Gamma]\text{ and }\phi\models_\CPL\chi\}
\end{align}
Observe that $\Psi_\Gamma$ indeed has polynomial w.r.t.\ $\Gamma$ size. Moreover, since formulas in~$\Emc[\Gamma]$ are either conjunctions or disjunctions of variables, it takes polynomial time to verify whether $\phi\models_\CPL\chi$ for each pair of $\phi,\chi\in\Emc[\Gamma]$.

It is clear from Definition~\ref{def:FPLuksemantics} that if $\Gamma$ is $\FP$-satisfiable, then $\Psi_\Gamma$ is $\Luk$-satisfiable. Indeed, if $\Mfrak$ satisfies $\Gamma$, it suffices to set $v(p_\phi)=\Imc_\Mfrak(\Prob(\phi))$. For the converse direction, let $v$ be an $\Luk$-valuation s.t.\ $v(\psi)=1$ for every $\psi\in\Psi_\Gamma$. We need to define a~probabilistic model $\Mfrak_v$ that satisfies $\Phi_\Gamma$, i.e., a~measure $\mu$ on $2^{2^\Vmbf}$ s.t.\ $\Imc_{\Mfrak_v}(\alpha)=1$ for every $\alpha\in\Gamma$. We define $f(\phi)=v(p_\phi)$ for every $\phi\in\Emc[\Gamma]$. Note that $f$ is $\models_\CPL$-monotone because $v(p_\phi\rightarrow p_\chi)=1$ for every $\phi,\chi\in\Emc[\Gamma]$ s.t.\ $\phi\models_\CPL\chi$ and $v(p_\phi\rightarrow p_\chi)=1$ iff $v(p_\phi)\leq v(p_\chi)$. Thus, by Lemma~\ref{lemma:monotoneassignmentCIP}, there is a~probabilistic model $\Mfrak_f=\langle2^\Vmbf,\mu_f\rangle$ s.t.\ $\mu_f(\|\phi\|_\Mfrak)=f(\phi)=v(p_\phi)$ for all $\phi\in\Emc[\Gamma]$. It now follows from Definitions~\ref{def:Luksemantics} and~\ref{def:FPLuksemantics} that $\Mfrak_f$ satisfies~$\Gamma$.
\end{proof}
\subsection{Proofs of Section~\ref{ssec:PSC}}
Before proceeding to the proof of Theorem~\ref{theorem:PSCrecognitionPcomplete}, let us recall the notion of Łukasiewicz simple-clause abduction problems from~\cite{InoueKozhemiachenko2025}.
\begin{definition}\label{def:simpleclauseabduction}
A~\emph{simple clause abduction problem} (SC AP) is an $\Luk$-abduction problem $\Pmbb=\langle\Gamma,\chi,\Hmsf\rangle$ s.t.\ $\Gamma$ is a~set of simple clauses and interval terms and $\chi$ is an interval clause, simple clause, interval term, or a~simple term.
\end{definition}

\PSCrecognitionPcomplete*
\begin{proof}
We begin with $\Pmsf$-hardness of recognising arbitrary sufficient solutions. Using Proposition~\ref{prop:LukisfragmentofFPLuk}, we will provide a~logspace reduction from the solution recognition for SC APs to sufficient solution recognition for PSC APs. Namely, let $\Pmbb=\langle\Phi,p,\Hmsf\rangle$ be an SC AP s.t.\ $\Phi$ is a~set of simple clauses, $p\in\Var$, $p\notin\Var[\Hmsf]$, and $\Hmsf$~contains only interval literals of the form $q{\geq}\one$ and $r{\leq}\zero$. Note from~\cite[Theorem~8]{InoueKozhemiachenko2025} that solution recognition for such SC APs is $\Pmsf$-hard. We define $\Pmbb^\Prob=\langle\Phi^\Prob,\Prob(p),\Hmsf',\Vmc'\rangle$ as shown in the proof of Theorem~\ref{theorem:sufficientsolutionrecognitionDP} (recall Definition~\ref{def:FPLuktoLukembedding} for $\Phi^\Prob$ and observe that $p^\Prob=\Prob(p)$).

Note from Definition~\ref{def:FPLuktoLukembedding} that if $\Phi\subseteq\LLukint$ is a~set of simple clauses and interval terms, then $(\Phi\cup\{p\})^\Prob$ is a~set of PSCs and PITs. Moreover, since all events in~$\Phi^\Prob$ are \emph{propositional variables}, $\Emc[(\Phi\cup\{p\})^\Prob]$ is chained positive, whence $(\Phi\cup\{p\})^\Prob$ is a~PSC theory and $\Pmbb^\Prob$~is a~PSC abduction problem. Thus, we can reuse the reasoning from Theorem~\ref{theorem:sufficientsolutionrecognitionDP} and use Proposition~\ref{prop:LukisfragmentofFPLuk} to show that given an interval literal~$\tau$ and an SC AP as defined in the previous paragraph, if $\tau$ is a~solution to~$\Pmbb$, then $\tau^\Prob$ is a~solution to~$\Pmbb^\Prob$. And conversely, if $\eta$ is a~sufficient solution to~$\Pmbb^\Prob$, then $\eta^\uparrow$ (recall~\eqref{equ:etauparrow} for the definition of~$\eta^\uparrow$) is a~solution to~$\Pmbb$. Finally, recall that the transformation of~$\Pmbb$ into~$\Pmbb^\Prob$ proceeds as follows: we replace every variable~$p$ in the theory and observation with $\Prob(p)$, form the set of hypotheses by copying variables from the hypotheses of~$\Pmbb$, and set $\Vmc=\{0,1\}$. This procedure utilises logarithmic space only. Thus, we obtain $\Pmsf$-hardness, as required.

Let us now proceed to membership. For that, we will construct a~polynomial reduction to solution recognition for SC APs. Now let $\Pmbb=\langle\Gamma,\delta,\Hmsf,\Vmc\rangle$ be a~PSC AP. We define an SC AP $\Pmbb^\mathsf{SC}=\langle\Gamma^\mathsf{SC},\delta^\uparrow,\Hmsf^\mathsf{SC}\rangle$ (recall Definition~\ref{def:outercounterpart} for~$\delta^\uparrow$) as follows.
\begin{align*}
\Gamma^\mathsf{SC}&=\{\gamma^\uparrow\mid\gamma\in\Gamma\}\cup\{p_\phi\rightarrow p_\chi\mid\phi,\chi\in\Emc[\Pmbb]\text{ and }\phi\models_\CPL\chi\}&
\Hmsf^\mathsf{SC}&=\{p_\tau\diamond\cvalue\mid\tau\in\Hmsf,c\in\Vmc,\diamond\in\{\leq,<,\geq,>\}\}
\end{align*}
We observe briefly that $\Pmbb^\mathsf{SC}$ is indeed a~simple clause abduction problem. Since every $\alpha\in\Gamma\cup\{\delta\}$ is either a~PSC or has the form $\Prob(\tau)\diamond\cvalue$, we have from Definition~\ref{def:outercounterpart} that every $\alpha^\uparrow\in\Gamma^\mathsf{SC}\cup\{\delta^\uparrow\}$ is either a~simple clause or an interval literal. Moreover, $p_\phi\rightarrow p_\chi$ is also a~simple clause because it can be equivalently represented as $\neg p_\phi\oplus p_\chi$. Furthermore, by Proposition~\ref{prop:simpletranslation}, it takes polynomial time to construct~$\Gamma^\mathsf{SC}$.

Assume now that $\eta$ is a~PIT. We show that $\eta$ is a~sufficient solution to~$\Pmbb$ iff $\eta^\uparrow$ is a~solution to~$\Pmbb^\mathsf{SC}$. If $\eta$ is a~sufficient solution to~$\Pmbb$, we have that $\Gamma,\eta\consvDashFPLuk\delta$. We show that $\Gamma^\mathsf{SC},\eta^\uparrow\consvDashLuk\delta^\uparrow$. Observe that $\Gamma^\mathsf{SC}$ is defined as $\Psi_\Gamma$ in~\eqref{equ:PsiGamma} and $\Gamma\cup\{\eta\}$ is a~CIP theory (recall Definition~\ref{def:PSCabduction}), whence $\Gamma^\mathsf{SC},\eta^\uparrow\not\models_\Luk\bot$ iff $\Gamma,\eta\not\models_\FP\bot$. Similarly, by Definition~\ref{def:FPLuksemantics}, we have that $\Gamma,\eta\models_\FP\delta$ follows from $\Gamma,\eta,\neg\triangle\delta\models_\FP\bot$. Notice that $\Gamma\cup\{\eta,\neg\triangle\delta\}$ is also a~CIP theory. Whence, again, by Proposition~\ref{prop:simpletranslation}, we have that $\Gamma,\eta,\neg\triangle\delta\models_\FP\bot$ implies $\Gamma^\mathsf{SC},\eta^\uparrow,(\neg\triangle\delta)^\uparrow\models_\Luk\bot$.

Conversely, let $\tau$ be a~solution to~$\Pmbb^\mathsf{SC}$. Using Proposition~\ref{prop:simpletranslation} and the same reasoning as in the previous paragraph, one can show that $\tau^\Prob$ (recall Definition~\ref{def:FPLuktoLukembedding} and observe that $(\tau^\Prob)^\uparrow=\tau$) is a~solution to~$\Pmbb^\mathsf{SC}$. The $\Pmsf$-completeness of recognising (arbitrary) sufficient solutions now follows.

Let us now consider the recognition of entailment-minimal sufficient solutions. We construct a~logspace reduction from the recognition of $\models_\Luk$-minimal solutions to simple-clause abduction problems in Łukasiewicz logic, which is $\Pmsf$-complete. Namely, we take an SC AP $\Pmbb=\langle\Phi,p,\Hmsf\rangle$ s.t.\ $\Phi$ is a~set of simple clauses, $p\in\Var$, $p\notin\Var[\Hmsf]$, and $\Hmsf$~contains only interval literals of the form $q{\geq}\one$ and $r{\leq}\zero$. By~\cite[Theorem~8]{InoueKozhemiachenko2025}, recognition of $\models_\Luk$-minimal solutions to such problems is $\Pmsf$-complete. We use the same reduction as for arbitrary solutions and then reason as in the proof of Theorem~\ref{theorem:FPminimalrecognitionDP}.

For $\Pmsf$-membership, we observe that it takes polynomial time to recognise whether a~given PIT~$\eta$ is a~sufficient solution to~$\Pmbb$. Furthermore, by Lemmas~\ref{lemma:hypothesesorderingpolynomial} and~\ref{lemma:nextweakestPIL}, it takes polynomial time to generate all ‘next weakest’ solution candidates, and there are polynomially many of them. If \emph{none of them} is a~sufficient solution, it follows that~$\eta$ is an $\models_\FP$-minimal sufficient solution.
\end{proof}

\PSCsolutionexistence*
\begin{proof}
Consider Item~1. $\np$-membership follows immediately from Theorem~\ref{theorem:PSCrecognitionPcomplete}. For $\np$-hardness, we construct a~polynomial reduction from solution existence to simple-clause abduction problems in Łukasiewicz logic, which is $\np$-complete~\cite[Theorem~10]{InoueKozhemiachenko2025}. We reuse the reduction from Theorem~\ref{theorem:PSCrecognitionPcomplete}. Namely, given an SC AP, we define a~PSC AP $\Pmbb^\Prob=\langle\Phi^\Prob,\Prob(p),\Hmsf',\Vmc'\rangle$ as shown in the proof of Theorem~\ref{theorem:sufficientsolutionrecognitionDP} (recall Definition~\ref{def:FPLuktoLukembedding} for $\Phi^\Prob$ and observe that $p^\Prob=\Prob(p)$). Note again that since $\Phi$~is a~set of simple clauses, $\Phi^\Prob$ is a~set of \emph{probabilistic} simple clauses.

Let us now proceed to Item~2. $\np$-membership follows from Theorem~\ref{theorem:fullsolutionexistenceNP}. For $\np$-hardness, we provide a~polynomial reduction from the $\CPL$-satisfiability of propositional formulas in the conjunctive normal form. Let $\phi=\bigwedge^n_{j=1}\left(\bigvee^m_{i=1}p^j_i\vee\bigvee^l_{i=1}\neg q_i\right)$ be an $\LCPL$-formula in CNF and $r\notin\Var(\phi)$ and set $\Vmbf=\Var(\phi)\cup\{r\}$. We define 
\begin{align*}
\Gamma_\Smsf&=\left\{\bigoplus^m_{i=1}\Prob(p^j_i)\oplus\bigoplus^l_{i=1}\neg\Prob(q^j_i)\mid j\in\{1,\ldots,n\}\right\}\cup\{\Prob(r)\}&\delta_\Smsf&=\Prob(r)&\Hmsf_\Smsf&=\{p^1_1\}&\Vmc_\Smsf&=\{0,1\}\\
\Pmbb_\Smsf&=\langle\Gamma_\Smsf,\delta_\Smsf,\Hmsf_\Smsf,\Vmc_\Smsf\rangle
\end{align*}
Observe that $\Pmbb_\Smsf$ can be constructed from $\phi$ in polynomial time. Moreover, since we are considering the existence of full solutions, the set of hypotheses in~$\Pmbb_\Smsf$ is irrelevant (whence we can set it as~$\{p^1_1\}$). Now we show that $\phi$ is $\CPL$-satisfiable iff there is a~full solution to~$\Pmbb_\Smsf$.

Assume that $\phi$ is $\CPL$-satisfiable and, in particular, $v$~is a~classical valuation s.t.\ $v(\phi)=1$. First of all, we observe that $\Gamma_\Smsf\models_\FP\delta_\Smsf$. It remains to construct a~$\langle\Vmbf,\Vmc\rangle$-complete term $\theta_v$ s.t.\ $\theta_v\models_\FP\bigodot_{\gamma\in\Gamma_\Smsf}\gamma\odot\delta_\Smsf$. (Recall from Definition~\ref{def:probabilisticintervalliterals} that all $\langle\Vmbf,\Vmc\rangle$-complete terms are $\FP$-satisfiable, thus if such a~term entails some formula~$\alpha$, it \emph{consistently entails}~$\alpha$.)

Now let 
\begin{align}\label{equ:completePITfromvaluation}
\tau_v&=r\wedge\bigwedge_{v(s)=1}s\wedge\bigwedge_{v(t)=0}\neg t&\theta_v&=\Prob(\tau_v){\approx}\one
\end{align}
We show that $\theta_v\models_\FP\bigodot_{\gamma\in\Gamma_\Smsf}\gamma\odot\delta_\Smsf$. Observe, first of all, that $\theta_v$ is $\langle\Vmbf,\Vmc\rangle$-complete. Let further, $\Mfrak_{\theta_v}=\langle2^\Vmbf,\mu_{\theta_v}\rangle$ be a~probabilistic model s.t.\ $\mu(\|\tau\|_{\Mfrak_{\theta_v}})=1$. By Definition~\ref{def:FPLuksemantics}, we have $\Imc_{\Mfrak_{\theta_v}}(\theta_v)=1$.

Let us show that $\Imc_{\Mfrak_{\theta_v}}\left(\bigodot_{\gamma\in\Gamma_\Smsf}\gamma\odot\delta_\Smsf\right)=1$. Observe that $v(h)=\Imc_{\Mfrak}(\Prob(h))$ for every $h\in\Var(\phi)$. Indeed, if $v(h)=1$, then $h\in\tau$, whence $\mu(\|h\|_{\Mfrak_{\theta_v}})=1$ and $\Imc_{\Mfrak_{\theta_v}}(\Prob(h))=1$. Likewise, if $v(h)=0$, then $\neg h\in\tau$, whence \mbox{$\mu(\|\neg h\|_{\Mfrak_{\theta_v}})=1$}, i.e., $\mu(\|h\|_{\Mfrak_{\theta_v}})=0$ and $\Imc_{\Mfrak_{\theta_v}}(\Prob(h))=0$. Now, since $\vee$ and $\oplus$ behave in the same way on $\{0,1\}$, it follows that $\Imc_{\Mfrak_{\theta_v}}\left(\bigodot_{\gamma\in\Gamma_\Smsf}\gamma\odot\delta_\Smsf\right)=v(\phi)=1$, as required. As there is \emph{only one} probabilistic model~$\Mfrak$ over~$2^\Vmbf$ s.t.\ $\mu(\|\tau\|_\Mfrak)=1$ (namely, $\Mfrak_{\theta_v}$), it follows that $\theta\models_\FP\bigodot_{\gamma\in\Gamma_\Smsf}\gamma\odot\delta_\Smsf$, as required.

For the converse direction, let $\theta$ be a~full solution to~$\Pmbb_\Smsf$ and let further, $\Mfrak=\langle2^\Vmbf,\mu\rangle$ be the probabilistic model s.t.\ $\Imc_{\Mfrak}(\theta)=1$ and $\Imc_\Mfrak\left(\bigodot_{\gamma\in\Gamma_\Smsf}\gamma\odot\delta_\Smsf\right)=1$. By Definition~\ref{def:solutions}, it follows that $\theta$~has the form~$\Prob(\tau){\approx}\one$ for some $\LCPL$-term $\tau$ s.t.\ $\Var(\tau)=\Vmbf$. Now let $v_\theta$ be a~classical valuation s.t.\
\begin{align*}
v_\theta(h)&=
\begin{cases}
1&\text{if }h\in\tau\\
0&\text{if }\neg h\in\tau
\end{cases}
\end{align*}
We show that $v_\theta(\phi)=1$. Observe that for every $\psi\in\LCPL$ s.t.\ $\Var(\psi)\subseteq\Vmbf$, it holds that $\mu(\|\psi\|_\Mfrak)=\sum_{\myoverset{\tau\models_\CPL\psi}{\Var(\tau)=\Vmbf}}\mu(\|\tau'\|_\Mfrak)$. Thus, for every $h\in\Vmbf$, $\Imc_\Mfrak(\Prob(h))\in\{0,1\}$. Moreover, one can see that if $h\in\tau$, then $\Imc_\Mfrak(\Prob(h))=1$ and if $\neg h\in\tau$, then $\Imc_\Mfrak(\Prob(h))=0$. Thus, if $\Imc_\Mfrak(\Prob(h))=1$, then $v_\theta(h)=1$, and if $\Imc_\Mfrak(\Prob(h))=0$, then $v_\theta(h)=0$. Furthermore, $\Imc_\Mfrak\left(\bigodot_{\gamma\in\Gamma_\Smsf}\gamma\odot\delta_\Smsf\right)=1$ implies that $\Imc_\Mfrak\left(\bigoplus^m_{i=1}\Prob(p^j_i)\oplus\bigoplus^l_{i=1}\neg\Prob(q^j_i)\right)=1$ for every $j\in\{1,\ldots,n\}$. But then, since all probabilistic atoms have values in~$\{0,1\}$ and since $\vee$ and $\oplus$ behave in the same way on~$\{0,1\}$, it follows that $v_\theta\left(\bigvee^m_{i=1}p^j_i\vee\bigvee^l_{i=1}\neg q_i\right)=1$ for all $j\in\{1,\ldots,n\}$. Hence, $v(\phi)=1$, and $\phi$ is $\CPL$-satisfiable, as required.
\end{proof}
\subsection{Proofs of Section~\ref{ssec:PIC}}
To establish Theorem~\ref{theorem:SPCFsolutionexistence}, we first need several technical statements. In particular, we need to show that entailment and satisfiability of formulas of the form $\Prob(\pi)\diamond\cvalue$ with $\pi\in\LCPL$ being a~conjunction or a~disjunction of propositional variables are decidable in polynomial time.
\begin{lemma}\label{lemma:CIPliteralpolynomialvalidity}
Given $\Prob(\pi)\diamond\cvalue$ s.t.\ $\pi$ is a~conjunction or a~disjunction of variables, it takes polynomial time to determine whether $\FP\models\Prob(\pi)\diamond\cvalue$ and whether $\Prob(\pi)\diamond\cvalue\models_\FP\bot$.
\end{lemma}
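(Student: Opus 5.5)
The key observation is that the single event $\pi$ in $\Prob(\pi)\diamond\cvalue$ is a conjunction or a disjunction of propositional variables. Its set of attainable probabilities is therefore the full interval $[0,1]$ as soon as $\pi$ is non-trivial. A conjunction or disjunction of variables (no negations) is never a $\CPL$-tautology and never $\CPL$-unsatisfiable: the valuation sending all variables to $1$ satisfies $\pi$, and the one sending all variables to $0$ falsifies it.

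I would first establish the claim $\{\mu(\|\pi\|_\Mfrak)\mid\Mfrak=\langle2^{\Var(\pi)},\mu\rangle\}=[0,1]$. Let $w_1=\Var(\pi)$ (the all-true state) and $w_0=\varnothing$ (the all-false state). For any $x\in[0,1]$, the distribution $\partial(w_1)=x$, $\partial(w_0)=1-x$ induces a measure $\mu$ with $\mu(\|\pi\|_\Mfrak)=x$, because $w_1\in\|\pi\|_\Mfrak$ and $w_0\notin\|\pi\|_\Mfrak$. This is the analogue of Proposition~\ref{prop:LukisfragmentofFPLuk} for a single atom, and it also follows from Lemma~\ref{lemma:monotoneassignmentCIP} applied to the chained positive set $\{\pi\}$ with any constant $f$.

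With this in hand, $\Imc_\Mfrak(\Prob(\pi)\diamond\cvalue)$ ranges over the values $[x\diamond c]$ for $x\in[0,1]$. The problem thus reduces to arithmetic on the rational constant $c\in[0,1]_\Qmbb$.
\begin{itemize}
\item $\FP\models\Prob(\pi)\diamond\cvalue$ holds iff $x\diamond c$ for all $x\in[0,1]$. This is the case exactly when $\diamond$ is $\leq$ with $c=1$, or $\geq$ with $c=0$.
\item $\Prob(\pi)\diamond\cvalue\models_\FP\bot$ holds iff there is no $x\in[0,1]$ with $x\diamond c$. This is the case exactly when $\diamond$ is $<$ with $c=0$, or $>$ with $c=1$.
\end{itemize}

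The algorithm only needs to read $\diamond$ and compare $c$ with $0$ and $1$, which takes linear time. Checking that $\pi$ has the required syntactic shape is also linear.

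The only delicate point is making sure the witnesses are legitimate probabilistic models over the right variable set, since Definition~\ref{def:FPentailment} fixes $\Vmbf=\Var(\alpha)$. The two-state construction handles this directly.
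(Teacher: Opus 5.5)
Your proof is correct and takes essentially the paper's route. Both arguments rest on the fact that $\Prob(\pi)$ attains every value in $[0,1]$, so validity reduces to the trivial forms $\Prob(\pi){\leq}\one$ and $\Prob(\pi){\geq}\zero$. You are somewhat sharper in three places: you note up front that a positive conjunction or disjunction is never $\CPL$-valid or $\CPL$-unsatisfiable, which makes the paper's extra cases for such $\pi$ vacuous; you give an explicit two-state witness where the paper says ``it is clear''; and you characterise unsatisfiability directly, whereas the paper reduces it to validity of $\Prob(\pi)\blackdiamond\cvalue$.
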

\begin{proof}
We prove that $\FP\models\Prob(\pi)\diamond\cvalue$ iff one of the following conditions hold:
\begin{enumerate}[noitemsep,topsep=0pt]
\item[(i)] $\CPL\models\pi$, $\diamond\in\{\geq,>\}$, and $\Vmsf_\Prob(\Prob(\pi)\diamond\cvalue)\neq\varnothing$ (recall Definition~\ref{def:probabilisticintervalliterals} for $\Vmsf_{\Prob}(\Prob(\pi)\diamond\cvalue)$), or
\item[(ii)] $\CPL\models\neg\pi$, $\diamond\in\{\leq,<\}$, and $\Vmsf_\Prob(\Prob(\pi)\diamond\cvalue)\neq\varnothing$, or
\item[(iii)] $\Prob(\pi)\diamond\cvalue$ has the form $\Prob(\pi)\leq\one$ or $\Prob(\pi)\geq\zero$.
\end{enumerate}
Observe that it takes polynomial time to verify that $\Vmsf_\Prob(\Prob(\pi)\diamond\cvalue)\neq\varnothing$. It is also clear that if any condition (i)--(iii) holds, then $\FP\models\Prob(\pi)\diamond\cvalue$. Indeed, let (i) hold and $\Mfrak=\langle2^\Vmbf,\mu\rangle$ be any probabilistic model with $\Var(\pi)\subseteq\Vmbf$, then $\|\pi\|_\Mfrak=2^\Vmbf$, whence $\mu(\|\pi\|_\Mfrak)=1$. It is clear that $\Imc_\Mfrak(\Prob(\pi)\diamond\cvalue)=1$ provided that $\Vmsf_\Prob(\Prob(\pi)\diamond\cvalue)\neq\varnothing$. We can show that (ii) also implies $\FP\models\Prob(\pi)\diamond\cvalue$ in a~dual manner. Finally, observe that if (iii) holds, then $\Vmsf_\Prob(\Prob(\pi)\diamond\cvalue)=[0,1]$, whence, again, $\FP\models\Prob(\pi)\diamond\cvalue$.

For the converse direction, let $\FP\models\Prob(\pi)\diamond\cvalue$ and assume for the sake of contradiction that neither of conditions (i)--(iii) holds. This means that $\pi$ is $\CPL$-satisfiable but not $\CPL$-valid and $\Vmsf_\Prob(\Prob(\pi)\diamond\cvalue)\neq[0,1]$. It is clear that for each $x\notin\Vmsf_\Prob(\Prob(\pi)\diamond\cvalue)$ there is an probabilistic model $\Mfrak_x=\langle2^{\Var(\pi)},\mu_x\rangle$ s.t.\ $\mu(\|\pi\|_\Mfrak)=x$. By Definition~\ref{def:FPLuksemantics}, we obtain that $\Imc_{\Mfrak_x}(\Prob(\pi)\diamond\cvalue)=0$, as required. As conditions (i)--(iii) can be verified in polynomial time, it follows that $\FP$-validity of $\Prob(\pi)\diamond\cvalue$ can be established in polynomial time too.

Finally, to check the $\FP$-unsatisfiability of $\Prob(\pi)\diamond\cvalue$, it suffices to check whether $\FP\models\neg(\Prob(\pi)\diamond\cvalue)$, i.e., whether $\Prob(\pi)\blackdiamond\cvalue$ is $\FP$-valid. By the previous paragraph, this is decidable in polynomial time.
\end{proof}
\begin{lemma}\label{lemma:CIPliteralpolynomialentailment}
Given $\Prob(\pi)\diamond\cvalue$ and $\Prob(\varrho)\diamond'\dvalue$ s.t.\ $\pi$ and $\varrho$ conjunctions or disjunctions of variables, it takes polynomial time to determine whether $\Prob(\pi)\diamond\cvalue\models_\FP\Prob(\varrho)\diamond'\dvalue$ and whether $\Prob(\pi)\diamond\cvalue,\Prob(\varrho)\diamond'\dvalue\models_\FP\bot$.
\end{lemma}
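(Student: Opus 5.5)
We reduce both questions to checking a small number of explicit arithmetic conditions. The reduction rests on one observation: for $\pi,\varrho$ that are conjunctions or disjunctions of variables, the set of jointly realisable pairs $(\mu(\|\pi\|_\Mfrak),\mu(\|\varrho\|_\Mfrak))$ has a simple polynomially describable shape. Throughout, $\Vmbf=\Var(\pi)\cup\Var(\varrho)$.

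\textbf{Step 1: classify the pair of events.} Write $x=\mu(\|\pi\|_\Mfrak)$ and $y=\mu(\|\varrho\|_\Mfrak)$. First decide in polynomial time which of the following classical relations hold between $\pi$ and $\varrho$:
\begin{itemize}
\item $\pi\models_\CPL\varrho$ or $\varrho\models_\CPL\pi$, which is a syntactic inclusion or disjointness check on variable sets for conjunctions and disjunctions;
\item $\pi,\varrho\models_\CPL\bot$, which is impossible since both are positive;
\item $\CPL\models\pi\vee\varrho$, which is also impossible since both are positive.
\end{itemize}
Validity and unsatisfiability of each event separately do not arise, since positive conjunctions and disjunctions are never valid or unsatisfiable.

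\textbf{Step 2: determine the realisable region.} We claim the set of realisable pairs is
\[
R=\{(x,y)\in[0,1]^2\mid x\le y\text{ if }\pi\models_\CPL\varrho,\ y\le x\text{ if }\varrho\models_\CPL\pi\},
\]
so $R$ is $[0,1]^2$ with at most one or two half-plane constraints, or the diagonal when $\pi$ and $\varrho$ are equivalent. Necessity is monotonicity of $\mu$. For sufficiency, note that $\{\pi,\varrho\}$ is chained positive, because any two positive terms either stand in a $\models_\CPL$ relation or, if not, we can argue directly. For the chained case, apply Lemma~\ref{lemma:monotoneassignmentCIP} to the $\models_\CPL$-monotone map $\pi\mapsto x$, $\varrho\mapsto y$.

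If neither entails the other but they share variables, the set is not formally chained positive. Here we build the distribution by hand, and this is the main obstacle. The plan is to choose the two valuations $w_\pi$ making $\pi$ true and $\varrho$ false, and $w_\varrho$ making $\varrho$ true and $\pi$ false. These exist because there is no entailment in either direction. Together with the all-true valuation $w_\top$ and the all-false valuation $w_\bot$, we then solve for weights:
\begin{itemize}
\item put $\min(x,y)$ on $w_\top$;
\item put $x-\min(x,y)$ on $w_\pi$;
\item put $y-\min(x,y)$ on $w_\varrho$;
\item put the remainder on $w_\bot$.
\end{itemize}
The remainder is $1-\max(x,y)\ge0$, so all weights are nonnegative. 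This realises every $(x,y)\in[0,1]^2$, and the same four-point construction also handles the chained cases directly.

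\textbf{Step 3: decide the two questions.} By Step 2, both questions become statements about $R$ and the permitted-value intervals $A=\Vmsf_\Prob(\Prob(\pi)\diamond\cvalue)$ and $B=\Vmsf_\Prob(\Prob(\varrho)\diamond'\dvalue)$.
\begin{itemize}
\item The two formulas are jointly unsatisfiable iff $(A\times B)\cap R=\varnothing$.
\item The entailment holds iff $\{y\mid(x,y)\in R,\ x\in A\}\subseteq B$.
\end{itemize}
Each of $A$ and $B$ is an interval with rational endpoints, open or closed, computed in constant arithmetic. $R$ is defined by at most two linear inequalities. Both checks therefore reduce to comparing a constant number of rational endpoints, taking care of strict versus non-strict bounds, which is polynomial in the size of the input.

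If $A=\varnothing$, the entailment holds vacuously. This case is consistent with Lemma~\ref{lemma:CIPliteralpolynomialvalidity}, which we may also invoke to dispose of the degenerate cases first.
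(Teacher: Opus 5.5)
Your proof is correct, but it takes a different route from the paper's.

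\textbf{The paper's route.} The paper first uses Lemma~\ref{lemma:CIPliteralpolynomialvalidity} to dispose of literals that are $\FP$-valid or unsatisfiable. It then splits into four cases according to the classical relation between $\pi$ and $\varrho$: strict entailment in either direction, equivalence, or incomparability. In each case it states a syntactic criterion. For instance, if $\pi\models_\CPL\varrho$ strictly, entailment holds iff both $\diamond,\diamond'$ are lower bounds and $\Vmsf_\Prob(\Prob(\pi)\diamond\cvalue)\subseteq\Vmsf_\Prob(\Prob(\varrho)\diamond'\dvalue)$. In the incomparable case entailment never holds. The negative directions are verified by ad hoc countermodels. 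Joint unsatisfiability is reduced to entailment of the negated literal $\neg(\Prob(\varrho)\diamond'\dvalue)$.

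\textbf{Your route.} You compute, once and for all, the set $R$ of jointly realisable pairs $(\mu(\|\pi\|),\mu(\|\varrho\|))$. The explicit four-state distribution gives sufficiency, and you then read off both questions as interval computations.

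\textbf{What each buys.} Your argument is more uniform: degenerate literals, the unsatisfiability question and all four classical cases are handled by the same endpoint check. Your construction also proves the realisability facts that the paper's countermodels use tacitly. These are that $\mu$ ``can be extended'' suitably, and that a model with $\mu(\|\pi\|)=0$ and $\mu(\|\varrho\|)=1$ exists when neither event entails the other. It also makes the appeal to Lemma~\ref{lemma:monotoneassignmentCIP} unnecessary. That is just as well, since $\{\pi,\varrho\}$ need not be chained positive when the two events share variables without entailing each other, as you noticed. The paper's case analysis, in exchange, yields an explicit syntactic description of when entailment holds, which your approach leaves implicit in the endpoint comparison.

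\textbf{A minor imprecision in Step 1.} The classical entailment test is not just ``inclusion or disjointness''. A conjunction entails a disjunction iff their variable sets intersect. A disjunction entails a conjunction only when both are the same single variable. Polynomiality is clear regardless, so this does not affect the argument.
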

\begin{proof}
Observe that if $\pi$ and $\varrho$ have forms $\bigwedge^m_{i=1}p_i$ or $\bigvee^n_{j=1}q_j$, then one of the following statements holds:
\begin{enumerate}[noitemsep,topsep=0pt]
\item $\pi\models_\CPL\varrho$ but $\varrho\not\models_\CPL\pi$, or
\item $\varrho\models_\CPL\pi$ but $\pi\not\models_\CPL\varrho$, or
\item $\pi$ and $\varrho$ are $\CPL$-equivalent, or
\item $\pi\not\models_\CPL\varrho$, $\varrho\not\models_\CPL\pi$, and $\pi,\varrho\not\models_\FP\bot$.
\end{enumerate}
Furthermore, given $\pi$ and $\varrho$, it takes polynomial time to determine whether case~1, 2, 3, or~4 holds. We show how to determine entailment in polynomial time in each of these three cases when $\Prob(\pi)\diamond\cvalue$ and $\Prob(\varrho)\diamond'\dvalue$ are $\FP$-satisfiable and not $\FP$-valid.

First, let $\pi\models_\CPL\varrho$ and $\varrho\not\models_\CPL\pi$. We show that $\Prob(\pi)\diamond\cvalue\models_\FP\Prob(\varrho)\diamond'\dvalue$ iff (a) $\diamond,\diamond'\in\{\geq,>\}$ and (b) $\Vmsf_\Prob(\Prob(\pi)\diamond\cvalue)\subseteq\Vmsf_\Prob(\Prob(\varrho)\diamond'\dvalue)$. Assume that (a) and (b) hold, and assume w.l.o.g.\ that $\diamond,\diamond'\in\{\geq\}$. By (b), it follows that $c\geq d$. Now, as $\|\pi\|_\Mfrak\subseteq\|\varrho\|_\Mfrak$, it follows that $\mu(\|\pi\|_\Mfrak)\leq\mu(\|\varrho\|_\Mfrak)$, for every probabilistic model~$\Mfrak$. Thus, $\mu(\|\pi\|_\Mfrak)\geq c$ entails $\mu(\|\pi\|_\Mfrak)\geq d$. It follows that $\Prob(\pi)\geq\cvalue\models_\FP\Prob(\varrho)\geq\dvalue$. Other combinations of $\geq$ and $>$ in $\Prob(\pi)\geq\cvalue$ and $\Prob(\varrho)\geq\dvalue$ can be considered similarly.

Conversely, let $\Prob(\pi)\diamond\cvalue\models_\FP\Prob(\varrho)\diamond'\dvalue$. Assume for the sake of contradiction that $\diamond$ is $\leq$ (the case of $\diamond$ being $<$ can be dealt with similarly). Let $\Mfrak=\langle2^{\Var(\pi\wedge\varrho)},\mu\rangle$ be s.t.\ $\mu(\|\pi\|_\Mfrak)=0$. Clearly, $\Imc_\Mfrak(\Prob(\pi)\leq\cvalue)=1$. As $\Prob(\varrho)\diamond'\dvalue$ is not $\FP$-valid and $\|\pi\|_\Mfrak\subsetneq\|\varrho\|_\Mfrak$ (recall that $\pi$ and $\varrho$ are not $\CPL$-equivalent), it is clear that we can extend $\mu$ in such a~way that $\mu(\|\varrho\|_\Mfrak)\notin\Vmsf(\Prob(\varrho)\diamond'\dvalue)$, whence $\Imc_\Mfrak(\Prob(\varrho)\diamond'\dvalue)=0$. Thus, $\Mfrak$ witnesses $\Prob(\pi)\leq\cvalue\not\models_\FP\Prob(\varrho)\diamond'\dvalue$. Similarly, if $\diamond'$ is $<$ (the case of $\diamond'$ being~$\leq$ can be shown in the same way), we define $\Mfrak=\langle2^{\Var(\pi\wedge\varrho)},\mu\rangle$ s.t.\ $\mu(\|\varrho\|_\Mfrak)=1$ (whence $\Imc(\Prob(\varrho)<\dvalue)=0$) and $\mu(\|\pi\|)\in\Vmsf(\Prob(\pi)\diamond\cvalue)$, which gives us $\Imc_\Mfrak(\Prob(\pi)\diamond\cvalue)=1$, i.e., $\Mfrak$ witnesses $\Prob(\pi)\diamond\cvalue\not\models_\FP\Prob(\varrho)<\dvalue$. Now let $\Vmsf_\Prob(\Prob(\pi)\diamond\cvalue)\not\subseteq\Vmsf_\Prob(\Prob(\varrho)\diamond'\dvalue)$. We further assume that $\diamond,\diamond'\in\{\geq\}$ (other combinations of $\geq$ and~$>$ can be dealt with in a~similar way). This means that $c<d$. Now let $\Mfrak=\langle2^{\Var(\pi,\varrho)},\mu\rangle$ be s.t.\ $\mu(\|\pi\|_\Mfrak)=\mu(\|\varrho\|_\Mfrak)=c$. Clearly, $\Imc_\Mfrak(\Prob(\pi)\geq\cvalue)=1$ but $\Imc_\Mfrak(\Prob(\pi)\geq\dvalue)=0$. Thus, $\Prob(\pi)\geq\cvalue\not\models_\FP\Prob(\varrho)\geq\dvalue$.

Now assume that $\varrho\models_\CPL\pi$ but $\pi\not\models_\CPL\varrho$. By an argument similar to the previous case, we can show that $\Prob(\pi)\diamond\cvalue\models_\FP\Prob(\varrho)\diamond'\dvalue$ iff (a)~$\diamond,\diamond'\in\{\leq,<\}$ and $\Vmsf_\Prob(\Prob(\pi)\diamond\cvalue)\subseteq\Vmsf_\Prob(\Prob(\varrho)\diamond'\dvalue)$.

Assume that $\pi$ and $\varrho$ are $\CPL$-equivalent. We show that $\Prob(\pi)\diamond\cvalue\models_\FP\Prob(\varrho)\diamond'\dvalue$ iff $\Vmsf_\Prob(\Prob(\pi)\diamond\cvalue)\subseteq\Vmsf_\Prob(\Prob(\varrho)\diamond'\dvalue)$. Indeed, observe that $\|\pi\|_\Mfrak=\|\varrho\|_\Mfrak$, whence $\mu(\|\pi\|_\Mfrak)=\mu(\|\varrho\|_\Mfrak)$, in every probabilistic model~$\Mfrak$. Thus, if $\mu(\|\pi\|_\Mfrak)\in\Vmsf(\Prob(\pi)\diamond\cvalue)$, then $\mu(\|\varrho\|_\Mfrak)\in\Vmsf(\Prob(\varrho)\diamond'\dvalue)$. The converse direction can be obtained similarly.

Finally, assume that $\pi\not\models_\CPL\varrho$, $\varrho\not\models_\CPL\pi$, and $\pi,\varrho\not\models_\FP\bot$. We show that $\Prob(\pi)\diamond\cvalue\not\models_\FP\Prob(\varrho)\diamond'\dvalue$. Indeed, as $\Prob(\pi)\diamond\cvalue$ and $\Prob(\varrho)\diamond'\dvalue$ are not valid nor unsatisfiable, there are some $x,y\in\{0,1\}$ s.t.\ $x\in\Vmsf_\Prob(\Prob(\pi)\diamond\cvalue)$ and $y\notin\Vmsf_\Prob(\Prob(\varrho)\diamond'\dvalue)$. Assume, in particular, that $x=0$ and $y=1$ (other combinations can be dealt with similarly). Clearly, there is a~probabilistic model $\Mfrak=\langle2^{\Var(\pi\wedge\varrho)},\mu\rangle$ s.t.\ $\mu(\|\pi\|_\Mfrak)=0$ and $\mu(\|\varrho\|_\Mfrak)=1$. But in this model, $\Imc_\Mfrak(\Prob(\pi)\diamond\cvalue)=1$ and $\Imc_\Mfrak(\Prob(\varrho)\diamond'\dvalue)=0$, whence $\Prob(\pi)\diamond\cvalue\not\models_\FP\Prob(\varrho)\diamond'\dvalue$.

It is clear that in each case 1--4, it takes polynomial time to determine whether $\Prob(\pi)\diamond\cvalue\models_\FP\Prob(\varrho)\diamond'\dvalue$. Let us now present the polynomial procedure that verifies whether $\Prob(\pi)\diamond\cvalue\models_\FP\Prob(\varrho)\diamond'\dvalue$ in the general case. Given $\Prob(\pi)\diamond\cvalue$ and $\Prob(\varrho)\diamond'\dvalue$, we first check whether $\Prob(\pi)\diamond\cvalue$ or $\Prob(\varrho)\diamond'\dvalue$ are $\FP$-unsatisfiable or $\FP$-valid (this takes us polynomial time by Lemma~\ref{lemma:CIPliteralpolynomialvalidity}). If $\Prob(\pi)\diamond\cvalue$ is unsatisfiable or $\Prob(\varrho)\diamond'\dvalue$ is valid, then $\Prob(\pi)\diamond\cvalue\models_\FP\Prob(\varrho)\diamond'\dvalue$. If $\Prob(\pi)\diamond\cvalue$ is satisfiable and $\Prob(\varrho)\diamond'\dvalue$ is unsatisfiable or $\Prob(\pi)\diamond\cvalue$ is valid and $\Prob(\varrho)\diamond'\dvalue$ is non-valid, then $\Prob(\pi)\diamond\cvalue\not\models_\FP\Prob(\varrho)\diamond'\dvalue$. Otherwise, we $\Prob(\pi)\diamond\cvalue$ and $\Prob(\varrho)\diamond'\dvalue$ are satisfiable and not valid. In this case, we verify which case 1--4 holds for $\pi$ and~$\varrho$, and check whether $\Prob(\pi)\diamond\cvalue\models_\FP\Prob(\varrho)\diamond'\dvalue$ in polynomial time, as we showed above.

To check that $\Prob(\pi)\diamond\cvalue,\Prob(\varrho)\diamond'\dvalue\not\models_\FP\bot$, it suffices to check that $\Prob(\pi)\diamond\cvalue\models_\FP\neg(\Prob(\varrho)\diamond'\dvalue)$, i.e., $\Prob(\pi)\diamond\cvalue\models_\FP\Prob(\varrho)\blackdiamond'\dvalue$. The result now follows.
\end{proof}

We are now ready to prove Theorem~\ref{theorem:SPCFsolutionexistence}.
\SPCFsolutionexistence*
\begin{proof}[Proof of Item~1]
We adapt the proof of~\cite[Theorem~15]{InoueKozhemiachenko2025}. To show that the existence of sufficient solutions is decidable in polynomial time, we construct a~polynomial embedding of SPCF abduction problems into classical abduction problems whose theory is a~set of $\ihsbminus$-clauses, i.e., formulas of the following form: $p$, $\neg q$, $p{\rightarrow}q$, or $\bigwedge^n_{i=1}p_i{\rightarrow}\bot$.

Let $\Pmbb=\langle\Gamma,\delta,\Hmsf,\Vmc\rangle$ be an SPCF abduction problem, $\delta=\Prob(\chi)\diamond\dvalue$, and $\eta$ a~probabilistic interval term. Let further $\Gamma=\{\kappa_1,\ldots,\kappa_m\}\cup\{\kappa^+_1,\ldots,\kappa^+_{m^+}\}$ be s.t.\ $\kappa$'s and $\kappa^+$'s have the following forms:
\begin{align*}
\kappa_i&=\bigodot^{n_i}_{j=1}\lambda_i\rightarrow\bot&\kappa^+_i&=\lambda'_i\rightarrow\lambda''_i
\end{align*}

Now, for each PIL $\lambda$ occurring in~$\Pmbb$, let $r_\lambda$ be a~fresh propositional variable. For $\kappa,\kappa^+\in\Gamma$, we define 
\begin{align*}
\kappa_\Hmc&\coloneqq\bigwedge_{\lambda\in\kappa}r_\lambda\rightarrow\bot&\kappa^+_\Hmc&\coloneqq r_{\lambda'}\rightarrow r_{\lambda''}
\end{align*}
We consider the following \emph{classical} abduction problem: $\Pmbb_\Hmc=\langle\Gamma_\Hmc,\delta_\Hmc,\Hmsf_\Hmc\rangle$ defined as described below.
\begin{align}\label{equ:Hornreductionproblem}
\Gamma_\Hmc=&\left\{\kappa_\Hmc\mid\kappa\in\Gamma\right\}\cup\{\kappa^+_\Hmc\mid\kappa^+\in\Gamma\}\cup\left\{r_{\Prob(\sigma)\diamond\cvalue}\!\wedge\!r_{\Prob(\sigma')\diamond\cvalue'}\!\rightarrow\!\bot\left|\begin{matrix}\Prob(\sigma)\diamond\cvalue,\Prob(\sigma')\diamond\cvalue'\models_\FP\bot\\\sigma,\sigma'\in\Emc[\Pmbb];~c,c'\in\Vmc\end{matrix}\right.\right\}\cup\nonumber\\&\left\{r_{\Prob(\sigma)\diamond\cvalue}\rightarrow r_{\Prob(\sigma')\diamond\cvalue'}\left|\begin{matrix}\Prob(\sigma)\diamond\cvalue\models_\FP\Prob(\sigma')\diamond\cvalue'\\\sigma,\sigma'\in\Emc[\Pmbb];~c,c'\in\Vmc
\end{matrix}\right.\right\}\nonumber\\
\delta_\Hmc=&r_\delta\nonumber\\
\Hmsf_\Hmc=&\{r_{\Prob(\pi)\diamond\evalue}\mid\pi\in\Hmsf\text{ and }e\in\Vmc\}
\end{align}
Note from Lemma~\ref{lemma:CIPliteralpolynomialentailment} that it takes polynomial time to determine whether $\lambda\models_\FP\lambda'$ and $\lambda,\lambda'\models_\FP\bot$ for any two PILs occurring in~$\Pmbb$. Furthermore, the size of~$\Pmbb_\Hmc$ is polynomial w.r.t.\ the size of~$\Pmbb$. Hence, it takes polynomial time to construct $\Pmbb_\Hmc$. Moreover, $\Gamma_\Hmc$ is indeed a~set of $\ihsbminus$-clauses. It remains to show that $\Pmbb$ has a~sufficient solution iff $\Pmbb_\Hmc$ has a~solution.

First, assume that $\eta=\bigodot^k_{i=1}\Prob(\pi_i)\diamond\evalue_i$ is a~sufficient solution to~$\Pmbb$. We show that $\eta_\Hmc=\bigwedge^k_{i=1}r_{\Prob(\pi_i)\diamond\evalue_i}$ is a~solution to $\Pmbb_\Hmc$. Recall from Definitions~\ref{def:Luksemantics} and~\ref{def:FPLuksemantics} that $\Imc(\Prob(\phi)\diamond\cvalue)\in\{0,1\}$ for every $\phi$ and $\Mfrak$. Now let $\Mfrak$ be a~probabilistic model witnessing $\Gamma,\eta\not\models_\FP\bot$. We set
\begin{align}\label{equ:Hornvaluation}
v_\Mfrak(p_\lambda)&=
\begin{cases}
1&\text{if }\Imc_\Mfrak(\lambda)=1\\
0&\text{otherwise}
\end{cases}
\end{align}
It is easy to check that $v_\Mfrak(\phi)=1$ for every $\phi\in\Gamma_\Hmc\cup\{\eta_\Hmc\}$. Thus, $\Gamma_\Hmc,\eta_\Hmc\not\models_\CPL\bot$. To see that $\Gamma_\Hmc,\eta_\Hmc\models_\CPL\delta_\Hmc$, assume for contradiction that there is some classical valuation $v$ s.t.\ $v(\phi)=1$ for every $\phi\in\Gamma_\Hmc\cup\{\tau_\Hmc\}$ but $v(\delta_\Hmc)=0$. Our goal now is to define a~probabilistic model $\Mfrak_v=\langle2^\Vmbf,\mu_v\rangle$ with $\Vmbf=\Var[\Pmbb]$ s.t.\ $\Imc_{\Mfrak_v}(\gamma)=1$ for every $\gamma\in\Gamma\cup\{\eta\}$ but $\Imc_{\Mfrak_v}(\delta)=0$ (observe that since $\delta$ is a~PIL, it can only have values~$0$ or~$1$). For every $\pi\in\Emc[\Pmbb]$, we define 
\begin{align}
\Vmc_{\Prob(\pi)}=&\bigcap\limits_{v(r_{\Prob(\pi)\geq\cvalue})=1}\!\!\!\!\!\!\!\![c,1]\cap\!\!\!\!\!\bigcap\limits_{v(r_{\Prob(\pi)>\cvalue'})=1}\!\!\!\!\!\!\!\!(c',1]\cap\bigcap\limits_{v(r_{\Prob(\pi)\leq\cvalue''})=0}\!\!\!\!\!\!\!\!(c'',1]\cap\!\!\!\!\!\bigcap\limits_{v(r_{\Prob(\pi)<\cvalue'''})=0}\!\!\!\!\!\!\!\![c''',1]\cap\nonumber\\
&\bigcap\limits_{v(r_{\Prob(\pi)\geq\dvalue})=0}\!\!\!\!\!\!\!\![0,d)\cap\!\!\!\!\!\bigcap\limits_{v(r_{\Prob(\pi)>\dvalue'})=0}\!\!\!\!\!\!\!\![0,d']\cap\bigcap\limits_{v(r_{\Prob(\pi)\leq\dvalue''})=1}\!\!\!\!\!\!\!\![0,d'']\cap\!\!\!\!\!\bigcap\limits_{v(r_{\Prob(\pi)<\dvalue'''})=1}\!\!\!\!\!\!\!\![0,d''')\label{equ:valuesets}
\end{align}

Let us show that $\Vmc_{\Prob(\pi)}\neq\varnothing$ for each $\pi\in\Emc[\Pmbb]$. Assume for contradiction that there is some $\pi\in\Emc[\Pmbb]$ s.t.\ $\Vmc_{\Prob(\pi)}=\varnothing$. This means that one of the following holds.
\begin{enumerate}[noitemsep,topsep=0pt]
\item There are $\triangleright,\triangleright'\in\{\geq,>\}$, $\triangleleft,\triangleleft'\in\{\leq,<\}$, and $c,d\in[0,1]_\Qmbb$ such that $c>d$ and
\begin{enumerate}[noitemsep,topsep=0pt]
\item $v(r_{\Prob(\pi)\triangleright\cvalue})=1$ and $v(r_{\Prob(\pi)\triangleleft\dvalue})=1$, or
\item $v(r_{\Prob(\pi)\triangleright\cvalue})=1$ and $v(r_{\Prob(\pi)\triangleright'\dvalue})=0$, or
\item $v(r_{\Prob(\pi)\triangleleft\cvalue})=0$ and $v(r_{\Prob(\pi)\triangleright\dvalue})=0$, or
\item $v(r_{\Prob(\pi)\triangleleft\cvalue})=0$ and $v(r_{\Prob(\pi)\triangleleft'\dvalue})=1$.
\end{enumerate}
\item There are $\triangleright,\triangleright'\in\{\geq,>\}$, $\triangleleft,\triangleleft'\in\{\leq,<\}$, and $c,d\in[0,1]_\Qmbb$ such that $c\geq d$, $\triangleright\neq\triangleright'$, $\triangleleft\neq\triangleleft'$, and
\begin{enumerate}[noitemsep,topsep=0pt]
\item $v(r_{\Prob(\pi)\triangleright\cvalue})=1$ and $v(r_{\Prob(\pi)\triangleleft'\dvalue})=1$, or
\item $v(r_{\Prob(\pi)\triangleright\cvalue})=1$ and $v(r_{\Prob(\pi)\triangleright'\dvalue})=0$, or
\item $v(r_{\Prob(\pi)\triangleleft\cvalue})=0$ and $v(r_{\Prob(\pi)\triangleright'\dvalue})=0$, or
\item $v(r_{\Prob(\pi)\triangleleft\cvalue})=0$ and $v(r_{\Prob(\pi)\triangleleft'\dvalue})=1$.
\end{enumerate}
\end{enumerate}

We consider only the first case since the second one can be dealt with in a~similar way. In the case (1.a), we have $\Prob(\pi)\triangleright\cvalue,\Prob(\pi)\triangleleft\dvalue\models_\FP\bot$. Thus, $(r_{\Prob(\pi)\triangleright\cvalue}\wedge r_{\Prob(\pi)\triangleleft\dvalue}){\rightarrow}\bot\in\Gamma_\Hmc$ and $v((r_{\Prob(\pi)\triangleright\cvalue}\wedge r_{\Prob(\pi)\triangleleft\dvalue})\rightarrow\bot)=0$, and $v$~does not satisfy~$\Gamma$, contrary to the assumption. In the case (1.b), note that $\Prob(\pi)\triangleright\cvalue\models_\FP\Prob(\pi)\triangleright'\dvalue$, whence $r_{\Prob(\pi)\triangleright\cvalue}\rightarrow r_{\Prob(\pi)\triangleright'\dvalue}\in\Gamma_\Hmc$. Again, $v$~does not satisfy~$\Gamma_\Hmc$. In the case (1.c), we have that $\FP\models(\Prob(\pi)\triangleleft\cvalue)\oplus(\Prob(\pi)\triangleright\dvalue)$, whence $\Gamma$ is not SPCF, contrary to the assumption. Finally, in the case (1.d), we have $\Prob(\pi)\triangleleft'\dvalue\models_\FP\Prob(\pi)\triangleleft\cvalue)$, whence $r_{\Prob(\pi)\triangleleft'\dvalue}\rightarrow r_{\Prob(\pi)\triangleleft\cvalue}\in\Gamma_\Hmc$. Just as in the case (1.b), we have that $v$~does not satisfy~$\Gamma_\Hmc$.

Moreover, using a~similar reasoning, we can show that for every $\pi,\varrho\in\Emc[\Pmbb]$ s.t.\ $\pi\models_\CPL\varrho$, there are some $x\in\Vmc_{\Prob(\pi)}$ and $x'\in\Vmc_{\Prob(\varrho)}$ with $x\leq x'$. Again, assume for the sake of contradiction that there are some $\pi,\varrho\in\Emc[\Pmbb]$ s.t.\ $\pi\models_\CPL\varrho$, but $x>x'$ for all $x\in\Vmc_{\Prob(\pi)}$ and $x'\in\Vmc_{\Prob(\varrho)}$. It follows that one of the following holds.
\begin{enumerate}[noitemsep,topsep=0pt]
\item There are $\triangleright,\triangleright'\in\{\geq,>\}$, $\triangleleft,\triangleleft'\in\{\leq,<\}$, and $c,d\in[0,1]_\Qmbb$ such that $c>d$ and
\begin{enumerate}[noitemsep,topsep=0pt]
\item $v(r_{\Prob(\pi)\triangleright\cvalue})=1$ and $v(r_{\Prob(\varrho)\triangleleft\dvalue})=1$, or
\item $v(r_{\Prob(\pi)\triangleright\cvalue})=1$ and $v(r_{\Prob(\varrho)\triangleright'\dvalue})=0$, or
\item $v(r_{\Prob(\pi)\triangleleft\cvalue})=0$ and $v(r_{\Prob(\varrho)\triangleright\dvalue})=0$, or
\item $v(r_{\Prob(\pi)\triangleleft\cvalue})=0$ and $v(r_{\Prob(\varrho)\triangleleft'\dvalue})=1$.
\end{enumerate}
\item There are $\triangleright,\triangleright'\in\{\geq,>\}$, $\triangleleft,\triangleleft'\in\{\leq,<\}$, and $c,d\in[0,1]_\Qmbb$ such that $c\geq d$, $\triangleright\neq\triangleright'$, $\triangleleft\neq\triangleleft'$, and
\begin{enumerate}[noitemsep,topsep=0pt]
\item $v(r_{\Prob(\pi)\triangleright\cvalue})=1$ and $v(r_{\Prob(\varrho)\triangleleft'\dvalue})=1$, or
\item $v(r_{\Prob(\pi)\triangleright\cvalue})=1$ and $v(r_{\Prob(\varrho)\triangleright'\dvalue})=0$, or
\item $v(r_{\Prob(\pi)\triangleleft\cvalue})=0$ and $v(r_{\Prob(\varrho)\triangleright'\dvalue})=0$ (with $\langle\triangleleft,\triangleright'\rangle\in\{\langle<,\geq\rangle,\langle\leq,>\rangle\}$), or
\item $v(r_{\Prob(\pi)\triangleleft\cvalue})=0$ and $v(r_{\Prob(\varrho)\triangleleft'\dvalue})=1$.
\end{enumerate}
\end{enumerate}

Just as before, we consider only case~1 because case~2 can be tackled in the same way. The reasoning is similar. In the case (1.a), we have $\Prob(\pi)\triangleright\cvalue,\Prob(\varrho)\triangleleft\dvalue\models_\FP\bot$. Hence, $v$~does not satisfy~$\Gamma$, contrary to the assumption. In the case (1.b), $\Prob(\pi)\triangleright\cvalue\models_\FP\Prob(\varrho)\triangleright'\dvalue$, whence again, $v$~does not satisfy~$\Gamma_\Hmc$. In the case (1.c), we have that $\FP\models(\Prob(\pi)\triangleleft\cvalue)\oplus(\Prob(\varrho)\triangleright\dvalue)$, whence $\Gamma$ is not SPCF, contrary to the assumption. Finally, in the case (1.d), we have $\Prob(\varrho)\triangleleft'\dvalue\models_\FP\Prob(\pi)\triangleleft\cvalue)$, whence $v$~does not satisfy~$\Gamma_\Hmc$ just as in the case (1.b).

Now, for every $\pi\in\Emc[\Pmbb]$, set $f(\pi)=x$ for some $x\in\Vmc_{\Prob(\pi)}$ s.t.\ $\pi\models_\CPL\pi'$ implies $f(\pi)\leq f(\pi')$ for every pair of $\pi,\pi'\in\Emc[\Pmbb]$. It follows from the above that such~$f$ exists. Moreover, one can see that $f$~is a~$\models_\CPL$-monotone function (recall Definition~\ref{def:entailmentmonotonicity}). By Lemma~\ref{lemma:monotoneassignmentCIP}, we can extend~$f$ to a~measure $\mu_v$ on~$2^\Vmbf$ s.t.\ $f(\pi)=\mu_v(\|\pi\|_\Mfrak)$ for all $\pi\in\Emc[\Pmbb]$. Now, using Definitions~\ref{def:Luksemantics} and~\ref{def:FPLuksemantics}, it is easy to check that $\Imc_{\Mfrak_v}(\gamma)=1$ for all $\gamma\in\Gamma$ and $\Imc_{\Mfrak_v}(\eta)=1$ but $\Imc_{\Mfrak_v}(\delta)=0$. Indeed, observe from~\eqref{equ:valuesets} that $v(r_\lambda)=1$ iff $\Imc_{\Mfrak_v}(\lambda)=1$. Thus, $\Gamma,\eta\not\models_\FP\delta$ which contradicts the assumption that $\delta$ solves~$\Pmbb$. Thus, $\Gamma_\Hmc,\eta_\Hmc\models_\CPL\delta_\Hmc$. It follows now that $\delta_\Hmc$ is a~solution to~$\Pmbb_\Hmc$.

For the converse direction, let $\zeta$ be a~solution to $\Pmbb_\Hmc$. That is, $\Gamma_\Hmc,\zeta\consvDashCPL\delta_\Hmc$. Assume that $v$ is a~classical valuation witnessing $\Gamma_\Hmc,\zeta\not\models_\CPL\bot$ and set $\zeta'=\bigodot_{r_\lambda\in\zeta}\lambda$. We define $\Mfrak_v$ using~\eqref{equ:valuesets} as in the previous part of the proof. This gives us that $\Mfrak_v$~witnesses $\Gamma,\zeta'\not\models_\FP\bot$. It remains to see that $\Gamma,\zeta'\models_\FP\delta$. To do that, assume for contradiction that some probabilistic model is witnessing $\Gamma,\zeta'\not\models_\FP\delta$. We define $v_\Hmc$ as shown in~\eqref{equ:Hornvaluation}. It is clear that $v_\Hmc$ would witness $\Gamma_\Hmc,\zeta\not\models_\CPL\delta_\Hmc$, contrary to the assumption. Hence, $\zeta'$ is a~sufficient solution to~$\Pmbb$, as required.
\end{proof}
\begin{proof}[Proof of Item~2]
We show only the $\np$-hardness since $\np$-membership follows from Theorem~\ref{theorem:fullsolutionexistenceNP}. We provide a~polynomial reduction from $\CPL$-satisfiability of conjunctive normal forms. The idea of the proof is similar to that of Theorem~\ref{theorem:PSCsolutionexistence} (Item~2).

Let $\phi\in\LCPL$ be in CNF and $r\notin\Var(\phi)$. We can equivalently rewrite~$\phi$ as a~conjunction of implicative clauses $\bigwedge^n_{j=1}\left(\left(\bigwedge^m_{i=1}p^j_i\wedge\bigwedge^l_{i=1}\neg q_i\right)\rightarrow\bot\right)$. Now, define
\begin{align*}
\Gamma_\Cmsf&=\left\{\left(\bigodot^m_{i=1}\Prob(p^j_i){\geq}\one\odot\bigodot^l_{i=1}\Prob(q^j_i){\leq}\zero\right){\rightarrow}\bot\mid j\in\{1,\ldots,n\}\right\}\cup\{\Prob(r){\geq}\one\}&
\delta_\Cmsf&=\Prob(r){\geq}\one&\Hmsf_\Cmsf&=\{p^1_1\}&\Vmc_\Cmsf&=\{0,1\}\\
\Pmbb_\Cmsf&=\langle\Gamma_\Cmsf,\delta_\Cmsf,\Hmsf_\Cmsf,\Vmc_\Cmsf\rangle
\end{align*}
Observe that $\Gamma_\Cmsf\cup\{\delta_\Cmsf\}$ is indeed an SPCF theory. First, there are no $p,q\in\Var$ s.t.\ $\FP\models\Prob(p){\geq}\one\oplus\Prob(q){\leq}\zero$. Second, all events in~$\Pmbb_\Cmsf$ are propositional variables, whence $\Emc[\Pmbb]$~is chained positive.

Now assume that $\phi$ is $\CPL$-satisfiable. In particular, let $v(\phi)=1$ for some classical valuation~$v$. We define~$\tau_v$ and~$\theta_v$ as shown in~\eqref{equ:completePITfromvaluation}. Using the same reasoning as in the proof of Theorem~\ref{theorem:PSCsolutionexistence}, we can prove that~$\theta$ is a~full solution to~$\Pmbb_\Cmsf$. For the converse direction, assume that $\theta=\Prob(\tau){\approx}\one$ is a~full solution to~$\Pmbb_\Cmsf$. Again, using the reasoning from Theorem~\ref{theorem:PSCsolutionexistence}, we can show that the following classical valuation~--- $v_\theta(h)=1$ if $h\in\tau$, $v_\theta(h)=0$ if $\neg h\in\tau$~--- satisfies~$\phi$.
\end{proof}
\section{Proofs of Section~\ref{sec:probabduction}}
\PrAPsolutionsconditionalprobability*
\begin{proof}
Assume that $\tau$ is a~solution to~$\Pmbb_\pfrak$. Then $\phi,\tau\models_\CPL\chi$ and there is a~probabilistic model~$\Mfrak_\pfrak=\langle2^\Vmbf,\mu_\pfrak\rangle$ coherent with~$\pfrak$ s.t.\ $\mu_\pfrak(\|\phi\wedge\tau\|_\Mfrak)>0$. Now, $\phi,\tau\models_\CPL\chi$ implies that $\|\phi\wedge\tau\|_\Mfrak\subseteq\|\chi\|_\Mfrak$ for all probabilistic models $\Mfrak=\langle2^\Vmbf,\mu\rangle$. It follows that $\mu(\|\phi\wedge\tau\|_\Mfrak)\leq\mu(\|\chi\|_\Mfrak)$. This gives us condition~1. To see that condition~2 is satisfied, we proceed as follows. Note that we have $\|\phi\wedge\tau\|_{\Mfrak_\pfrak}\subseteq\|\chi\|_{\Mfrak_\pfrak}$ because $\phi,\tau\models_\CPL\chi$. Hence, $\|\phi\wedge\tau\|_{\Mfrak_\pfrak}\cap\|\chi\|_{\Mfrak_\pfrak}=\|\phi\wedge\tau\|_{\Mfrak_\pfrak}$, i.e., $\mu_\pfrak(\|\phi\wedge\tau\|_{\Mfrak_\pfrak}\cap\|\chi\|_{\Mfrak_\pfrak})=\mu(\|\phi\wedge\tau\|_{\Mfrak_\pfrak})$. Finally, $\mu_\pfrak(\|\phi\wedge\tau\|_\Mfrak)>0$ implies that $\Probfrak_\mu(\chi\mid\phi\wedge\chi)=1$, as required.

For the converse direction, assume that \emph{$\tau$ is not a~solution to~$\Pmbb_\pfrak$}. Then either (i)~$\phi,\tau\not\models_\CPL\chi$ or (ii)~$\mu(\|\phi\wedge\tau\|_\Mfrak)=0$ in all probabilitstic models $\Mfrak=\langle2^\Vmbf,\mu\rangle$ s.t.\ $\mu$ is coherent with~$\pfrak$. In the first case, let $v$~be a~classical valuation witnessing $\phi,\tau\not\models_\CPL\chi$. Let, further $X=\{p\mid v(p)=1\}$. It is clear that $\{X\}$ is an atom in $2^{2^\Vmbf}$. Thus, setting $\mu_v(\{X\})=1$ completely defines a~measure on~$2^{2^\Vmbf}$ Now, let $\Mfrak_v=\langle2^\Vmbf,\mu_v\rangle$. It follows that $\mu_v(\|\phi\wedge\tau\|_{\Mfrak_v})=1$ and $\mu_v(\|\chi\|_{\Mfrak_v})=0$. Thus, condition~1 is not satisfied. In the second case, observe that if $\mu(\|\phi\wedge\tau\|_\Mfrak)=0$ in all probabilitstic models $\Mfrak=\langle2^\Vmbf,\mu\rangle$ s.t.\ $\mu$ is coherent with~$\pfrak$, then $\Probfrak_\mu(\chi\mid\phi\wedge\tau)$ is undefined in all such models. This is because $\Probfrak_\mu(\chi\mid\phi\wedge\tau)=\dfrac{\mu(\|\phi\wedge\tau\wedge\chi\|_\Mfrak)}{\mu(\|\phi\wedge\tau\|_\Mfrak)}$. Hence, condition~2 is not satisfied. The result now follows.
\end{proof}
\PrAPtoFPLuk*
\begin{proof}
We begin with Item~1. Assume that $\tau$ is a~solution to~$\Pmbb_\pfrak$. By Proposition~\ref{prop:PrAPsolutionsconditionalprobability}, we have that $\mu(\|\phi\wedge\tau\|_\Mfrak)\leq\mu(\|\chi\|_\Mfrak)$, whence, $\Imc_\Mfrak(\Prob(\phi\wedge\tau))\leq\Imc_\Mfrak(\Prob(\chi))$ in all probabilistic models for~$\Vmbf$. Hence, $\Prob(\phi\wedge\tau)\rightarrow\Prob(\chi)$ is $\FP$-valid, as required. Furthermore, as $\tau$ is a~solution, there is a~probabilistic model $\Mfrak'=\langle2^\Vmbf,\mu\rangle$ s.t.\ $\mu$~is coherent with~$\pfrak$ and $\mu(\|\phi\wedge\tau\|_\Mfrak)>0$. By Definitions~\ref{def:FPLuksemantics} and~\ref{def:probabilisticAPFPLukcounterpart}, we have that $\Imc_{\Mfrak'}(\xi)=1$ for all $\xi\in\Xi_\pfrak$ and $\Imc_{\Mfrak'}(\Prob(\phi\wedge\tau))>0$. Thus, $\Xi_\pfrak,\neg\triangle\neg\Prob(\phi\wedge\tau)\not\models_\FP\bot$, as required.

For the converse direction, assume that $\tau$ is not a~solution to~$\Pmbb_\pfrak$. Hence, either (i)~$\phi,\tau\not\models_\CPL\chi$ or (ii)~$\mu(\|\phi\wedge\tau\|_\Mfrak)=0$ in all probabilitstic models $\Mfrak=\langle2^\Vmbf,\mu\rangle$ s.t.\ $\mu$ is coherent with~$\pfrak$. In the first case, by the same reasoning as in Proposition~\ref{prop:PrAPsolutionsconditionalprobability}, it follows that there is some probabilistic model $\Mfrak=\langle2^\Vmbf,\mu\rangle$ s.t.\ $\mu(\|\phi\wedge\tau\|_\Mfrak)>\mu(\|\chi\|_\Mfrak)$. Thus, $\Prob(\phi\wedge\tau)\rightarrow\Prob(\chi)$ is not $\FP$-valid. In the second case, using Definitions~\ref{def:FPLuksemantics} and~\ref{def:probabilisticAPFPLukcounterpart}, we obtain that $\Xi_\pfrak,\neg\triangle\neg\Prob(\phi\wedge\tau)\models_\FP\bot$.

Consider now Item~2. Assume first that $\tau$ is a~preferred solution to~$\Pmbb_\pfrak$ and $\sigma$ is another solution to~$\Pmbb_\pfrak$. Since $\tau$ is a~solution to $\Pmbb_\pfrak$, it follows that the conditions from Item~1 are satisfied. It remains to see that $\Xi_\pfrak,\Prob(\sigma)\rightarrow\Prob(\tau)\not\models_\FP\bot$. Now, recall from Definition~\ref{def:PrAPsolution} that if~$\tau$ is a~preferred solution to~$\Pmbb_\pfrak$ and $\sigma$ is some other solution to~$\Pmbb_\pfrak$, then there must be a~probabilistic model $\Mfrak'=\langle2^\Vmbf,\mu'\rangle$ s.t.\ $\mu$~is coherent with~$\pfrak$ and $\mu(\|\tau\|_{\Mfrak'})\geq\mu(\|\sigma\|_{\Mfrak'})$. Again, by Definitions~\ref{def:FPLuksemantics} and~\ref{def:probabilisticAPFPLukcounterpart}, it follows that $\Xi_\pfrak,\Prob(\sigma)\rightarrow\Prob(\tau)\not\models_\FP\bot$, as required.

Conversely, let $\tau$ be not a~preferred solution. If $\tau$ is not a~solution at all, we can reuse the reasoning from Item~1 and see that one of the conditions in Item~1 is falsified. If $\tau$ is a~solution, but not a~preferred one, then there must be some solution $\sigma$ s.t.\ $\mu(\|\sigma\|_\Mfrak)>\mu(\|\tau\|_\Mfrak)$ in every probabilistic model $\Mfrak=\langle2^\Vmbf,\mu\rangle$ s.t.\ $\mu$~is coherent with~$\pfrak$. Again, by Definitions~\ref{def:FPLuksemantics} and~\ref{def:probabilisticAPFPLukcounterpart}, it follows that $\Xi_\pfrak,\Prob(\sigma)\rightarrow\Prob(\tau)\models_\FP\bot$.
\end{proof}
\PrAPcomplexity*
\begin{proof}
We begin with hardness for Items~1 and~3. Recall that classical abduction problems are a~particular case of probabilistic abduction problems. As is well-known from~\cite{EiterGottlob1995}, solution recognition for classical abduction problems is $\DP$-complete and solution existence is $\Sigma^\Pmsf_2$-complete. It follows that solution recognition for PrAPs is $\DP$-hard and solution existence is $\DP$-complete.

Let us now proceed to $\DP$-membership (Item~1). Let $\Pmbb_\pfrak=\langle\{\phi\},\chi,\Hmsf,\Embb,\pfrak\rangle$ By Theorem~\ref{theorem:PrAPtoFPLuk}, an $\LCPL$-term $\tau$ composed of literals from~$\Hmsf$ is a~solution to $\Pmbb_\pfrak$ iff (i)~$\Xi_\pfrak,\neg\triangle\neg\Prob(\phi\wedge\tau)\not\models_\FP\bot$ and (ii)~$\FP\models\Prob(\phi\wedge\tau)\rightarrow\Prob(\chi)$ (recall Definition~\ref{def:probabilisticAPFPLukcounterpart} for~$\Xi_\pfrak$). By Proposition~\ref{prop:FPLukNPcoNP}, it follows that verifying~(i) is in~$\np$ and verifying~(ii) is in~$\conp$. Since these two checks are independent, we have the desired membership result.

Consider now Item~2. For membership, we will provide an $\Sigma^\Pmsf_2$-procedure that resolves the complementary problem: given an $\LCPL$-term $\tau$ and a~PrAP $\Pmbb_\pfrak=\langle\{\phi\},\chi,\Hmsf,\Embb,\pfrak\rangle$, decide whether $\tau$ is \emph{not} a~preferred solution. By Theorem~\ref{theorem:PrAPtoFPLuk}, $\tau$ is \emph{not} a~preferred solution to~$\Pmbb_\pfrak$ iff either (1)~it is not a~solution to~$\Pmbb_\pfrak$ at all or (2)~there is some other solution $\sigma$ to~$\Pmbb_\pfrak$ s.t.\ $\Xi_\pfrak,\Prob(\sigma)\rightarrow\Prob(\tau)\models_\FP\bot$. By Item~1, it takes us $\DP$ time to verify whether (1)~holds. If \emph{it does not} (i.e., if $\tau$ is a~solution to~$\Pmbb_\pfrak$), we verify~(2) as follows. We guess an $\LCPL$-term $\sigma$ composed from literals in~$\Hmsf$ and then check in $\conp$ time that $\Xi_\pfrak,\Prob(\sigma)\rightarrow\Prob(\tau)\models_\FP\bot$. If this check succeeds, it follows that $\tau$ was not a~preferred solution to~$\Pmbb_\pfrak$.

Finally, the $\Sigma^\Pmsf_2$-membership of solution existence for PrAPs follows immediately from Item~1. Indeed, it suffices to guess an $\LCPL$-term composed from hypotheses and then verify in $\DP$ time that it is indeed a~solution to~$\Pmbb_\pfrak$. The result now follows.
\end{proof}
\conprecognitionPrAP*
\begin{proof}
Let us set $\Vmbf=\Var[\Pmbb_\pfrak]$ to simplify the notation. We observe briefly that $\pfrak$ is indeed a~probability distribution on $2^\Vmbf$. Note that all formulas in $\Embb$ correspond to (a~subset of) atoms of Boolean algebra $2^{2^\Vmbf}$ and that they are assigned with values that sum up to~$1$. Furthermore, there is only one probabilistic model $\Mfrak=\langle2^\Vmbf,\mu_\pfrak\rangle$ s.t.\ $\mu_\pfrak$ is coherent with~$\pfrak$, namely, the model where $\mu_\pfrak$ is the probability measure induced by the probability distribution~$\pfrak$.

We begin with $\conp$-hardness. To show it, we construct a~polynomial-time reduction from $\CPL$-validity. Namely, let $\chi\in\LCPL$, $p\notin\Var(\chi)$, $\Hmsf=\{p\}$, $\Embb=\{p\wedge\bigwedge_{q\in\Var(\chi)}q\}$, and $\pfrak(p\wedge\bigwedge_{q\in\Var(\chi)}q)=1$. We show that $p$~is a~solution to $\Pmbb_\pfrak=\langle\{p\vee\neg p\},\chi,\Hmsf,\Embb,\pfrak\rangle$ iff $\chi$~is $\CPL$-valid. Assume that $\chi$ is $\CPL$-valid. It follows that $p,p\vee\neg p\models_\CPL\chi$. Now, one can see that $\mu(\|p\wedge(p\vee\neg p)\|_\Mfrak)=1$, whence $p$ is indeed a~solution to~$\Pmbb_\pfrak$. Conversely, assume that $\chi$ is not $\CPL$-valid. Then $p,p\vee\neg p\not\models_\CPL\chi$, i.e., $p$~is not a~solution to~$\Pmbb_\pfrak$.

Now, for $\conp$-membership, we proceed as follows. Let $\tau$ be an $\LCPL$-term and $\Pmbb_\pfrak=\langle\{\phi\},\chi,\Hmsf,\Embb,\pfrak\rangle$. It is clear that it takes $\conp$ time to verify $\phi,\tau\models_\CPL\chi$. Now, we need to establish whether $\mu_\pfrak(\|\phi\wedge\tau\|_\Mfrak)>0$. We check whether $\pi\models_\CPL\phi\wedge\tau$ for each $\pi\in\Embb$ s.t.\ $\pfrak(\pi)>0$. As $\Embb$ contains only $\LCPL$-terms over \emph{all} variables in~$\Vmbf$, it follows that each $\pi$~determines one valuation of variables in~$\Vmbf$. Thus, it takes polynomial time w.r.t.\ the size of the representation of~$\pfrak$ to check whether there is some $\pi\in\Embb$ s.t.\ $\pfrak(\pi)>0$ and $\pi\models_\CPL\phi\wedge\tau$. If this check succeeds and if $\phi,\tau\models_\CPL\chi$ holds, then $\tau$~is a~solution to~$\Pmbb_\pfrak$; otherwise it is not. $\conp$-membership now follows.
\end{proof}
\end{document}